\numberwithin{equation}{section}
\numberwithin{figure}{section}
\tikzset{
on each segment/.style={
decorate,
decoration={
show path construction,
moveto code={},
lineto code={
\path [#1]
(\tikzinputsegmentfirst) -- (\tikzinputsegmentlast);
},
curveto code={
\path [#1] (\tikzinputsegmentfirst)
.. controls
(\tikzinputsegmentsupporta) and (\tikzinputsegmentsupportb)
..
(\tikzinputsegmentlast);
},
closepath code={
\path [#1]
(\tikzinputsegmentfirst) -- (\tikzinputsegmentlast);
},
},
},
% style to add an arrow in the middle of a path
mid arrow/.style={postaction={decorate,decoration={markings,mark=at position .7 with {\arrow[#1]{stealth}}}}},
}
\def\oversortoftilde#1{\mathop{\vbox{\m@th\ialign{##\crcr\noalign{\kern3\p@}%
				\sortoftildefill\crcr\noalign{\kern3\p@\nointerlineskip}%
				$\hfil\displaystyle{#1}\hfil$\crcr}}}\limits}
\def\sortoftildefill{$\m@th \setbox\z@\hbox{$-$}%
	\braceld\leaders\vrule \@height\ht\z@ \@depth\z@\hfill\braceru$}
\theoremstyle{plain}
\newtheorem*{thm*}{Theorem}
\newtheorem{thm}{Theorem}[section]
\newtheorem{alg}[thm]{Algorithm}
\theoremstyle{definition}
\newtheorem*{defn*}{Definition}
\newcommand{\calo}{\mathcal{O}}
\newcommand{\calh}{\mathcal{H}}
\begin{document}

\begin{titlepage}
% Report number
\vspace*{-3cm} 
\begin{flushright}
{\tt CALT-TH-2020-016}\\
\end{flushright}
\begin{center}
\vspace{2.5cm}
{\LARGE\bfseries The infinite-dimensional HaPPY code:\\ }
{\Large\bfseries entanglement wedge reconstruction and dynamics \\  }
\vspace{2cm}
{\large
Elliott Gesteau$^{1,2}$ and Monica Jinwoo Kang$^3$\\}
\vspace{.6cm}
{ $^1$ Département de Mathématiques et Applications, Ecole Normale Supérieure}\par\vspace{-.3cm}
{Paris, 75005, France}\par
{ $^2$ Perimeter Institute for Theoretical Physics}\par\vspace{-.3cm}
{Waterloo, Ontario N2L 2Y5, Canada}\par
\vspace{.2cm}
{ $^3$ Walter Burke Institute for Theoretical Physics, California Institute of Technology}\par\vspace{-.3cm}
{  Pasadena, CA 91125, U.S.A.}\par
\vspace{.2cm}
\vspace{.6cm}

\scalebox{.95}{\tt  egesteau@perimeterinstitute.ca, monica@caltech.edu}\par
\vspace{2cm}
{\bf{Abstract}}\\
\end{center}
{We construct an infinite-dimensional analog of the HaPPY code as a growing series of stabilizer codes defined respective to their Hilbert spaces. The Hilbert spaces are related by isometric maps, which we define explicitly. We construct a Hamiltonian that is compatible with the infinite-dimensional HaPPY code and further study the stabilizer of our code, which has an inherent fractal structure. We use this result to study the dynamics of the code and map a nontrivial bulk Hamiltonian to the boundary. We find that the image of the mapping is scale invariant, but does not create any long-range entanglement in the boundary, therefore failing to reproduce the features of a CFT. This result shows the limits of the HaPPY code as a model of the AdS/CFT correspondence, but also hints that the relevance of quantum error correction in quantum gravity may not be limited to the CFT context.}
\\
\vfill 
%{ \today\   at \ \currenttime\par}
\end{titlepage}

\tableofcontents
\newpage
\section{Introduction}
One of the most well understood examples of holography is the AdS/CFT correspondence \cite{Maldacena:1997re,Gubser:1998bc,Witten:1998qj}.
In the past decade, much progress has been made in the understanding of bulk reconstruction in AdS/CFT via quantum error correction, the first attempt being made in \cite{Almheiri:2014lwa}. In particular, the Ryu--Takayanagi formula \cite{RT2006}, which relates the entanglement entropy of the boundary conformal field theory to the area of an extremal surface in the bulk spacetime, plays a central role for understanding how spacetime geometry emerges from boundary entanglement. In fact, the Ryu--Takayanagi formula has later been proven to be equivalent to the conservation of bulk and boundary relative entropies, and to the reconstruction of the \textit{entanglement wedge} of a boundary region in the context of finite-dimensional Hilbert spaces \cite{Jafferis:2015del,Harlow:2016vwg,DongHarlowWall,Faulkner:2013ana}. These results use the formalism of operator algebra quantum error correction, in which the space of quantum field states of a fixed background geometry is seen as a \textit{code subspace} of the whole Hilbert space of the underlying quantum gravity theory with a semi-classical bulk.\footnote{QFT on a fixed bulk geometry can only be defined in the semiclassical limit of the theory.}

Unlike a causal wedge, an entanglement wedge can reach very deep within the bulk. In particular, it can contain the interior of an AdS black hole, which makes it an ideal candidate to solve important quantum gravity questions such as the black hole information paradox, as proposed first in \cite{Almheiri:2019psf,Penington:2019npb}. In order to understand entanglement wedge reconstruction, many toy models have been considered. A particularly successful class of toy models are the ones based on tensor networks  \cite{Pastawski:2015qua,Kang:2019dfi,Heydeman:2018qty}, including the HaPPY code. The HaPPY code is without a doubt the most celebrated holographic quantum error correcting code based on tensor networks. Its stunning successes include a redundant encoding of the bulk degrees of freedom on the boundary, as well as a well-defined Ryu--Takayanagi formula \cite{Pastawski:2015qua}. However, one of the main limitations of the HaPPY code is that it is a finite code, whereas one expects the boundary theory in AdS/CFT to have an infinite-dimensional Hilbert space.

In this paper, we construct an infinite-dimensional analog of the HaPPY code in a mathematical manner, and study some of its dynamical properties. We investigate entropic considerations separately in our companion paper \cite{MonicaElliott2}. In terms of the dynamics, it has often been claimed that the HaPPY code pushes a field theory in the bulk to a CFT on the boundary. As explicit operator pushing is tedious, it is quite hard to find precise statements which can probe this claim.\footnote{By operator pushing, we mean operators of the bulk to be mapped to the operators on the boundary. This is different from state pushing, which maps the state of the bulk to the state on the boundary. This difference is the key that complicates the analysis of the HaPPY code.} The best approach which we are aware of is the one of Majorana dimers, however, it only maps bulk states without interactions to the boundary \cite{Jahn:2019nmz}. The main goal of this paper is to question whether a physical theory in the bulk, which exhibits nontrivial correlation functions, will be mapped to a boundary CFT-like theory. Our analysis therefore culminates in the study of the bulk-to-boundary mapping of a nontrivial bulk theory.

We start by giving a precise mathematical meaning to the idea of ``pushing the boundary to infinity." Our setting involves direct systems of Hilbert spaces (as in \cite{Kang:2019dfi}), and we show that in the direct limit, an isometry between the infinite-dimensional code and physical spaces can be rigorously defined. We build an algorithm for a recursive map that reconstructs the HaPPY code as a stabilizer code at every level by incorporating Bell pairs in the bulk.

Once the infinite-dimensional setup is constructed, we move on to studying the structure of the stabilizer of the infinite-dimensional HaPPY code. We find an explicit family of fractal stabilizer generators, which interestingly echoes the concept of \textit{uberholography} introduced by Pastawski and Preskill \cite{Pastawski:2016qrs}. This suggests that any bulk local operator can be encoded in a fractal region of the boundary.

We then move on to our main goal of constructing a nontrivial bulk theory and mapping it to the boundary. In order to do so, we construct a bulk Hamiltonian from trapeze-shaped building blocks, and show that these building blocks create enough bulk entaglement so that the theory in the bulk has a physical meaning; in particular, it exhibits a phase transition. However, we show that the bulk-to-boundary map disentangles our Hamiltonian vastly, to a point where the boundary image will indeed be scale-invariant, but fails to create any global long-range entanglement. Therefore, it contradicts the claim that the HaPPY code should simulate a CFT, in which we expect an algebraic decay of the correlation functions.

Our result shows that the HaPPY code is too much of a disentangler to map the correlations all the way to the infinite boundary, at least for our choice of Hamiltonian. Since we have demonstrated that for certain choices of Hamiltonians the HaPPY code does not provide the expected holographic toy model for the AdS/CFT, it raises the question of which Hamiltonians, if any, would lead to a proper holographic model. On the other hand, we have shown that within the framework of quantum error correction, it is possible to encode a physically nontrivial theory in the code subspace of a very de-correlated boundary Hamiltonian. This Hamiltonian couples only finite numbers of images of bulk interactions together except in some very specific cases, and may nevertheless still need to be regulated in some cases.

The rest of the paper is organized as follows. In Section \ref{sec:MERA}, we explain the connection between tensor networks and renormalization by introducing the multi-scale entanglement renormalization ansatz (MERA) and showing its similarity with the HaPPY code. This allows us to relate isometries and the renormalization group flow.
We then comment on the structure of the HaPPY code in Section \ref{sec:pentagon}. We focus on the geometry of the pentagon tiling and find recursive relations between the numbers of bulk and boundary qubits. In Section \ref{sec:isometryH}, we construct bulk and boundary Hilbert spaces using direct limit techniques, and construct the isometries between them by using Bell pairs. In Section \ref{sec:BoundaryStabilizerGen}, we construct stabilizer generators for each level of the code, and extensively study the fractal patterns that arise from the successive pushes of these operators. In Section \ref{sec:trapezeHamiltonian}, we construct the trapeze Hamiltonian, which yields a bulk theory mimicking local quantum field theory in curved spacetime. We study this theory through a mean-field approximation in the bulk and discover a phase transition, verifying a nontrivial behavior of the bulk quantum field theory. In Section \ref{sec:bulktoboundarytrapeze}, we construct the \textit{uberholographic} bulk-to-boundary map for this trapeze Hamiltonian and study the resulting boundary correlation functions in the infinite-dimensional limit. In particular, we show that these correlation functions do not match with the ones of a CFT. We prove a technical result on these correlation functions in Appendix \ref{sec:trapezeEW}. This result will be at the center of our analysis of entropic properties in the new setup of our companion paper \cite{MonicaElliott2}.  We conclude in Section \ref{sec:conclusion} by emphasizing the shortcomings of the infinite-dimensional analog of the HaPPY code as a toy model and explain how the same issues call for a new formalism to understand its entropic properties, which will be discussed in \cite{MonicaElliott2}.

\section{Tensor networks, MERA, and renormalization}
\label{sec:MERA}

In this section, we will outline how renormalization, as required in a holographic picture, is encoded in the tensor network. The simplest renormalization scheme is block-spin renormalization, which is implemented by tree tensor networks. For a one-dimensional lattice, block-spin renormalization consists in grouping neighboring spins into blocks and projecting (or truncating) the Hilbert space of each block onto the Hilbert space of a single spin \cite{Kadanoff}. The more involved multi-scale entanglement renormalization ansatz (MERA) does not use a tree tensor network. Before applying the projection to a block of spins, the MERA tensor network applies unitary transformations, called disentanglers, that reduce the entanglement between neighboring blocks \cite{Vidal:2007hda,Vidal:2008zz}. This aspect is well depicted in Figure \ref{fig:merasetup}, which provides the schematic diagram of the MERA tensor network.

The MERA tensor network given in Figure \ref{fig:merasetup} depicts an isometry $u$ that maps states in a long distance theory to states in a short distance theory. Note that $uu^\dagger$ is a projection. This tensor network also defines a renormalization group flow for operators. If $\calo$ is an operator that acts on the Hilbert space of the short-distance theory, it can be mapped to an operator in the long-distance theory as $\calo \rightarrow  u^\dagger \calo u$. \emph{Primary operators} are local operators (that is, operators that act on a single spin or a small number of neighboring spins) that are mapped to themselves in this way, up to a constant multiplicative factor.

The disentanglers are important because they reduce the rank of the reduced density matrix of a single block. In particular, note that the expectation value of an operator $\calo$ in a state $\rho$ only depends on the matrix elements of $\calo$ in the nonzero eigenspace of $\rho$ (which is the orthocomplement of $\ker \rho$), whose dimensionality is the rank of $\rho$. Ideally, we want the rank of $\rho$ to not exceed the dimensionality of the Hilbert space of each spin in the long distance theory so that correlation functions can be computed with high accuracy in the long distance theory.

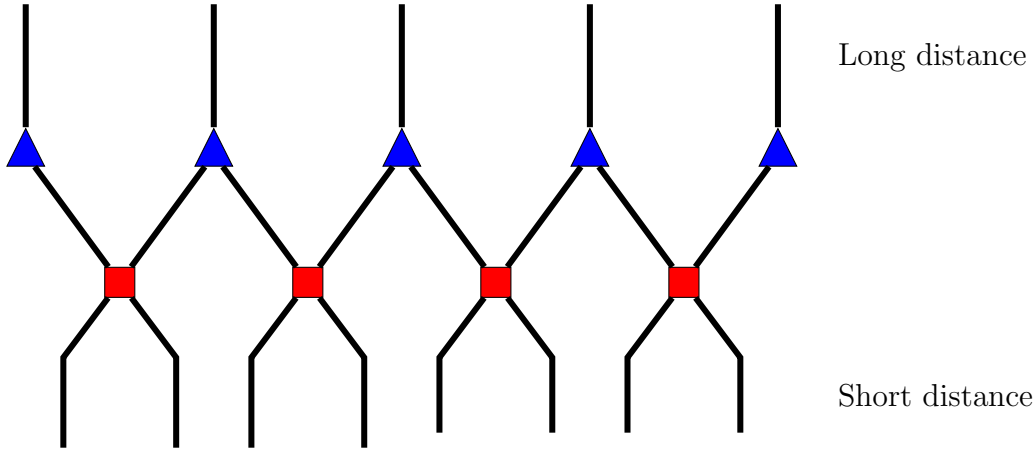
\begin{figure}[H]
\centering
\begin{tikzpicture}
\node[draw,fill=blue,isosceles triangle,isosceles triangle stretches,shape border rotate=90,minimum
width=0.5cm,minimum height=0.5cm,inner sep=0pt] (T1) at (-5,1.7) {};
\node[draw,fill=blue,isosceles triangle,isosceles triangle stretches,shape border rotate=90,minimum
width=0.5cm,minimum height=0.5cm,inner sep=0pt] (T2) at (-2.5,1.7) {};
\node[draw,fill=blue,isosceles triangle,isosceles triangle stretches,shape border rotate=90,minimum
width=0.5cm,minimum height=0.5cm,inner sep=0pt] (T3) at (0,1.7) {};
\node[draw,fill=blue,isosceles triangle,isosceles triangle stretches,shape border rotate=90,minimum
width=0.5cm,minimum height=0.5cm,inner sep=0pt] (T4) at (2.5,1.7) {};
\node[draw,fill=blue,isosceles triangle,isosceles triangle stretches,shape border rotate=90,minimum
width=0.5cm,minimum height=0.5cm,inner sep=0pt] (T5) at (5,1.7) {};
\node[draw,fill=red,rectangle,minimum width=0.4cm,minimum height=0.4cm,inner sep=0pt] (R1) at (-3.75,0) {};
\node[draw,fill=red,rectangle,minimum width=0.4cm,minimum height=0.4cm,inner sep=0pt] (R2) at (-1.25,0) {};
\node[draw,fill=red,rectangle,minimum width=0.4cm,minimum height=0.4cm,inner sep=0pt] (R3) at (1.25,0) {};
\node[draw,fill=red,rectangle,minimum width=0.4cm,minimum height=0.4cm,inner sep=0pt] (R4) at (3.75,0) {};
\node[draw=none,label={[label distance=0.1mm]0:Long distance}] (C1) at (5.5,3) {};
\draw[draw,line width=0.8mm] (T1)--(R1)--(T2)--(R2)--(T3)--(R3)--(T4)--(R4)--(T5);
\draw[draw,line width=0.8mm] (-4.5,-2.2)--(-4.5,-1)--(R1)--(-3,-1)--(-3,-2.2);
\draw[draw,line width=0.8mm] (-2,-2.2)--(-2,-1)--(R2)--(-.5,-1)--(-.5,-2.2);
\draw[draw,line width=0.8mm] (2,-2)--(2,-1)--(R3)--(.5,-1)--(.5,-2);
\draw[draw,line width=0.8mm] (4.5,-2)--(4.5,-1)--(R4)--(3,-1)--(3,-2);
\draw[draw,line width=0.8mm] (T1)--(-5,3.7);
\draw[draw,line width=0.8mm] (T2)--(-2.5,3.7);
\draw[draw,line width=0.8mm] (T3)--(0,3.7);
\draw[draw,line width=0.8mm] (T4)--(2.5,3.7);
\draw[draw,line width=0.8mm] (T5)--(5,3.7);
\node[draw=none,label={[label distance=0.1mm]0:Short distance}] (C2) at (5.5,-1.5) {};
\end{tikzpicture}
\caption{The tensor network of MERA. Each red square represents a disentangler. Each blue triangle represents an isometry from up to down. This tensor network defines an isometry from the Hilbert space of the spins on the top row (which correspond to a ``long distance'' description of a theory) to the Hilbert space of the spins on the bottom row (which correspond to a ``short distance'' description of the same theory).}
\label{fig:merasetup}
\end{figure}

The HaPPY code is similar to MERA, except that its tensor network contains bulk indices and is based on the geometry of the Poincare disk. In other words, the HaPPY code contains holographic data, which makes it more complex. The bulk indices are important for defining local bulk operators, although locality below the size of a single (pentagon) tile cannot be achieved. Just as MERA is organized in a hierarchy of levels that correspond to different renormalization scales, the HaPPY code is naturally organized into a sequence of levels (see Figure \ref{fig:level1tensornetwork}), which we interpret as renormalization scales in the boundary theory. Each level corresponds to a finite number of boundary qubits that comprise a Hilbert space called $\widetilde{\calh}_n$ for $n \in \mathbb{N}$. The Hilbert space of the boundary theory (of the infinite-dimensional tensor network) is defined as (the completion of) the disjoint union of all of the $\widetilde{\calh}_n$, subject to an equivalence relation that identifies states in different $\widetilde{\calh}_n$ that can be related by renormalization group flow. This physically justifies why we write
\begin{align}
\widetilde{\calh}_1 \rightarrow \widetilde{\calh}_2 \rightarrow \widetilde{\calh}_3 \rightarrow \cdots,
\end{align}
where each right arrow ($\rightarrow$) corresponds to the isometry that embeds the Hilbert space at one scale into the Hilbert space of the next smallest scale. These isometries should arise from the HaPPY code. They depend on the state of the bulk qubits. More precisely, the isometry from $\widetilde{\calh}_n$ to $\widetilde{\calh}_{n+1}$ depends on the state of the newly introduced bulk qubits in the larger tensor network corresponding to $\widetilde{\calh}_{n+1}$. These bulk qubits must be in a state which is uniquely determined by the isometric map, in order to have a well-defined renormalization group flow provided by the tensor network. Different bulk states lead to different renormalization group flows, which correspond to different boundary theories. This is similar to the notion that in AdS/CFT, different boundary conditions for bulk fields correspond to turning on different sources in the boundary CFT. We want to use the HaPPY code to better understand the origin of the extrapolate dictionary in AdS/CFT.\footnote{MERA has been used in the context of AdS/CFT first in \cite{Swingle:2009bg}.} This necessitates studying the dynamics of the infinite-dimensional analog of the HaPPY code in a holographic context.

%The correlators of bulk operators approach the correlators of boundary operators as the bulk operators are moved towards the boundary. For a single bulk operator $\phi$ that is dual to $\calo$ on the boundary, we write
%\[ \phi(r,x) \rightarrow r^\Delta \calo(x), \]
%where $r$ is the radial coordinate in the bulk and $\Delta$ is a scaling dimension. If we can achieve this expression in the continuum HaPPY code, then we will have confirmed the relationship between radial direction and renormalization group flow.

\section{The pentagon tiling structure of the HaPPY code}
\label{sec:pentagon}
In this section, we explain how to push the structure of the HaPPY code to an infinite boundary in a rigorous manner. The HaPPY code, invented in \cite{Pastawski:2015qua}, is based on a pentagonal tiling of the hyperbolic plane, where in each tile lies a perfect tensor, most often constructed out of a stabilizer code like the five qubit code, which is the one we will study in this paper. We chose a pentagonal tiling as it is the simplest holographic code leading to nontrivial quantum error correction. Each bulk tensor contains dangling legs, each corresponding to the code subspace of the network, and is mapped to the boundary.

\begin{figure}[H]
	\vspace{2mm}
	\centering
	\includegraphics[width=0.45\linewidth]{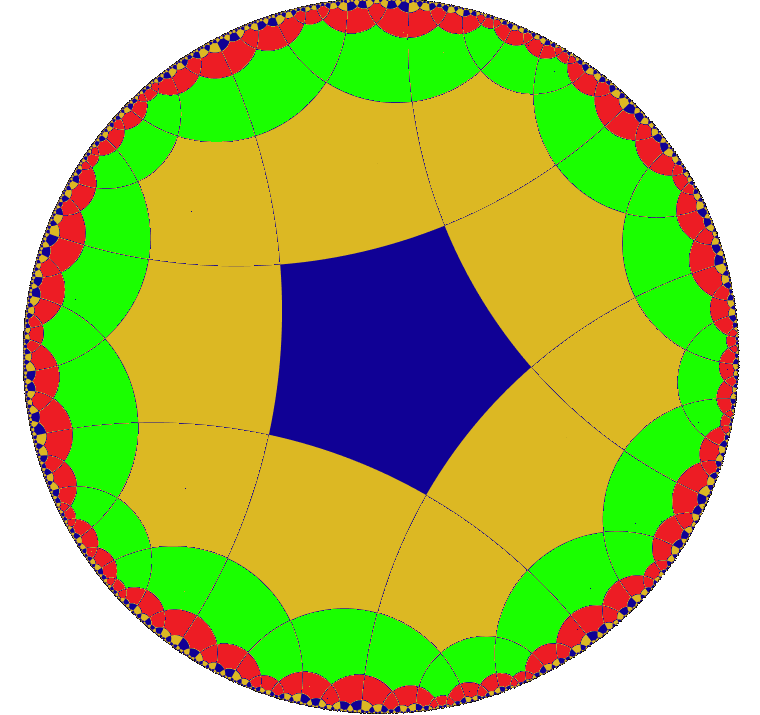}
	\caption{The same level tiles share the same color in this picture. Level 1 consists of the blue tile in the center, level 2 consists of the yellow tiles surrounding level 1, level 3 consists of the green tiles, level 4 consists of the red tiles, etc.}
	\label{fig:levels}
\end{figure}

\subsection{From the pentagon tiling to the tensor network}

The Poincare disk is separated into individual pentagonal tiles by geodesic lines. We can choose to label the center as the first pentagon tile. We call this Level 1; more precisely, level 1 consists simply of the center pentagon, as represented in blue in Figure \ref{fig:levels}. 

Then, the levels can be rigorously defined in a recursive manner. Let us define the notion of the \emph{level n}. Given the level 1 tile, the level 2 tiles are defined to be all the tiles that share any edges and/or vertices with the level 1 tile. Likewise, the level 3 tiles are defined to be all tiles (except for tiles that are already in levels 1 or 2) that share any edges and/or vertices with level 2 tiles.

With this definition of the levels in hand, we can now describe its usage for producing a state of finitely-many ($\widetilde{N}_n$) qubits. Let us start at level 1. We put a bulk qubit in the level 1 tile and then draw boundary legs through each edge of the level 1 tile. The result is presented on the left in Figure \ref{fig:level1tensornetwork}.

\begin{figure}[!h]
	\centering
	\vspace{1mm}
	\includegraphics[width=0.45\linewidth]{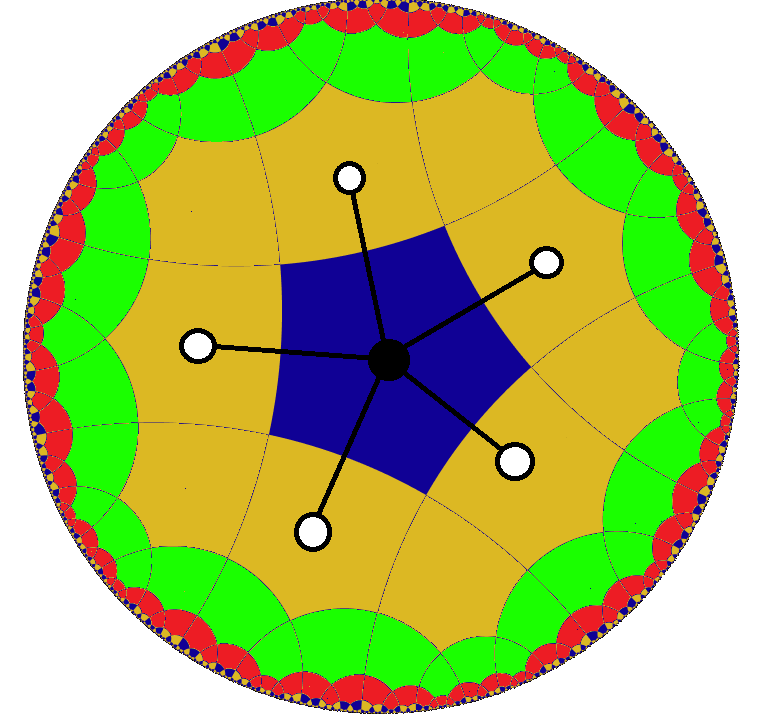}
	\includegraphics[width=0.45\linewidth]{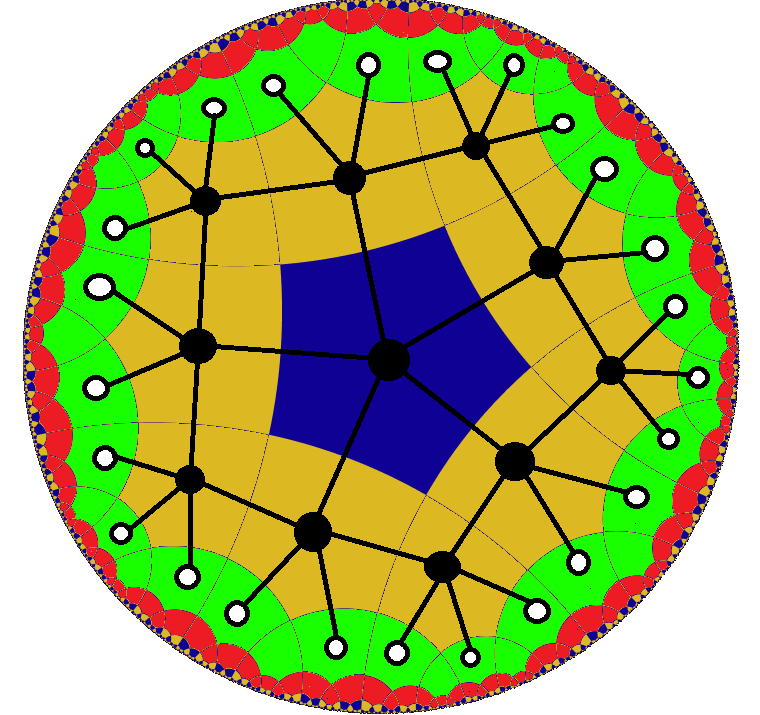}
	\vspace{1mm}
	\caption{{\bf Left}: The tensor network produced at level 1. There is one bulk qubit denoted in black and five boundary qubits denoted in white. In this case, $\widetilde{h}_1 = 5$. {\bf Right}: The level 2 tensor network. Here, $\widetilde{h}_2 = 25$ since there are 25 boundary (white) qubits.}
	\label{fig:level1tensornetwork}
\end{figure}

\begin{table}[H]
\arraycolsep=8pt\def\arraystretch{1.3}
\begin{center}
$
\begin{array}{|c|c||c|c|c|c|c|c|c|}
\hline 
\text{Levels} & n & 1 & 2 & 3 & 4 & 5 & 6 & \cdots \\ 
\hline
\hline
\text{\# Bulk qubits} & N_n & 1 & 11 & 51 & 201 & 761 & 2851 & \cdots \\ 
\hline
\text{\# Boundary qubits} & \widetilde{N}_n & 5 & 25 & 95 & 355 & 1325 & 4945 & \cdots \\ 
\hline 
\end{array} 
$
\end{center}
\caption{Numbers of bulk qubits (denoted as ${N}_n$) and boundary qubits (denoted as $\widetilde{N}_n$) at various levels $n$.}
\label{table:Nn}
\end{table}

The \emph{level 1 tensor network} is defined to be a map from one bulk qubit to five boundary qubits. We define $\widetilde{N}_n$ to be the number of boundary qubits of the level $n$ tensor network. Thus, $\widetilde{h}_1 = 5$. %To be concrete, we will put the bulk qubit in the state $\ket{\downarrow}$. 
Then, the level 1 tensor network maps the bulk state to a particular state of $\widetilde{h}_1 = 5$ boundary qubits, which we will call \emph{the level 1 boundary state.}

The \emph{level n tensor network} is constructed as follows. First, put a bulk qubit in every tile in levels $1,2,\cdots,n$. Then, draw a connecting leg through any edge that borders at least one tile with a bulk qubit. Of course, each edge borders two tiles. A leg that connects two bulk qubits defines a contraction of two tensor indices. A leg that connects a bulk qubit with a tile in level $n+1$ is a ``dangling leg'' and it defines a boundary qubit. For example, the level 2 tensor network is represented on the right in Figure \ref{fig:level1tensornetwork}.

The bulk (boundary) state of the tensor network is defined as the tensor product of the states of each qubit in the bulk (boundary). This defines a state of $\widetilde{N}_n$ boundary qubits and $N_n$ bulk qubits at every level $n$. We list $N_n$ and $\widetilde{N}_n$ at the first few levels in Table \ref{table:Nn}.

\subsection{Counting bulk and boundary qubits at each level}
\label{sec:AsymototesHn}

Starting at the level 1 tile, one can move to the ten level 2 tiles by moving through one of the five vertices of the level 1 tile or one of the five edges of the level 1 tile. After moving through an edge, one can move through one of the two edges or one vertex to get to the next level. After moving through a vertex, one can move through one of three edges or two vertices to get to the next level. See Figure \ref{fig:increasinglevels} as an example.

\begin{figure}[H]
\centering
\includegraphics[width=0.45\linewidth]{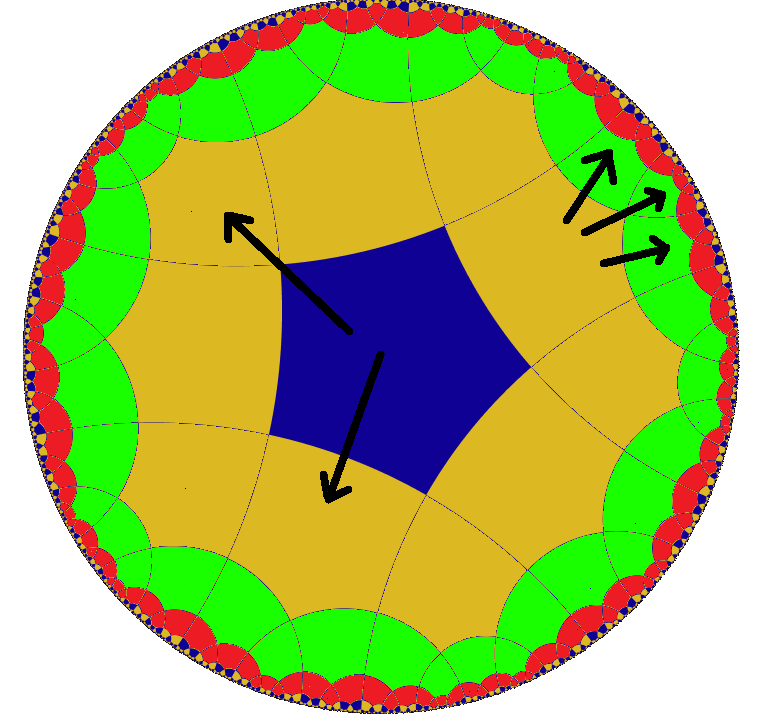}
\vspace{3mm}
\caption{We show how one can move through an edge or a vertex to move from one level to another.}
\label{fig:increasinglevels}
\end{figure}
We let $\left(\begin{array}{c}
e \\ 
v
\end{array} \right)$ refer to the number of edges $e$ and vertices $v$ at a given level. The level 1 tile has 5 edges and 5 vertices, so we represent it with $\left(\begin{array}{c}
5 \\ 
5
\end{array} \right)$. The number of edges and vertices at the $n$th level is obtained by acting on this vector $n$ times with the matrix
\[ \left(\begin{array}{cc}
2 & 3 \\ 
1 & 2
\end{array}\right),  \]
which is in $SL(2,\mathbb{Z})$.

So if $e_n$ and $v_n$ denote the number of edges and vertices at the $n$th level, then we have
\begin{align}
\left(\begin{array}{c}
e_n \\ 
v_n
\end{array} \right) =  \left(\begin{array}{cc}
2 & 3 \\ 
1 & 2
\end{array}\right)^{(n-1)} \left(\begin{array}{c}
5 \\ 
5
\end{array} \right)
\label{eq:edgesvertices}
\end{align}
The number of black (bulk) qubits at level $n$ is $e_{n-1} + v_{n-1}$ and the number of white (boundary) qubits at level $n$ is $e_n$. More precisely, by denoting $N_n$ as the number of bulk qubits at level $n$ and $\widetilde{N}_n$ as the number of boundary qubits at level $n$, 
\begin{align}
\begin{aligned}
N_n&=\sum_{k=0}^{n-1} \left( e_k+v_k \right),\\
\widetilde{N}_n &=e_n.
\label{eq:NNtilde}
\end{aligned}
\end{align}

We find recursive relations for the numbers of bulk and boundary qubits $N_n$ and $\widetilde{N}_n$, utilizing equation \eqref{eq:NNtilde}, which we list in Theorem \ref{thm:RecursiveRelation}.
 \medskip
\begin{thm}
Let $N_n$ be the number of bulk qubits at level $n$, and $\widetilde{N}_n$ be the number of boundary qubits at level $n$. Then the recursion relations of (the infinite-dimensional analog of) the HaPPY code are given by
\begin{align*}
\begin{cases}
& N_{n+1} = 6 N_n - 9 N_{n-1} + 2 N_{n-2} - 8 \\
& \widetilde{N}_{n+1} = 2 N_{n+1} - N_n + 4 .
\end{cases}
\end{align*}
\label{thm:RecursiveRelation}
\end{thm}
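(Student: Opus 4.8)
The plan is to reduce the entire statement to the linear algebra of the $SL(2,\mathbb{Z})$ transfer matrix appearing in \eqref{eq:edgesvertices} together with the telescoping structure of the cumulative sum \eqref{eq:NNtilde}. That matrix has characteristic polynomial $\lambda^2 - 4\lambda + 1$, so by Cayley--Hamilton it satisfies $M^2 = 4M - I$; applying this to the vector $(e_{n-1},v_{n-1})^{T}$ shows that each scalar sequence $e_n$ and $v_n$ obeys the second-order recursion $x_{n+1} = 4x_n - x_{n-1}$. I will also use the two rows of $M$ individually, namely $e_n = 2e_{n-1} + 3v_{n-1}$ and $v_n = e_{n-1} + 2v_{n-1}$, together with the two facts recorded in \eqref{eq:NNtilde}: that $\widetilde{N}_n = e_n$, and that (after telescoping the sum) $N_n - N_{n-1} = e_{n-1} + v_{n-1}$ for $n \geq 2$, with $N_1 = 1$.

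The heart of the argument is the clean identity $N_n = v_n - 4$, which I would establish by induction on $n$. The base case is immediate since $N_1 = 1$ and $v_1 = 5$. For the inductive step, the second row of $M$ gives $v_n - v_{n-1} = (e_{n-1} + 2v_{n-1}) - v_{n-1} = e_{n-1} + v_{n-1}$, which is precisely the number $N_n - N_{n-1}$ of bulk qubits added at level $n$. Hence $N_n = N_{n-1} + (e_{n-1}+v_{n-1}) = (v_{n-1}-4) + (v_n - v_{n-1}) = v_n - 4$, closing the induction.

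Both recursions then follow by substitution. For the bulk relation I substitute $v_n = N_n + 4$ into $v_{n+1} = 4v_n - v_{n-1}$, obtaining $N_{n+1} = 4N_n - N_{n-1} + 8$; the stated form is an equivalent rewriting, obtained by reading the same identity one step earlier as $N_n - 4N_{n-1} + N_{n-2} - 8 = 0$ and adding twice this vanishing quantity, which turns $4N_n - N_{n-1} + 8$ into $6N_n - 9N_{n-1} + 2N_{n-2} - 8$. For the boundary relation I combine $\widetilde{N}_{n+1} = e_{n+1} = 2e_n + 3v_n$ with $v_n = N_n + 4$ and with $e_n = N_{n+1} - 2N_n - 4$ (the latter coming from $N_{n+1} - N_n = e_n + v_n$), which gives $\widetilde{N}_{n+1} = 2N_{n+1} - N_n + 4$ directly.

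The genuinely nontrivial inputs are the geometric counting results already in place before the statement, equations \eqref{eq:edgesvertices} and \eqref{eq:NNtilde}: that $M$ correctly transports the edge/vertex data between consecutive levels, and that the bulk qubits added at level $n$ number $e_{n-1}+v_{n-1}$. Granting these, the remaining work is purely algebraic, and the only real subtleties are bookkeeping --- fixing the base cases and handling the summation index in \eqref{eq:NNtilde} carefully (in particular the $k=0$ term, where $e_0+v_0 = 0$, so that $N_1 = 1$ must be imposed separately). The conceptual step that trivializes everything is recognizing the identity $N_n = v_n - 4$; without it one is pushed toward the fourth-order homogeneous recursion coming from the characteristic polynomial $(\lambda-1)^2(\lambda^2 - 4\lambda + 1)$ of the partial-sum-plus-constant sequence, and the stated third-order inhomogeneous relation is then seen to be merely one convenient representative among several equivalent forms, whose mutual consistency is readily checked against $N_{n+1} = 4N_n - N_{n-1} + 8$.
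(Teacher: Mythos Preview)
Your proof is correct and takes a genuinely different route from the paper's. The paper proceeds by direct verification: it writes out the differenced versions of both recursions, expands each side in terms of the $e_k,v_k$ using the transfer relations $e_n=2e_{n-1}+3v_{n-1}$ and $v_n=e_{n-1}+2v_{n-1}$, and checks equality line by line, followed by checking the initial conditions. In contrast, you isolate the structural identity $N_n=v_n-4$ (proved by a one-line induction from $v_n-v_{n-1}=e_{n-1}+v_{n-1}=N_n-N_{n-1}$), which immediately gives the second-order relation $N_{n+1}=4N_n-N_{n-1}+8$ via Cayley--Hamilton for $M$; you then observe that the paper's third-order form is obtained by adding $2(N_n-4N_{n-1}+N_{n-2}-8)=0$. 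Your approach is more conceptual and actually \emph{explains} the stated recursion as a non-minimal representative of a second-order affine recurrence, whereas the paper's computation verifies it without revealing this. The boundary relation falls out identically in both approaches once one has $e_n$ and $v_n$ expressed in terms of $N_n,N_{n+1}$. One small remark: your parenthetical that $e_0+v_0=0$ is not quite consistent with the paper's later convention $(e_0,v_0)=(0,1)$, but since you correctly anchor the induction at $N_1=1$ this has no effect on the argument.
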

\begin{proof}
The edges and vertices at an arbitrary level $n$ satisfy the relation in equation \eqref{eq:edgesvertices}. It follows that that the recursive relations of $e_n$ and $v_n$ are
\begin{align*}
\begin{cases}
& e_n=2e_{n-1}+3v_{n-1},\quad e_1=5, \\
& v_n=e_{n-1}+2v_{n-1},\quad v_1=5.
\end{cases}
\end{align*}
Then, the number of boundary qubits ($\widetilde{N}_n$) and the number of bulk qubits ($N_n$) are given by
\begin{align*}
\widetilde{N}_n=2e_{n-1}+3v_{n-1}=e_n,\quad N_{n+1}-N_n=e_n+v_n.
\end{align*}
By considering the recursive relations we want to prove, we consider the differences between $n=k$ and $n=k-1$:
\begin{align*}
\begin{cases}
& N_{n+1}-N_n = 6 (N_n-N_{n-1}) - 9 (N_{n-1}-N_{n-2}) + 2 (N_{n-2}-N_{n-3}) ,\\
& \widetilde{N}_{n+1}-\widetilde{N}_n = 2 (N_{n+1}-N_{n}) - (N_n-N_{n-1}) .
\end{cases}
\end{align*}
We show that our $N_n$ and $\widetilde{N}_n$ satisfy these recursive relations:
\begin{align*}
 6 (N_n-N_{n-1}) -9 (N_{n-1}-N_{n-2}) & + 2 (N_{n-2}-N_{n-3}) \\
& = 6 (e_{n-1}+v_{n-1})-9 (e_{n-2}+v_{n-2})+2(e_{n-3}+v_{n-3}) \\
& =(2e_{n-1}+3v_{n-1})+(e_{n-1}+2v_{n-1})+3e_{n-1}+v_{n-1}\\
&\qquad\ -3(2e_{n-2}+3v_{n-2})-3e_{n-2}+(2e_{n-3}+3v_{n-3})-v_{n-3} \\
& =e_n+v_n +(e_{n-2}+2v_{n-2})-2e_{n-2}-v_{n-3} \\
& =e_n+v_n+2(e_{n-3}+2v_{n-3})-(2e_{n-3}+3v_{n-3})-v_{n-3}
	=e_n+v_n \\
& =N_{n+1}-N_n,
\end{align*}
\begin{align*}
2 (\widetilde{N}_{n+1}-\widetilde{N}_{n}) - (\widetilde{N}_n-\widetilde{N}_{n-1}) &= 2 (e_n+v_n) - (e_{n-1}+v_{n-1}) = (2e_n+3v_n)-v_n-(e_{n-1}+v_{n-1})\\
&=e_{n+1}-(e_{n-1}+2v_{n-1}+e_{n-1}+v_{n-1})=e_{n+1}-e_n\\
&=\widetilde{N}_{n+1}-\widetilde{N}_n.
\end{align*}
Furthermore, we show that for the initial condition of the series, the following relations hold for the HaPPY code:
\begin{align*}
& N_3 = 6 N_3 - 9 N_2 + 2 N_1 - 8 \\
& \widetilde{N}_2 = 2 N_2 - N_1 + 4 ,\quad\widetilde{N}_3 = 2 N_3 - N_2 + 4 ,\quad\widetilde{N}_4 = 2 N_4 - N_4 + 4 .
\end{align*}
Hence, we conclude that the recursive relation given by the theorem holds by induction.
\end{proof}

We can also find closed form formulas for $N_n$ and $\widetilde{N}_n$. In order to do so, we start with equation \eqref{eq:edgesvertices} with the initial value
\begin{align}
\left(\begin{array}{c}
e_0 \\ 
v_0
\end{array} \right) = \left(\begin{array}{c}
0 \\ 
1
\end{array} \right) .
\end{align}
The edges and vertices at level $n$ can be then computed in closed form formulas as
\begin{align}
e_n&=-\frac{5}{2}\left(\left(1+\sqrt{3}\right) \left(2-\sqrt{3}\right)^n+\left(1-\sqrt{3}\right) \left(2+\sqrt{3}\right)^n\right), \\
v_n&=\frac{5}{2\sqrt{3}} \left(\left(1+\sqrt{3}\right) \left(2-\sqrt{3}\right)^n-\left(1-\sqrt{3}\right) \left(2+\sqrt{3}\right)^n\right).
\end{align}
Utilizing these, we find the number of bulk qubits $N_n$ and the number of boundary qubits $\widetilde{N}_n$ at level $n$ to be
\begin{align}
N_n &=-5+\frac{5}{\sqrt{3}+3}\left(\left(2-\sqrt{3}\right)^{n-1}+\left(2+\sqrt{3}\right)^n\right), \label{eq:Nn} \\
\widetilde{N}_n&=\frac{5}{1-\sqrt{3}}\left(\left(2-\sqrt{3}\right)^n-\left(2+\sqrt{3}\right)^{n-1}\right). \label{eq:Nntilde}
\end{align}
We find the ratio of the numbers between the bulk qubits and boundary qubits at each level $n$ to be
\begin{align}
\frac{N_n}{\widetilde{N}_n}=-\frac{\left(1-\left(2-\sqrt{3}\right)^n\right) \left(1-\left(2-\sqrt{3}\right)^{n-1}\right)}{\sqrt{3} \left(1-\left(2-\sqrt{3}\right)^{2 n-1}\right)} .
\end{align}
We plot numeric values of this ratio for small values of $n$ in Figure \ref{fig:BulkBoundaryNodeRatio}. As the levels continue increasing, the ratio converges quickly to the asymptotic value
\begin{align}
\lim_{n\to\infty} \frac{N_n}{\widetilde{N}_n}\approx 0.577.
\end{align}

\begin{figure}[H]
\centering
\includegraphics[scale=0.68]{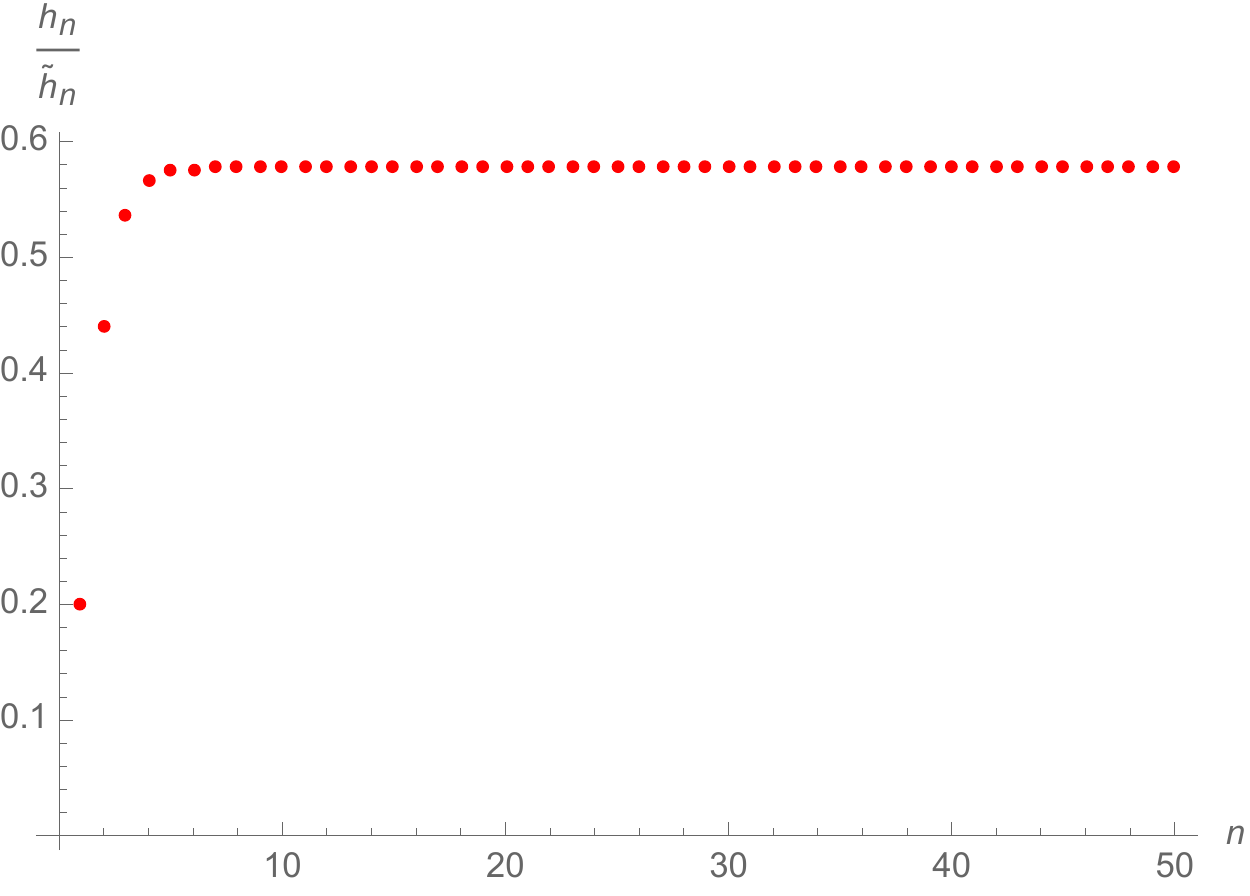}
\caption{The ratio between the number of the bulk qubits ($N_n$) and the number of the boundary qubits ($\widetilde{N}_n$) at level $n$. We see that it asymptotes quickly to $0.577$.}
\label{fig:BulkBoundaryNodeRatio}
\end{figure}

At each node of the tensor network, we place in general a {\it qudit}, which has a $d$-dimensional Hilbert space. Let $\calh_n$ for $n \in \mathbb{N}$ denote the Hilbert space of the black (bulk) qudits at level $n$. Similarly, $\widetilde{\calh}_n$ denotes the Hilbert space of the white (boundary) qudits at level $n$.\footnote{See Section \ref{sec:isometryH} for more details and the isometric maps.} Then, the dimensions of the Hilbert spaces are given by
\begin{align}
\mathrm{dim} \calh_n^{(d)}=d^{N_n},\quad 
\mathrm{dim} \widetilde{\calh}_n^{(d)}=d^{\widetilde{N}_n}.
\end{align}
For the case of the HaPPY code, we place qubits at each node; hence we consider $d=2$. Since we only consider the constructions with qubits, we drop this superindex $(d)$ for the rest of the paper.
We find the ratio between the dimensions of the bulk and boundary Hilbert spaces for our HaPPY code to be
\begin{align}
\frac{\mathrm{dim} \calh_n^{(d=2)}}{\mathrm{dim} \widetilde{\calh}_n^{(d=2)}}=2^{\frac{5}{\sqrt{3}}\left(-\sqrt{3}-\left(2+\sqrt{3}\right)^{n-1}+\left(2-\sqrt{3}\right)^{n-1}\right)}.
\label{eq:HHtRatio}
\end{align}

We find the ratio of the dimensions of the Hilbert spaces to be quickly vanishing for a sufficiently large $n$, as shown in Figure \ref{fig:dimHratio}. More precisely, Figure \ref{fig:dimHratio} is a lin-log plot that portrays the logarithmic value of the ratio, $\log \frac{\mathrm{dim} \calh_n}{\mathrm{dim} \widetilde{\calh}_n}$, with respect to level $n$; we find the logarithmic ratio to be fastly diverging negatively. It follows that the ratio converges to zero:
\begin{align}
\lim_{n\to\infty} \frac{\mathrm{dim} \calh_n}{\mathrm{dim} \widetilde{\calh}_n}=0.
\end{align}
We list the numeric values of this ratio for the first five levels in Table \ref{table:HnRatioValues}. The ratio between the bulk and boundary Hilbert spaces is nearly zero with an order of $10^{-170}$ when considering five levels of HaPPY code! This vast dominance of the size of the boundary Hilbert space is the reason why the bulk physics can be approximated only in the near-boundary region.

\begin{table}[H]
\arraycolsep=10pt\def\arraystretch{1.3}
    \centering
    \begin{tabular}{|c|l|}
    \hline
    Level $(n)$ & Ratio $\left(\frac{\mathrm{dim} \calh_n}{\mathrm{dim} \widetilde{\calh}_n}\right)$ \\
    \hline
    \hline
    1 & 0.0625 \\
    2 & 0.0000610352 \\
    3 & $5.684341886080802\times 10^{-14}$ \\
    4 & $4.37906\times 10^{-47}$ \\
    5 & $1.65608\times 10^{-170}$ \\
    \hline
    \end{tabular}
\caption{The ratio between the dimensions of the bulk and boundray Hilbert spaces $\left(\frac{\mathrm{dim} \calh_n}{\mathrm{dim} \widetilde{\calh}_n}\right)$ decreases fastly to zero as the level increases. By the fifth level, the value is incredibly small, on the order of $10^{-170}$.}
    \label{table:HnRatioValues}
\end{table}

\begin{figure}[H]
\centering
\includegraphics[scale=0.63]{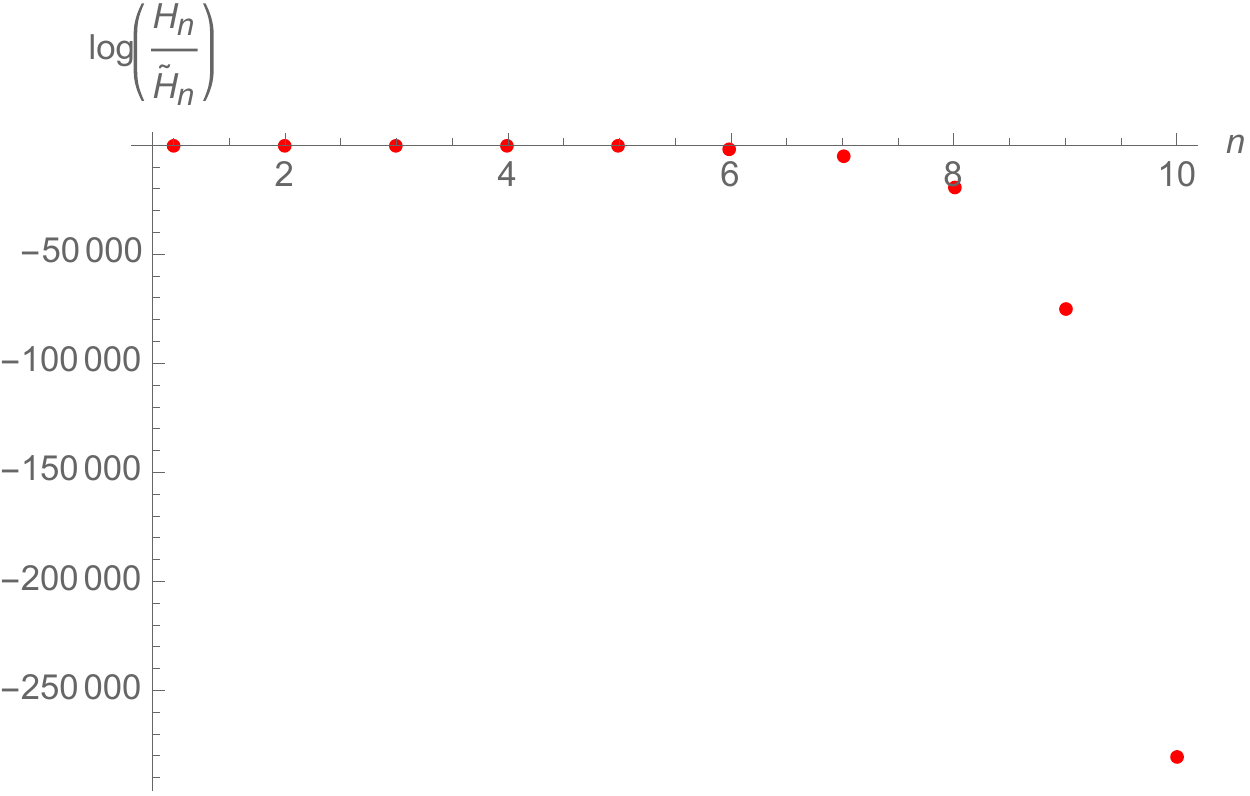}
\includegraphics[scale=0.63]{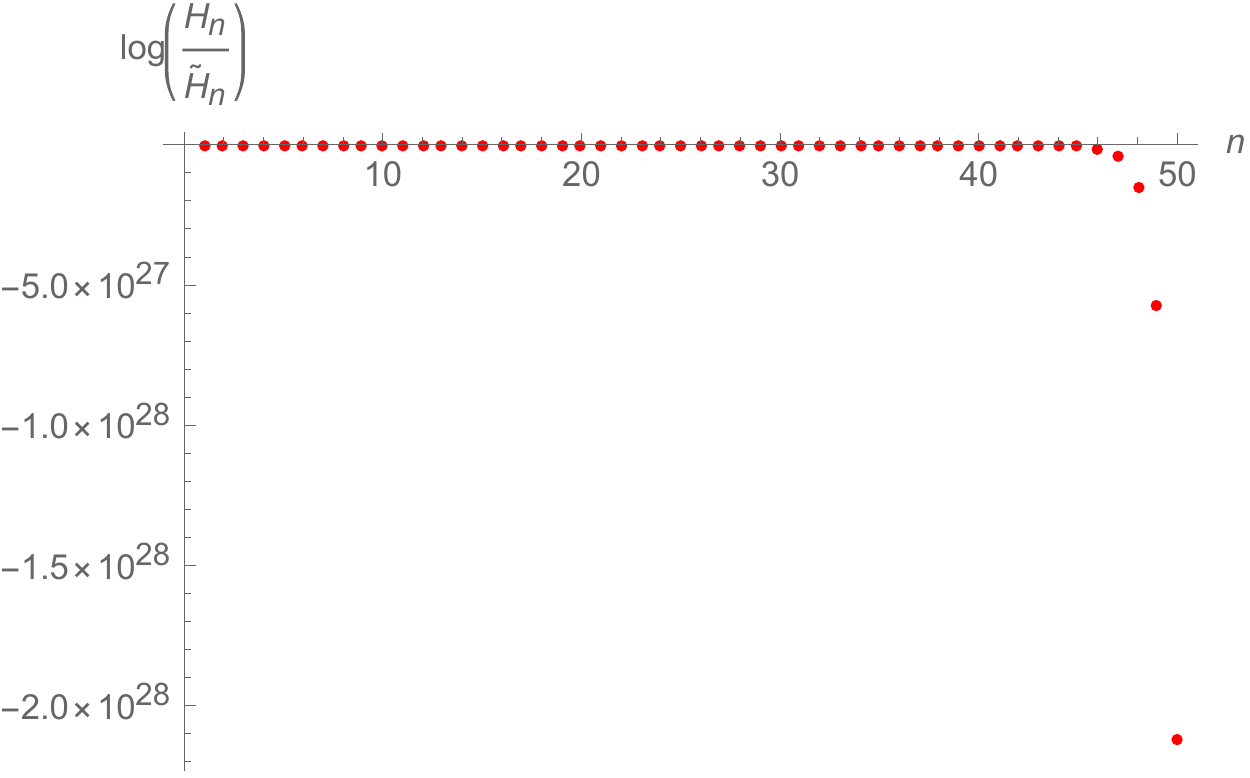}
\vspace{5mm}
\caption{The plots represent the ratio between the dimensions of the bulk and boundary Hilbert spaces as level grows, on a logarithmic scale. This demonstrates how quickly the ratio converges to zero, as the log dramatically diverges negatively.}
\label{fig:dimHratio}
\end{figure}

It is interesting to study the rate of convergence to zero of this ratio. We find that it is nearly double exponentially decaying to zero. We can see that from Figure \ref{fig:dimHratiologlog} where the double logarithmic value of the ratio between the dimensions of $\calh$ and $\widetilde{\calh}$ is plotted as the level $n$ gets larger. In fact, we quickly find an approximately linear graph as $n$ increases. 

\begin{figure}[H]
    \centering
    \includegraphics[scale=0.63]{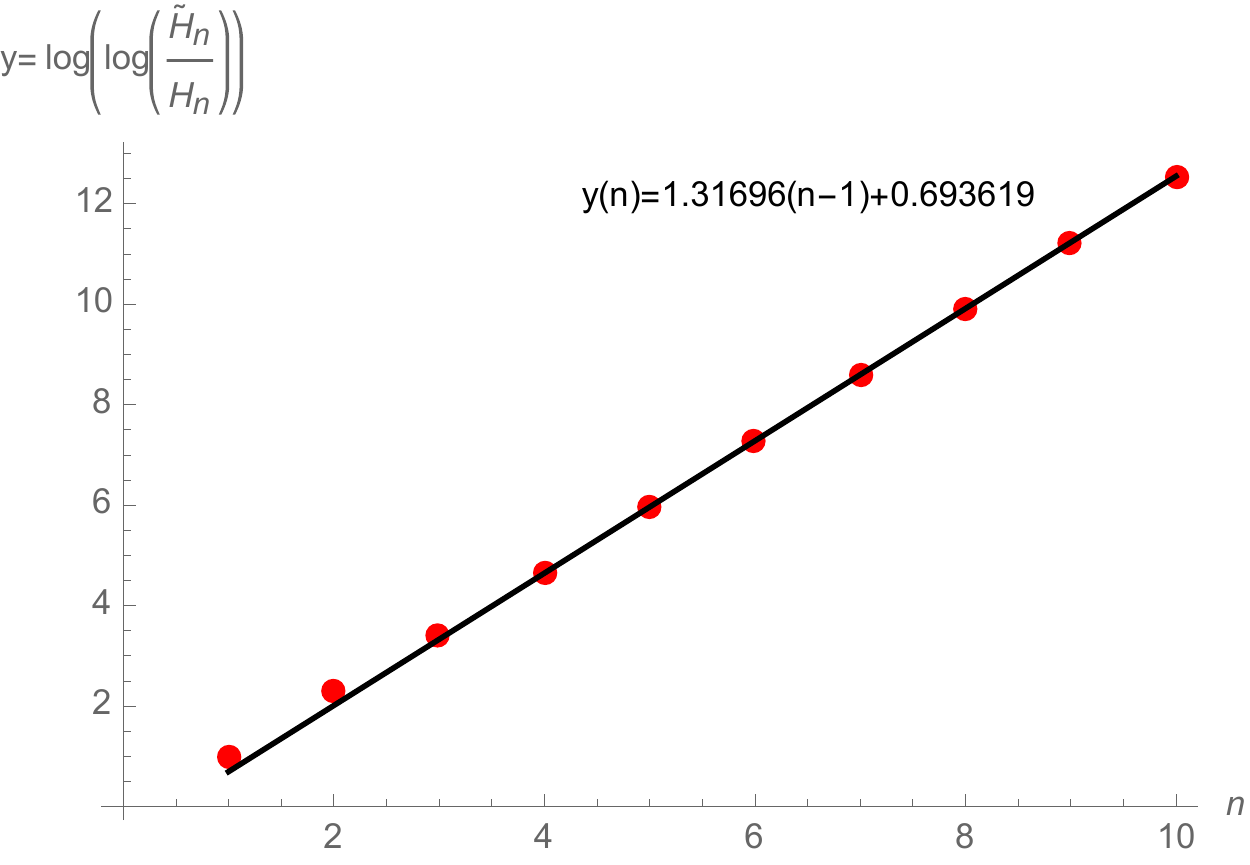}
    \includegraphics[scale=0.63]{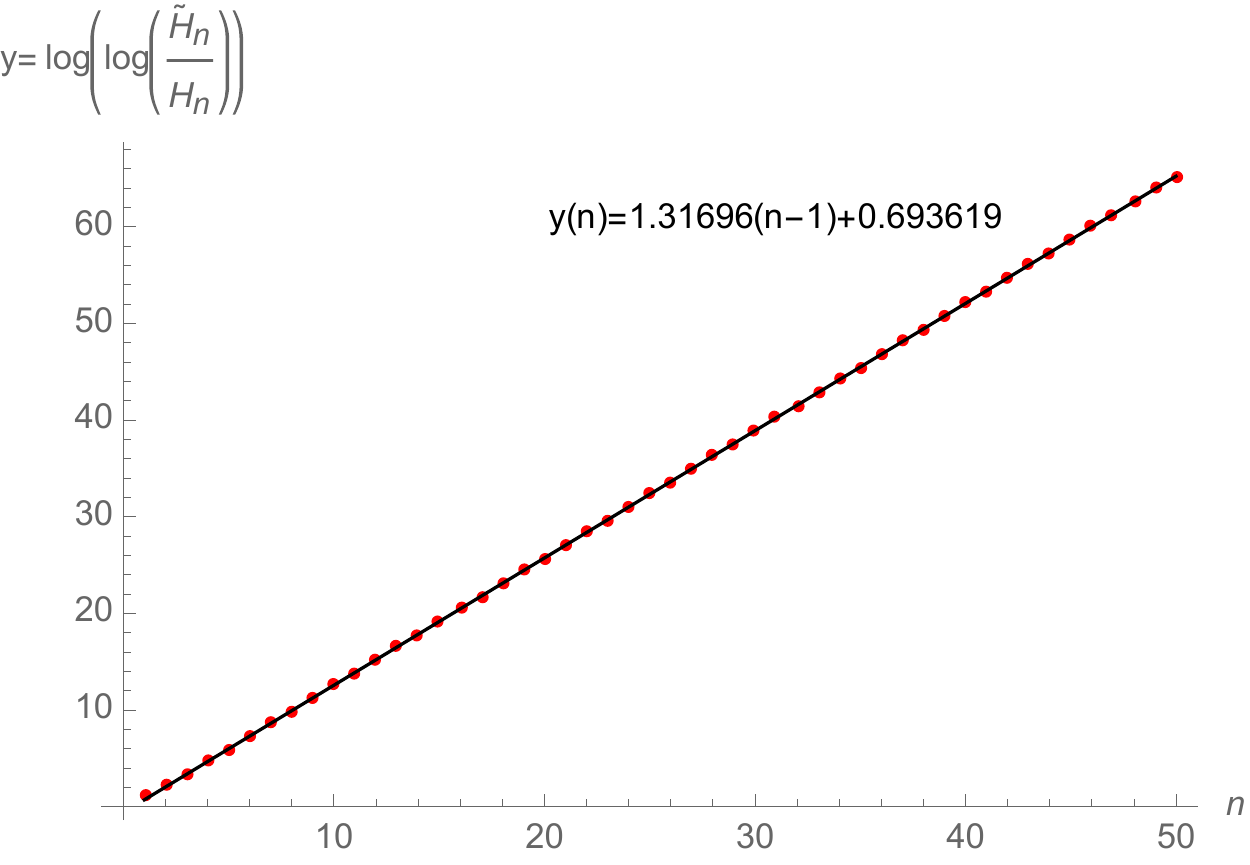}
    \caption{The red points are the plots of the ratio between the dimensions of the bulk and boundary Hilbert spaces $\frac{\mathrm{dim} \widetilde{\calh}_n}{\mathrm{dim} \calh_n}$ in double logarithmic scale as level $n$ grows. We see that the double-logarithmic ratio gets approximately linear with respect to the level; we can conclude that the ratio $\frac{\mathrm{dim} \widetilde{\calh}_n}{\mathrm{dim} \calh_n}$ is nearly double-exponentially decaying as the level grows. The black lines are the fitted linear plots of the ratio with respect to $n$.}
    \label{fig:dimHratiologlog}
\end{figure}

We can investigate this nearly-linear relation of the double-logarithmic ratio in an algebraic manner. Recall equation \eqref{eq:HHtRatio} and take its inverse, so that we can take the double-logarithmic function of the ratio. Then we have
\begin{align}
\log\log \frac{\mathrm{dim} \widetilde{\calh}_n}{\mathrm{dim} \calh_n}=\log\log \left( 2^{\frac{5}{\sqrt{3}}\left(\sqrt{3}+\left(2+\sqrt{3}\right)^{n-1}-\left(2-\sqrt{3}\right)^{n-1}\right)} \right).
\end{align}
With logarithmic properties, we expand:
\begin{align}
\log\log \frac{\mathrm{dim} \widetilde{\calh}_n}{\mathrm{dim} \calh_n} =\log \frac{5}{\sqrt{3}}+\log \left(\sqrt{3}+\left(2+\sqrt{3}\right)^{n-1}-\left(\frac{1}{2+\sqrt{3}}\right)^{n-1}\right) +\log(\log 2) .
\end{align}
We want to find the leading order terms of this expression. For such, we factorize the argument of the log in the second term as
\begin{align}
\log\log \frac{\mathrm{dim} \widetilde{\calh}_n}{\mathrm{dim} \calh_n} =
\log \frac{5}{\sqrt{3}}+\log \left(\frac{\left(\left(2+\sqrt{3}\right)^{n-1}+\frac{\sqrt{3}+\sqrt{7}}{2}\right) \left(\left(2+\sqrt{3}\right)^{n-1}+\frac{\sqrt{3}-\sqrt{7}}{2}\right)}{\left(2+\sqrt{3}\right)^{n-1}} \right) +\log(\log 2).
\end{align}
Then with more logarithmic properties and expansions, we get
\begin{align}
\begin{split}
\log\log \frac{\mathrm{dim} \widetilde{\calh}_n}{\mathrm{dim} \calh_n} &=
\log \frac{5}{\sqrt{3}}+\log\left(\left(2+\sqrt{3}\right)^{n-1}+\frac{\sqrt{3}+\sqrt{7}}{2}\right)+\log\left(\left(2+\sqrt{3}\right)^{n-1}+\frac{\sqrt{3}-\sqrt{7}}{2}\right)\\
&\qquad -\log\left(2+\sqrt{3}\right)^{n-1}+\log(\log 2) \\
&=\log \frac{5}{\sqrt{3}}+\log\left(2+\sqrt{3}\right)^{n-1}+\log\left(1+\frac{\sqrt{3}+\sqrt{7}}{2}\left(2-\sqrt{3}\right)^{n-1}\right)\\
&\qquad+\log\left(1+\frac{\sqrt{3}-\sqrt{7}}{2}\left(2-\sqrt{3}\right)^{n-1}\right) +\log(\log 2) \\
&=(n-1)\log\left(2+\sqrt{3}\right)+\sum_{k=1}^\infty \frac{\left(2-\sqrt{3}\right)^{k(n-1)}}{k} \left( \left(\frac{\sqrt{3}+\sqrt{7}}{2}\right)^k+\left(\frac{\sqrt{3}-\sqrt{7}}{2}\right)^k \right)\\
&\qquad+\log \frac{5}{\sqrt{3}}+\log(\log 2).
\end{split}
\end{align}
The second term is small relative to the other terms and vanishes as $n$ goes to infinity. Thus we find that, even for relatively small values of $n$, the double logarithmic ratio of the dimensions of the Hilbert spaces between bulk and boundary is well approximated by the linear function:
\begin{align}
\begin{split}
\log\log \frac{\mathrm{dim} \widetilde{\calh}_n}{\mathrm{dim} \calh_n}\quad\xrightarrow{n\rightarrow\infty}\quad & (n-1)\log\left(2+\sqrt{3}\right)+\log \frac{5}{\sqrt{3}}+\log(\log 2) \\
&\quad\sim 1.31696 (n-1)+0.693619 .
\end{split}
\end{align}
The numerical values we found exactly matches the fit we found in Figure \ref{fig:dimHratiologlog}. We can thus definitely state the ratio to be dominated by
\begin{align}
\frac{\mathrm{dim} \calh_n}{\mathrm{dim} \widetilde{\calh}_n}\quad\sim\quad 2^{-\frac{5}{\sqrt{3}} \left(2+\sqrt{3}\right)^{n-1}} ,
\end{align}
which converges quickly to zero.

\section{Isometries between Hilbert spaces}
\label{sec:isometryH}

In this section, we construct a mathematical setup for the boundary (physical) and bulk (code) Hilbert spaces of an infinite-dimensional HaPPY code with its boundary pushed to infinity. We also give an explicit example of a system of isometries, constructed utilizing Bell pairs, between the different bulk layers that lifts to the boundary in a way that respects the quantum error correcting structure of the HaPPY code.

\subsection{Setup: the maps between Hilbert spaces}
Let $\calh_n$ for $n \in \mathbb{N}$ denote the Hilbert space of the black (bulk) qubits at level $n$. For all $n \in \mathbb{N}$, $\calh_n$ is identified with (or, mapped into) a subspace of $\calh_{n+1}$.
The relationship between the $\calh_n$ Hilbert spaces is expressed as
\[ \calh_1 \rightarrow \calh_2 \rightarrow \calh_3 \rightarrow \cdots \] 
where each right arrow ($\rightarrow$) denotes the map from one Hilbert space into the next Hilbert space. We will call $\calh_n$ the {\it code pre-Hilbert space at level $n$}.

For each $\calh_n$, we can use the tensor network at level $n$ to map the black (bulk) qubits at level $n$ into their corresponding white (boundary) qubits. The white (boundary) qubits at level $n$ make up the {\it physical pre-Hilbert space at level $n$}, or $\widetilde{\calh}_n$. We represent the isometry produced by the tensor network with a down arrow ($\downarrow$), so the relationship between $\calh_n$ and $\widetilde{\calh}_{n}$ is given by
\[ \begin{array}{c}
\text{\it code pre-Hilbert space at level }n\quad : \\ 
\\ 
\text{\it physical pre-Hilbert space at level }n\ :
\end{array}\
\begin{array}{c}
\calh_n \\ 
\downarrow \\ 
\widetilde{\calh}_n
\end{array} 
 \]
We also want a way to isometrically map a state in $\widetilde{\calh}_n$ into a state in $\widetilde{\calh}_{n + 1}$. This is done by putting tensors in all tiles of level $n+1$ and using the tensor network to map the white qubits in level $n$ to the white qubits in level $n+1$. 
Moreover, in order for the map to behave correctly with the quantum error correcting structure of the HaPPY code, we want the following diagram to be commutative\footnote{For example, one can see that if the bulk indices of these tensors (which correspond to black qubits) are put into a reference state, then the diagram is clearly commutative.}
\begin{align}
\begin{array}{ccccccc}
	\text{\it code pre-Hilbert spaces}\quad : & \calh_1 & \rightarrow & \calh_2 & \rightarrow & \quad\cdots \\ 
	& \downarrow &  & \downarrow &  & \\ 
	\text{\it physical pre-Hilbert spaces}\ : & \widetilde{\calh}_1 & \rightarrow & \widetilde{\calh}_2 & \rightarrow & \quad\cdots & \quad.
\end{array} 
\label{eq:commutativediagram}
\end{align}

For both the code and physical pre-Hilbert spaces, the right arrow ($\rightarrow$) means that one Hilbert space is isometrically mapped into, and identified with a subspace of, the next Hilbert space.
The {\it code pre-Hilbert space} (for the infinite-dimensional case) is the disjoint union of all the $\calh_n$ subject to the identifications given above.\footnote{Mathematically, we are taking the inductive limit of the direct system of Hilbert spaces.} The {\it physical pre-Hilbert space} is defined as the disjoint union of all the $\widetilde{\calh}_n$ subject to the identifications given above.

%A bulk operator on $\calh_n$ can be mapped to a bulk operator on $\calh_{n+1}$ that only acts nontrivially on the qubits in level $n+1$ that are also in level $n$. A bulk operator on $\calh_n$ can also be mapped to a boundary operator on $\widetilde{\calh}_n$, though there are many ways to do this.\footnote{Alternatively, quantum systems can be constructed directly at the operator level. We study the HaPPY code directly at the level of operator pushing in our companion paper \cite{MonicaElliott2}.}

An advantage of this picture is that it reproduces bulk physics all the way to the AdS radius, which can be of interest when thinking of deep bulk objects like black holes. On the other hand, it is not clear if the successive mappings of a bulk operator will converge in some sense all the way up to the boundary. Moreover, the physical interpretation of this picture is somewhat questionable, as the HaPPY code is better suited to describe physics close to the boundary.

Table \ref{table:dimensions} relates the dimensions of bulk and boundary Hilbert spaces for the first few levels of the infinite-dimensional HaPPY code.

\begin{table}[H]
\arraycolsep=8pt\def\arraystretch{1.4}
\begin{center}
$
\begin{array}{|c|c||c|c|c|c|c|c|c|}
\hline 
\text{Level} & n & 1 & 2 & 3 & 4 & 5 & 6 & \cdots \\ 
\hline
\hline
\text{Bulk Hilbert space} & \mathrm{dim} \calh_n & 2^1 & 2^{11} & 2^{51} & 2^{201} & 2^{761} & 2^{2851} & \cdots \\ 
\hline 
\text{Boundary Hilbert space} & \mathrm{dim} \widetilde{\calh}_n & 2^5 & 2^{25} & 2^{95} & 2^{355} & 2^{1325} & 2^{4945} & \cdots \\ 
\hline 
\end{array} 
$
\end{center}
\caption{Dimensions of bulk and boundary Hilbert spaces at various levels.}
\label{table:dimensions}
\end{table}

\subsection{Bulk Hilbert Space Maps}

The HaPPY code at level $n$ provides a map between $\calh_n$ and $\widetilde{\calh}_{n}$; however, the tensor network does not provide a map between two successive bulk Hilbert spaces or, alternatively, two boundary Hilbert spaces. A natural guess for a map between successive bulk Hilbert spaces is to take a state in $\mathcal{H}_n$ and take a tensor product with the requisite number of extra qubits in a reference state (say, $\ket{\uparrow}$), to make a state inside of $\mathcal{H}_{n+1}$. This defines an injective isometric map between $\mathcal{H}_n$ and $\mathcal{H}_{n+1}$. However, the problem with this approach or alike is that the corresponding boundary-to-boundary map in the commutative diagram, given in equation \eqref{eq:commutativediagram}, will be extremely complicated and is not expected to provide a realization of long-range entanglement. In this section, we will give an alternative way of constructing a map between the two successive bulk Hilbert spaces $\calh_n$ and $\calh_{n+1}$, in such a way that it lifts nicely to the boundary. This will give an example that realizes the commutative diagram given in equation \eqref{eq:commutativediagram} both at the level of the bulk and at the level of the boundary.

To construct the bulk to bulk map, we first investigate level $1$ of the HaPPY code. The HaPPY code does not provide a natural way to directly map 1 central qubit to 11 qubits, which would correspond to the number of level $1$ bulk qubits to the number of level $2$ bulk qubits. Instead, it just gives a way of mapping the central qubit to the 5 neighboring boundary qubits. Therefore, there are 6 qubit degrees of freedom left, that correspond to the central node and the 5 other level 2 nodes that are not on the first boundary level.  To define the map, we will need to introduce a Bell pair. Recalling that the tensors of the HaPPY code are perfect tensors, and they maximally entangle any bi-partition of their legs. Indeed, the central tensor is supposed to be maximally entangled to the rest of the network, and it sounds natural to encode that information in the remaining qubits. As a result, if a qubit lives on the central tile of the HaPPY code, then it will be maximally entangled with its image by the bulk isometry. As such, these two elements of the Hilbert space can be identified with a Bell pair. Consequently, by successively grabbing bulk Bell pairs, and pushing some of their elements through the tensor network, we will be able to reconstruct the Hilbert spaces of the HaPPY code. Note that the HaPPY tensor network sometimes pushes two legs into the same tile, so it is not surprising that we are introducing Bell pairs which are pushed through isometries defined by the tensor network.

Using these ingredients, we present a natural way using Bell pairs to define an isometry from $\calh_{n}$ to $\calh_{n+1}$ for $n \in \mathbb{N}$. We briefly summarize the procedure and then give examples.
 \medskip
\begin{alg} \label{alg:HaPPYbulk}
The bulk to bulk map $\calh_n\to\calh_{n+1}$ of the HaPPY code at level $n\in\mathbb{N}$ is constructed by the following. The initial data of the algorithm $\calh_1 \rightarrow \calh_2$ ($1 \rightarrow 11$) is
\begin{subequations}
\begin{align*}
\calh_1 \rightarrow \calh_2\ :\quad 
1 & \xrightarrow{\hspace*{.4cm}\calh_1 \rightarrow \widetilde{\calh}_1\hspace*{.4cm}} 5 \\
&\left\{ \begin{array}{l}
	1 \xrightarrow{\hspace*{1.15cm}} 1\\ 
	1 \xrightarrow{\hspace*{.1cm}\calh_1 \rightarrow \widetilde{\calh}_1} 5 ,
	\end{array} \right.  
\end{align*}
\end{subequations}
where the last two rows incorporate a Bell pair.
For the rest, $n\geq 2$, the map $\calh_{n} \rightarrow \calh_{n+1}$ is constructed differently for $n$ even and $n$ odd except for the top row, which is always the map $\calh_n\to\widetilde{\calh}_n$.
\begin{itemize}
\item {\bf $\mathbf{n}$ even}: We start with the map $\calh_{n-1} \rightarrow \calh_n$ and advance the outputs of all Bell pairs horizontally as in the commutative diagram ($\calh_{k} \rightarrow \calh_{k+1}$ or $\widetilde{\calh}_{k} \rightarrow \widetilde{\calh}_{k+1}$ for $k<n$) except the last row of the last Bell pair. Then, we add a new Bell pair and map it under $\calh_1\to\widetilde{\calh}_1$:
$$ \left\{ \begin{array}{c}
1 \longrightarrow 5\\ 
1 \longrightarrow 5
\end{array} \right. $$
\item {\bf $\mathbf{n}$ odd}: We start with the map $\calh_{n-1} \rightarrow \calh_n$ and advance the outputs of all Bell pairs horizontally except the last row of the last Bell pair.
\end{itemize}
\end{alg}

We explore first few levels of this algorithm as examples. To start the algorithm, we require the first map $\calh_1 \rightarrow \calh_2$ ($1 \rightarrow 11$) as the initial data in Algorithm \ref{alg:HaPPYbulk}. The top line represents the input qubit mapped to five qubits via the HaPPY tensor network, which corresponds to $\calh_1 \rightarrow \widetilde{\calh}_1$. The second line introduces a maximally entangled Bell pair, and one of the qubits is mapped to five qubits via the tensor network. The total output is 11 qubits.

Now we look into $\calh_2 \rightarrow \calh_3$ ($11 \rightarrow 51$), which is under the category of $n$ even. The algorithm provides this map to be
\begin{subequations}
\begin{align}
\calh_2 \rightarrow \calh_3\ :\quad 
11 & \xrightarrow{\hspace*{1.1cm}} 25 \\
&\left\{ \begin{array}{c}
1 \longrightarrow 11\\
\hspace{-0.18cm} 1 \longrightarrow 5
\end{array} \right.
\label{eq:bp2}
\\
&\left\{ \begin{array}{c}
1 \longrightarrow 5\\ 
1 \longrightarrow 5
\end{array} \right. 
\label{eq:bp3} 
\end{align}
\end{subequations}
The rule is that the top line of the map always maps the input qubits, which correspond to 11 qubits in this case, to the corresponding number of qubits in the same column of Table \ref{table:dimensions}, which corresponds to 25 qubits in this case. In other words, this is the map for $\calh_2 \rightarrow \widetilde{\calh}_2$. The output of the first row of the Bell pair for $\calh_1\to\calh_2$ is advanced horizontally according to Table \ref{table:dimensions}, which corresponds to $\calh_1 \rightarrow \calh_2$, and the result is copied into equation \eqref{eq:bp2}. Furthermore, a new Bell pair is added in equation \eqref{eq:bp3}, and each qubit is mapped to five qubits, using $\calh_1 \rightarrow \widetilde{\calh}_1$ individually.

For $\calh_3 \rightarrow \calh_4$ ($51 \rightarrow 201$), the map is given by:
\begin{subequations}
\begin{align}
\calh_3 \rightarrow \calh_4\ :\quad 
51 & \xrightarrow{\hspace*{1.1cm}} 95 \\
&\left\{ \begin{array}{c}
1 \longrightarrow 51\\ 
1 \longrightarrow 25
\end{array} \right.
\label{eq:bp4}
\\
&\left\{ \begin{array}{c}
1 \longrightarrow 25\\ 
\hspace{-0.18cm} 1 \longrightarrow 5
\end{array} \right. 
\label{eq:bp5} 
\end{align}
\end{subequations}
To arrive at equations \eqref{eq:bp4} and \eqref{eq:bp5} from equations \eqref{eq:bp2} and \eqref{eq:bp3}, we horizontally advance (i.e. move one cell to the right in Table \ref{table:dimensions}) the output of every row in equations \eqref{eq:bp2} and \eqref{eq:bp3} except for the last row of \eqref{eq:bp3}.
	
To obtain further isometries, we repeat the pattern of the previous two steps. For $\calh_4 \rightarrow \calh_5$ ($201 \rightarrow 761$), the map is given by:
\begin{subequations}
\begin{align}
\calh_4 \rightarrow \calh_5\ :\quad 
201 & \xrightarrow{\hspace*{1.1cm}} 355 \\
&\left\{ \begin{array}{c}
1 \longrightarrow 201\\ 
\hspace{-0.18cm} 1 \longrightarrow 95
\end{array} \right.
\label{eq:bp6}
\\
&\left\{ \begin{array}{c}
1 \longrightarrow 95\\ 
\hspace{-0.18cm} 1 \longrightarrow 5
\end{array} \right. 
\label{eq:bp7}
\\
&\left\{ \begin{array}{c}
1 \longrightarrow 5\\ 
1 \longrightarrow 5
\end{array} \right. 
\label{eq:bp8} 
\end{align}
\end{subequations}
To determine the output of the first two Bell pairs, we horizontally advanced the outputs of equations \eqref{eq:bp4} and \eqref{eq:bp5} (except for the last line). Then, we introduced another Bell pair.

For $\calh_5 \rightarrow \calh_6$ ($761 \rightarrow 2851$), the map is given by:
\begin{subequations}
\begin{align}
\calh_5 \rightarrow \calh_6\ :\quad 
761 & \xrightarrow{\hspace*{1.1cm}} 1325 \label{eq:toplineex}\\
&\left\{ \begin{array}{c}
1 \longrightarrow 761\\ 
1 \longrightarrow 355
\end{array} \right.
\label{eq:bp9}
\\
&\left\{ \begin{array}{c}
1 \longrightarrow 355\\ 
\hspace{-0.18cm} 1 \longrightarrow 25
\end{array} \right. 
\label{eq:bp10}
\\
&\left\{ \begin{array}{c}
1 \longrightarrow 25\\ 
\hspace{-0.18cm} 1 \longrightarrow 5
\end{array} \right. 
\label{eq:bp11} 
\end{align}
\end{subequations}
This results from horizontally advancing the outputs of every row in equations \eqref{eq:bp6} to \eqref{eq:bp8} except the last row of equation \eqref{eq:bp8}.
	
For $\calh_6 \rightarrow \calh_7$ ($2851 \rightarrow 10651$), we repeat the pattern we have already established:
\begin{subequations}
\begin{align}
\calh_6 \rightarrow \calh_7\ :\quad 
2851 & \xrightarrow{\hspace*{1.1cm}} 4945 \\
&\left\{ \begin{array}{c}
1 \longrightarrow 2851\\ 
1 \longrightarrow 1325
\end{array} \right.
\label{eq:bp12}
\\
&\left\{ \begin{array}{c}
1 \longrightarrow 1325\\ 
\hspace{-0.38cm} 1 \longrightarrow 95
\end{array} \right. 
\label{eq:bp13}
\\
&\left\{ \begin{array}{c}
1 \longrightarrow 95\\ 
\hspace{-0.18cm} 1 \longrightarrow 5
\end{array} \right. 
\label{eq:bp14} 
\\
&\left\{ \begin{array}{c}
1 \longrightarrow 5\\ 
1 \longrightarrow 5
\end{array} \right. 
\label{eq:bp15} 
\end{align}		
\end{subequations}

Following the algorithm for $\calh_n\to\calh_{n+1}$, the output, after incorporating the number of bulk qubits $N_n$ of $\calh_n$ along with the Bell pairs, sums up to be the number of qubits $N_{n+1}$ of $\calh_{n+1}$. This may appear non-trivial from Algorithm \ref{alg:HaPPYbulk}, however, we can show that the algorithm is giving a map consistent with $N_n$ and $\widetilde{N}_n$ found for the HaPPY code in equations \eqref{eq:Nn} and \eqref{eq:Nntilde}. In fact, the algorithm dictates the following relations on $N_n$ and $N_{n+1}$ for $n$ odd and even:
\begin{align}
\begin{aligned}
& n=2k+1:&& N_{n}=N_{2k+1}=\widetilde{N}_{2k}+N_{2k}+2\sum_{i=0}^{k-1}\widetilde{N}_{2i+1}+5,\\
& n=2k+2: && N_{n}=N_{2k+2}=\widetilde{N}_{2k+1}+N_{2k+1}+2\sum_{i=1}^{k}\widetilde{N}_{2i}+5,
\end{aligned}
\end{align}
where $k\geq 1$, $k\in\mathbb{N}$. The above recursive relations of $N_n$ and $\widetilde{N}_n$ established in Algorithm \ref{alg:HaPPYbulk} are satisfied with the equations \eqref{eq:Nn} and \eqref{eq:Nntilde}, which provides $N_n$ and $\widetilde{N}_n$ for the HaPPY code. Hence, we can deduce that the algorithm works as desired, mapping bulk to bulk recursively. It is quite remarkable that the bulk (and thus boundary as in Section \ref{sec:boundaryalgorithm}) Hilbert spaces for the HaPPY code can be related at different levels via such a Bell pair construction, indeed it would be interesting to explore in more detail the physical ramifications of this map.

\subsection{Boundary Hilbert Space Maps} \label{sec:boundaryalgorithm}

For an arbitrary level $n \in \mathbb{N}$, the map between the boundary Hilbert spaces $\widetilde{\calh}_n$ and $\widetilde{\calh}_{n+1}$ is directly determined by the map between the bulk Hilbert spaces $\calh_{n}$ and $\calh_{n+1}$. This ensures the commutativity of the diagram given in equation \eqref{eq:commutativediagram}. In fact, we construct the boundary to boundary map based on the HaPPY code bulk to boundary map $\calh_n\to\widetilde{\calh}_n$ and the established bulk to bulk map $\calh_n\to\calh_{n+1}$, as presented in Algorithm \ref{alg:HaPPYboundary}.

\begin{alg} \label{alg:HaPPYboundary}
The boundary to boundary map $\widetilde{\calh}_n\to\widetilde{\calh}_{n+1}$ of the HaPPY code at level $n\in\mathbb{N}$ is constructed by the following. We first perform Algorithm \ref{alg:HaPPYbulk} for the bulk to bulk map $\calh_n\to\calh_{n+1}$. We first send the input $\widetilde{N}_n$ of $\widetilde{\calh}_n$ to $\widetilde{N}_n$ of $\calh_{n+1}$. For the rest, we copy the outputs of all Bell pairs from $\calh_n\to\calh_{n+1}$. We collect them and then apply the map from $\calh_{n+1} \rightarrow \widetilde{\calh}_{n+1}$.
\end{alg}

For example, we consider the case when $n=5$, namely, the boundary to boundary map $\widetilde{\calh}_5 \rightarrow \widetilde{\calh}_6$. The top line of the map is replaced with a line where the output number of qubits is mapped identically to itself: $1325 \rightarrow 1325$. We next copy all Bell pairs from $\calh_5\to\calh_6$, given in \cref{eq:bp9,eq:bp10,eq:bp11}. Then, the total number of output qubits equals the number of qubits in $\calh_{n+1}$.  Finally, we advance this through the HaPPY code bulk to boundary map $\calh_{n+1}\to\widetilde{\calh}_{n+1}$, as we can summarize the step as
\begin{equation}
\left.\begin{array}{l l}
\widetilde{\calh}_5 \rightarrow \widetilde{\calh}_6\ :\quad 
1325 & \xrightarrow{\hspace*{1.1cm}} 1325 \\
& \left\{ \begin{array}{c}
1 \longrightarrow 761\\ 
1 \longrightarrow 355
\end{array} \right.
\\
& \left\{ \begin{array}{c}
1 \longrightarrow 355\\ 
\hspace{-0.18cm} 1 \longrightarrow 25
\end{array} \right. 
\\
& \left\{ \begin{array}{c}
1 \longrightarrow 25\\ 
\hspace{-0.18cm} 1 \longrightarrow 5
\end{array} \right. 
\end{array} \right] \longrightarrow 4995 .
\end{equation}

We conclude that via Algorithms \ref{alg:HaPPYbulk} and \ref{alg:HaPPYboundary}, we have established the bulk to bulk map and the boundary to boundary map recursively; along with the HaPPY code bulk to boundary isometry, the commutative diagram of Hilbert spaces has been constructed
\begin{align}
\begin{array}{cccccc}
	 \calh_n & \rightarrow & \calh_{n+1}& \\
	 \downarrow &  & \downarrow &  & \\ 
	 \widetilde{\calh}_n & \rightarrow & \widetilde{\calh}_{n+1} & .
\end{array} 
\label{eq:CDleveln}
\end{align}
This provides the commutative diagram for all levels of the HaPPY code in equation \eqref{eq:commutativediagram} when $n$ is recursively applied. Having this quantum error correcting code requires investigating its role as a stabilizer code at each level, which we study in Section \ref{sec:BoundaryStabilizerGen}.

Knowing such maps is useful for determining the correlation functions in the limit where the tensor network is infinitely large, which is important for computing physical observables via holographic tensor networks. Even though this indeed provides a rigid recursive understanding of isometries between Hilbert spaces, this however does not provide a description associated to Hamiltonian theories. For understanding thermal states of the associated quantum field theory, we study Hamiltonian descriptions of the bulk from the infinite-dimensional HaPPY code in Sections \ref{sec:trapezeHamiltonian} and \ref{sec:bulktoboundarytrapeze}. We will define a natural Hamiltonian from the bulk and explore the patterns of its maps at the level of operators.

\section{Stabilizer generators at each level}
\label{sec:BoundaryStabilizerGen}

The next task is to construct some explicit operator maps from the bulk to the boundary. It will enable us to test whether the HaPPY code really keeps its promises as a model of AdS/CFT: is an entangled bulk theory mapped to a sufficiently entangled theory on the boundary that it exhibits a conformal algebraic decay in its correlation functions?

The HaPPY code is in its essence a stabilizer code, at least for the historic choice of tensors coming from the five-qubit code. It then seems natural to start our explicit study of operator pushing by looking at its stabilizer.
We want to find some explicit construction for a nice set of stabilizer generators of the HaPPY code at every level, in terms of the action of the Pauli group on boundary qubits. We take the conventions for the Pauli operators to be
\begin{align}
X=\begin{bmatrix}0&1\\1&0\\\end{bmatrix},\quad Y=\begin{bmatrix}0&-i\\i&0\\\end{bmatrix},\quad Z=\begin{bmatrix}1&0\\0&-1\\\end{bmatrix}.
\label{eq:Paulis}
\end{align}
Then the Pauli group\footnote{Technically, in order for this to have a group structure, one would need to consider these elements up to powers of $i$, but the name is usual in quantum error correction literature.} is the set
\begin{align}
\Pi=\{X,Y,Z,1\} .
\end{align}
Note that these operators act on a single qubit, which corresponds to a state in a two-dimensional Hilbert space. Then for a quantum stabilizer code over more than one qubit, one would like to express stabilizer generators as tensor products of Pauli matrices.

The original perfect tensor used in the HaPPY paper is constructed utilizing the 5-qubit code. In what follows, one should bear in mind that the stabilizer of the 5-qubit code is generated by the cyclic permutations of the string of Pauli operators:
\begin{align}
S=YXXY1,
\end{align}
where we have suppressed the tensor product symbols ($\otimes$) between Paulis.
We note that this $S$ is usually taken to be of the form $XZZX1$ in the literature. However one can go from this expression to ours by multiplying it by $ZZX1X$, which gives $Y1YXX$, and then doing a circular permutation of the Paulis. The converse procedure is also valid.

We are first interested in knowing how many stabilizer generators we need for each level of the code. For this we first need to utilize the number of boundary qubits $\widetilde{N}_n$ in a setting that is friendly to the graphic pattern of the HaPPY code. We observe the HaPPY code in terms of the bulk nodes individually linking to the boundary nodes. With the exception of the first level, the most outside level of the bulk nodes each provides either two or three boundary nodes, which we call {\it 2-clusters} and {\it 3-clusters}. Let us denote the number of 2-clusters (2 boundary legs for the pentagons) as $a_n$ and that of 3-clusters (3 boundary legs for the pentagons) as $b_n$, where $n\geq2, n\in\mathbb{N}$.\footnote{For the consistency of the notation for $n$ as the level defined in Section \ref{sec:pentagon}, $a_n$ and $b_n$ sequences start from $n=2$.} The first level, however,  is special in terms of bulk and boundary nodes, where there are no 2- or 3-clusters for the boundary qubits, since there is a single center node with five boundary nodes directly. Then, as Figure \ref{fig:level1tensornetwork} depicts, we can safely determine the recursive relations of $a_n$ and $b_n$ to be \begin{align}
\begin{cases}
& a_n=2a_{n-1}+3b_{n-1},\quad a_2=5, \\
& b_n=a_{n-1}+2b_{n-1},\quad b_2=5.
\end{cases}
\end{align}
This 2-cluster and 3-cluster series $a_n$ and $b_n$ are directly correlated with the counting of of edges and vertices observed earlier in Section \ref{sec:AsymototesHn}. The counting of edges ($e_n$) and vertices ($v_n$) at level $n$ satisfy recursive relations given in equation \eqref{eq:edgesvertices}, which originates from the $SL(2,\mathbb{Z})$ symmetry. For the consistency of the notation for $n$, the number of 2-clusters $a_n$ and that of 3-clusters $b_n$ correspond to the number of edges $e_{n-1}$ and that of vertices $v_{n-1}$, respectively.

To put each level of the HaPPY code as a stabilizer code, we write the number of physical qubits $p_n$ at level $n$ as
\begin{align}
p_n=\widetilde{N}_n=2a_n+3b_n=a_{n+1},
\end{align}
and the number of logical qubits $\ell_n$ as
\begin{align}
\ell_{n+1}-\ell_n=N_{n+1}-N_n=a_{n+1}+b_{n+1}.
\end{align}
The stabilizer code we have setup here is the series of $[p_n=\widetilde{N}_n, \ell_n=N_n]$ quantum error correcting codes defined at each level $n$, encoding $\ell_n$ logical qubits and $p_n$ physical qubits, whose rate is $\ell_n/p_n$. 

In other words, we have introduced a user-friendly graphical way to count the required qubits of the HaPPY code at each level via the number of 2-clusters ($a_n$) and that of 3-clusters ($b_n$) with their recursive relations 
\begin{align}
\begin{cases}
& \widetilde{N}_n=a_{n+1},\quad N_n=N_{n-1}+a_n+b_n,\\
& a_n=2a_{n-1}+3b_{n-1},\quad b_n=a_{n-1}+2b_{n-1}.
\end{cases}
\end{align}
We list for first few levels $n=2,3,4,5,6$ in Table \ref{table:anbnhnhtn} the numeric values of the number of 2-clusters $a_n$, the number of 3-clusters $b_n$, the number of logical qubits (or identically the number of bulk qubits) $\ell_n=N_n$, and the number of physical qubits (or identically the number of boundary qubits) $p_n=\widetilde{N}_n$.

\begin{table}[H]
\arraycolsep=8pt\def\arraystretch{1.4}
\begin{center}
$
\begin{array}{|c|c||c|c|c|c|c|c|}
\hline 
\text{Levels} & n & 2 & 3 & 4 & 5 & 6 & \cdots \\ 
\hline
\hline
\text{\# 2-clusters} & a_n & 5 & 25 & 95 & 355 & 1325 & \cdots \\ 
\hline
\text{\# 3-clusters} & b_n & 5 & 15 & 55 & 205 & 765 & \cdots \\ 
\hline
\text{\# Logical/Bulk qubits} & \ell_n=N_n & 11 & 51 & 201 & 761 & 2851 & \cdots \\ 
\hline
\text{\# Physical/Boundary qubits} & p_n=\widetilde{N}_n & 25 & 95 & 355 & 1325 & 4945 & \cdots \\ 
\hline 
\end{array} 
$
\end{center}
\caption{Numbers of 2-clusters ($a_n$) and 3-clusters ($b_n$) at various levels $n$. They form the logical ($\ell_n$) and physical qubits ($p_n$), corresponding to bulk ($N_n$) and boundary qubits ($\widetilde{N}_n$), respectively.}
\label{table:anbnhnhtn}
\end{table}

The stabilizer of the $n$-th level of the HaPPY code $\mathcal{S}$ is an abelian subgroup of the $\widetilde{N}_n$-fold Pauli group $\Pi^{\widetilde{N}_n}$:
\begin{align}
\Pi_n = \left\{\ e^{i\phi} A_1\otimes\cdots\otimes A_{\widetilde{N}_n}\ \middle|\ A_j\in\Pi\ \left(\forall j\in\{1,\cdots ,\widetilde{N}_n\}\right),\ \phi\in\left\{0,\frac{1}{2}\pi,\pi,\frac{3}{2}\pi\right\}\right\} .
\end{align}
The stabilizer $\mathcal{S}$ has a minimal representation in terms of $\widetilde{N}-N$ independent generators
\begin{align}
\left\{ g_1,\cdots,g_{\widetilde{N}-N}\right\},
\end{align}
where $\forall j\in\{ 1,\cdots,\widetilde{N}-N\}$, $g_j\in\mathcal{S}$.
We find the recursive relation for the number of stabilizer generators $g_n$ for our infinite-dimensional analog of the HaPPY code to be
\begin{align}
\begin{split}
g_{n+1}-g_n &=(\widetilde{N}_{n+1}-N_{n+1})-(\widetilde{N}_n-N_n)=\widetilde{N}_{n+1}-\widetilde{N}_n-(N_{n+1}-N_n)\\
&=(2a_{n+1}+3b_{n+1})-a_{n+1}-(a_{n+1}+b_{n+1})=2b_{n+1}.
\end{split}
\end{align}

The problem then boils down to finding $g_n$ generators for each $n$. In what follows, we present an algorithmic approach to determine the stabilizer generators recursively from level $n$ to level $n+1$.\footnote{For a review on stabilizer codes with similar methods, see \cite{Rubio}.}  First, we will define a nice bulk-to-boundary map that pushes the $g_n$ generators already known at level $n$ to the boundary of level $n+1$. Then we find a set of $2b_{n+1}$ independent new generators. The latter part is straightforward: a set of $2b_{n+1}$ generators is given by the two families described on Figure \ref{fig:FamiliesBoundary}. Each of the two family members at level $n+1$ can be centered on the central descendant of a 3-cluster at level $n$, and there are $b_n$ such central descendants, yielding the desired total of $2b_n$ independent generators.

\begin{figure}[H]
\begin{subfigure}{0.5\textwidth}
\begin{center}
\begin{tikzpicture}[scale=0.8]
\draw (-9,0)--(-1,0);
\draw (-8,0)--(-8.5,1);
\draw (-8,0)--(-7.5,1);
\draw (-5,0)--(-5.5,1);
\draw (-5,0)--(-4.5,1);
\draw (-2,0)--(-2.5,1);
\draw (-2,0)--(-1.5,1);
\draw (-8,0)--(-8,-.5);
\draw (-5,0)--(-5,-.5);
\draw (-2,0)--(-2,-.5);
\node[draw=none,label={[label distance=0.1mm]north:Z}] (C) at (-8.5,1) {};
\node[draw=none,label={[label distance=0.1mm]north:Z}] (C) at (-7.5,1) {};
\node[draw=none,label={[label distance=0.1mm]north:Z}] (C) at (-5.5,1) {};
\node[draw=none,label={[label distance=0.1mm]north:Z}] (C) at (-4.5,1) {};
\node[draw=none,label={[label distance=0.1mm]north:Z}] (C) at (-2.5,1) {};
\node[draw=none,label={[label distance=0.1mm]north:Z}] (C) at (-1.5,1) {};
\node[draw=none,label={[label distance=0.1mm]west:X}] (C) at (-9,0) {};
\node[draw=none,label={[label distance=0.1mm]east:X}] (C) at (-1,0) {};
\node[draw=none,label={[label distance=0.1mm]east:BULK}] (C) at (-5.85,-1) {};
\end{tikzpicture}
\end{center}
\vspace{3mm}

\subcaption{Family 1}
\end{subfigure}
\begin{subfigure}{0.5\textwidth}
\begin{center}
\begin{tikzpicture}[scale=0.8]
\draw (-9,0)--(-1,0);
\draw (-8,0)--(-8.5,1);
\draw (-8,0)--(-7.5,1);
\draw (-5,0)--(-5.5,1);
\draw (-5,0)--(-4.5,1);
\draw (-2,0)--(-2.5,1);
\draw (-2,0)--(-1.5,1);
\draw (-8,0)--(-8,-.5);
\draw (-5,0)--(-5,-.5);
\draw (-2,0)--(-2,-.5);
\node[draw=none,label={[label distance=0.1mm]north:X}] (C) at (-8.5,1) {};
\node[draw=none,label={[label distance=0.1mm]north:X}] (C) at (-7.5,1) {};
\node[draw=none,label={[label distance=0.1mm]north:X}] (C) at (-5.5,1) {};
\node[draw=none,label={[label distance=0.1mm]north:X}] (C) at (-4.5,1) {};
\node[draw=none,label={[label distance=0.1mm]north:X}] (C) at (-2.5,1) {};
\node[draw=none,label={[label distance=0.1mm]north:X}] (C) at (-1.5,1) {};
\node[draw=none,label={[label distance=0.1mm]west:Y}] (C) at (-9,0) {};
\node[draw=none,label={[label distance=0.1mm]east:Y}] (C) at (-1,0) {};
\node[draw=none,label={[label distance=0.1mm]east:BULK}] (C) at (-5.85,-1) {};
\end{tikzpicture}
\end{center}
\vspace{3mm}

\subcaption{Family 2}
\end{subfigure}
\vspace{2mm}

\caption{Families of boundary generators with Paulis XYZ as denoted in equation \eqref{eq:Paulis}. A letter on a boundary node corresponds to one application of the Pauli matrix it labels.} 
\label{fig:FamiliesBoundary}
\end{figure}
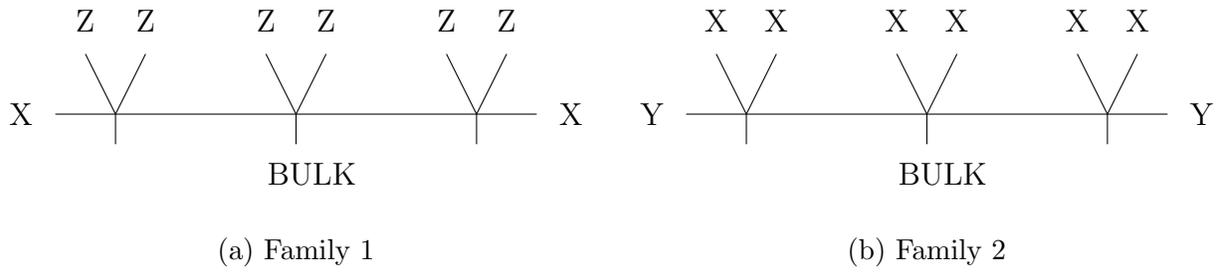

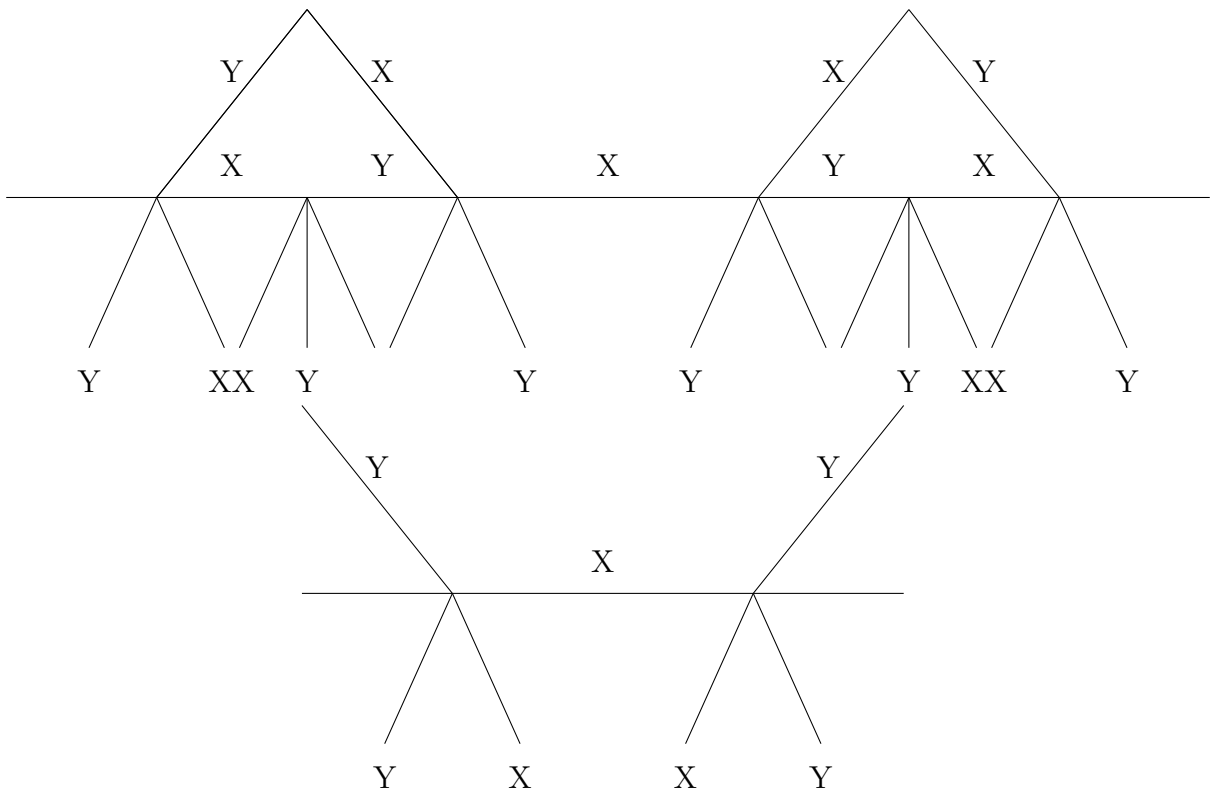
\begin{figure}[H]
\centering
\begin{tikzpicture}
\draw (-8,-2.5)--(-6,-2.5)--node [midway, label= X] {}(-4,-2.5)--node [midway, label= Y] {}(-2,-2.5)--node [midway, label= X] {}(2,-2.5)--node [midway, label= Y] {}(4,-2.5)--node [midway, label= X] {}(6,-2.5)--(8,-2.5);
\draw (-6,-2.5)--node [midway, label= Y] {}(-4,0);
\draw (-4,0)--node [midway, label= X] {}(-2,-2.5);
\draw (2,-2.5)--node [midway, label= X] {}(4,0);
\draw (4,0)--node [midway, label= Y] {}(6,-2.5);
\draw (-6,-2.5)--(-4,0);
\draw (-4,0)--(-2,-2.5);
\draw (-6,-2.5)--(-6-0.9,-4.5);
\draw (-6,-2.5)--(-6+0.9,-4.5);
\draw (-2+0.9,-4.5)--(-2,-2.5);
\draw (-2-0.9,-4.5)--(-2,-2.5);
\draw (-4,-2.5)--(-4+0.9,-4.5);
\draw (-4,-2.5)--(-4,-4.5);
\draw (-4,-2.5)--(-4-0.9,-4.5);
\draw (6,-2.5)--(6+0.9,-4.5);
\draw (6,-2.5)--(6-0.9,-4.5);
\draw (2+0.9,-4.5)--(2,-2.5);
\draw (2-0.9,-4.5)--(2,-2.5);
\draw (4,-2.5)--(4+0.9,-4.5);
\draw (4,-2.5)--(4,-4.5);
\draw (4,-2.5)--(4-0.9,-4.5);
\node[draw=none,label={[label distance=0.1mm]south: Y}] (C) at (-6-0.9,-4.5) {};
\node[draw=none,label={[label distance=0.1mm]south: XX}] (C) at (-5,-4.5) {};
\node[draw=none,label={[label distance=0.1mm]south: Y}] (C) at (-4,-4.5) {};
\node[draw=none,label={[label distance=0.1mm]south: Y}] (C) at (6+0.9,-4.5) {};
\node[draw=none,label={[label distance=0.1mm]south: XX}] (C) at (5,-4.5) {};
\node[draw=none,label={[label distance=0.1mm]south: Y}] (C) at (4,-4.5) {};
\node[draw=none,label={[label distance=0.1mm]south: Y}] (C) at (-2+0.9,-4.5) {};
\node[draw=none,label={[label distance=0.1mm]south: Y}] (C) at (2-0.9,-4.5) {};
\end{tikzpicture}
\begin{tikzpicture}
\draw (-4,0)--node [midway, label= Y] {}(-2,-2.5);
\draw (4,0)--node [midway, label= Y] {}(2,-2.5);
\draw (-4,-2.5)--node [midway, label= X] {}(4,-2.5);
\draw (-2+0.9,-4.5)--(-2,-2.5);
\draw (-2-0.9,-4.5)--(-2,-2.5);
\draw (2+0.9,-4.5)--(2,-2.5);
\draw (2-0.9,-4.5)--(2,-2.5);
\node[draw=none,label={[label distance=0.1mm]south: X}] (C) at (-2+0.9,-4.5) {};
\node[draw=none,label={[label distance=0.1mm]south: X}] (C) at (2-0.9,-4.5) {};
\node[draw=none,label={[label distance=0.1mm]south: Y}] (C) at (-2-0.9,-4.5) {};
\node[draw=none,label={[label distance=0.1mm]south: Y}] (C) at (2+0.9,-4.5) {};
\end{tikzpicture}
\vspace{5mm}
\caption{Pushing the patterns. $X$ and $Y$ are the Pauli matrices of \eqref{eq:Paulis}. YXXY is pushed to YXXY11YY11YXXY, while YY is pushed to YXXY. A letter on a boundary node corresponds to one application of the Pauli matrix it labels. A letter on an internal line corresponds to the insertion to two Paulis labelled by this letter, that square to one but make it explicit that the stabilizer structure of the code is respected.}
\label{patterns}
\end{figure}

\begin{figure}[H]
\centering
\vspace{5mm}
\begin{tikzpicture}[scale=1.35]
\draw (-6,-2.5)-- (-4,-2.5);
\draw (0,-2.5)--(0,-1.5); 
\draw (-4,-2.5)-- node [midway, label=Y] {} (0,-2.5);
\draw (0,-2.5)-- node [midway, label=Y] {} (4,-2.5);
\draw (0,-2.5)-- node [midway, label=X] {} (-2+0.9,-4.5);
\draw (0,-2.5)-- node [midway, label=X] {} (2-0.9,-4.5);
\draw (4,-2.5)-- (6,-2.5);
\draw (-2+0.9,-4.5)-- node [midway, label=\tiny X] {} (0,-4.5);
\draw (2-0.9,-4.5)-- node [midway, label=\tiny X] {} (0,-4.5);
\draw (-2+0.9,-4.5)-- (-2+0.2,-4.5);
\draw (-2+0.2,-4.5)-- node [midway, label=\tiny Y] {} (-2-0.5,-4.5);
\draw (-4+0.9,-4.5)-- (-2-0.5,-4.5);
\draw (-4+0.9,-4.5)-- node [midway, label=\tiny X] {} (-4+0.45,-4.5);
\draw (-4,-4.5)-- node [midway, label=\tiny X] {} (-4+0.45,-4.5);
\draw (-4,-4.5)-- node [midway, label=\tiny Y] {} (-4-0.45,-4.5);
\draw (-4-0.45,-4.5)-- node [midway, label=\tiny X] {} (-4-0.9,-4.5);
\draw (-4-0.9,-4.5) -- (-4-1.5,-4.5);
\draw (2-0.9,-4.5)-- (2-0.2,-4.5);
\draw (2-0.2,-4.5)-- node [midway, label=\tiny Y] {} (2+0.5,-4.5);
\draw (4-0.9,-4.5)-- (2+0.5,-4.5);
\draw (4-0.9,-4.5)-- node [midway, label=\tiny X] {} (4-0.45,-4.5);
\draw (4,-4.5)-- node [midway, label=\tiny X] {} (4-0.45,-4.5);
\draw (4,-4.5)-- node [midway, label=\tiny Y] {} (4+0.45,-4.5);
\draw (4+0.45,-4.5)-- node [midway, label=\tiny X] {} (4+0.9,-4.5);
\draw (4+0.9,-4.5) -- (4+1.5,-4.5);
\draw (-4,-2.5)-- node [midway, label=X] {} (-4+0.9,-4.5);
\draw (-4,-2.5)-- node [midway, label=X] {} (-4,-4.5);
\draw (-4,-2.5)-- node [midway, label=Y] {} (-4-0.9,-4.5);
\draw (4,-2.5)-- node [midway, label=Y] {} (4+0.9,-4.5);
\draw (4,-2.5)-- node [midway, label=X] {} (4,-4.5);
\draw (4,-2.5)-- node [midway, label=X] {} (4-0.9,-4.5);
\draw (-4-0.9-0.2,-6)--(-4-0.9,-4.5);
\draw (-4-0.9+0.1,-6)--(-4-0.9,-4.5);
\draw (-4-0.65,-6)--(-4-0.45,-4.5);
\draw (-4-0.45,-6)--(-4-0.45,-4.5);
\draw (-4-0.25,-6)--(-4-0.45,-4.5);
\draw (-4-0.1,-6)--(-4,-4.5);
\draw (-4+0.1,-6)--(-4,-4.5);
\draw (-4+0.65,-6)--(-4+0.45,-4.5);
\draw (-4+0.45,-6)--(-4+0.45,-4.5);
\draw (-4+0.25,-6)--(-4+0.45,-4.5);
\draw (-4+0.9-0.1,-6)--(-4+0.9,-4.5);
\draw (-4+0.9+0.2,-6)--(-4+0.9,-4.5);
\draw (-2+0.9-0.4,-6)--(-2+0.9,-4.5);
\draw (-2+0.9+0.4,-6)--(-2+0.9,-4.5);
\draw (8-6-0.9-0.4,-6)--(8-6-0.9,-4.5);
\draw (8-6-0.9+0.4,-6)--(8-6-0.9,-4.5);
\draw (8-4-0.9-0.2,-6)--(8-4-0.9,-4.5);
\draw (8-4-0.9+0.1,-6)--(8-4-0.9,-4.5);
\draw (8-4-0.65,-6)--(8-4-0.45,-4.5);
\draw (8-4-0.45,-6)--(8-4-0.45,-4.5);
\draw (8-4-0.25,-6)--(8-4-0.45,-4.5);
\draw (8-4-0.1,-6)--(8-4,-4.5);
\draw (8-4+0.1,-6)--(8-4,-4.5);
\draw (8-4+0.65,-6)--(8-4+0.45,-4.5);
\draw (8-4+0.45,-6)--(8-4+0.45,-4.5);
\draw (8-4+0.25,-6)--(8-4+0.45,-4.5);
\draw (8-4+0.9-0.1,-6)--(8-4+0.9,-4.5);
\draw (8-4+0.9+0.2,-6)--(8-4+0.9,-4.5);
\draw (0,-4.5)--(0,-6);
\draw (0,-4.5)--(0.5,-6);
\draw (0,-4.5)--(-0.5,-6);
\node[draw=none,label={[label distance=0.1mm]south:\tiny YXXY}] (C) at (-4-0.8,-6) {};
\node[draw=none,label={[label distance=0.1mm]south:\tiny YY}] (C) at (-4+0.2,-6) {};
\node[draw=none,label={[label distance=0.1mm]south:\tiny YY}] (C) at (-4+0.75,-6) {};
\node[draw=none,label={[label distance=0.1mm]south:\tiny YY}] (C) at (4-0.2,-6){};
\node[draw=none,label={[label distance=0.1mm]south:\tiny YY}] (C) at (4-0.75,-6){};
\node[draw=none,label={[label distance=0.1mm]south:\tiny YXXY}] (C) at (4+0.8,-6){};
\node[draw=none,label={[label distance=0.1mm]south:\tiny YY}] (C) at (0.6,-6){};
\node[draw=none,label={[label distance=0.1mm]south:\tiny YY}] (C) at (-0.6,-6){};
\end{tikzpicture}
\vspace{3mm}
\caption{Pushing stabilizer strings to start the fractal pattern. The initial stabilizer pattern YXXXXXXY is pushed to a boundary string made of YXXY's and YY's at the next level.}
\label{pushing}
\end{figure}

\begin{figure}[H]
\centering
\begin{tikzpicture}[scale=3.6]
\draw (0,0)--node [midway, label=\tiny Y] {}(1,0);
\draw (0,0)--(0.31,0.95);
\draw (0,0)-- node [midway, label=\tiny Y] {}(-0.81,0.59);
\draw (0,0)-- node [midway, label=\tiny X] {}(0.31,-0.95);
\draw (0,0)--node [midway, label=\tiny X] {}(-0.81,-0.59);
\draw (1,0)--(1.3,0.3);
\draw (1,0)--(1.3,-0.3);
\draw (1,0)--(0.31,0.95);
\draw (1,0)--node [midway, label=\tiny X] {}(0.655,-0.475)--node [midway, label=\tiny Y] {}(0.31,-0.95);
\draw (0.31,0.95)--(-0.81,0.59);
\draw (-0.81,-0.59)--node [midway, label=\tiny X] {}(-0.255,-0.77)--node [midway, label=\tiny X] {}(0.31,-0.95);
\draw (-0.81,-0.59)--node [midway, label=\tiny Y] {}(-0.81,0)--node [midway, label=\tiny X] {}(-0.81,0.59);
\draw[rotate around={72: (0,0)}] (1,0)--(1.3,0.3);
\draw[rotate around={2*72: (0,0)}] (1,0)--(1.3,0.3);
\draw[rotate around={3*72: (0,0)}] (1,0)--(1.3,0.3);
\draw[rotate around={4*72: (0,0)}] (1,0)--(1.3,0.3);
\draw[rotate around={72: (0,0)}] (1,0)--(1.3,-0.3);
\draw[rotate around={2*72: (0,0)}] (1,0)--(1.3,-0.3);
\draw[rotate around={3*72: (0,0)}] (1,0)--(1.3,-0.3);
\draw[rotate around={4*72: (0,0)}] (1,0)--(1.3,-0.3);
\draw (-0.81,0)--(-1,0);
\draw (-0.81,0)--(-1,0.2);
\draw (-0.81,0)--(-1,-0.2);
\draw[rotate around={72: (0,0)}] (-0.81,0)--(-1,0);
\draw[rotate around={2*72: (0,0)}] (-0.81,0)--(-1,0);
\draw[rotate around={3*72: (0,0)}] (-0.81,0)--(-1,0);
\draw[rotate around={4*72: (0,0)}] (-0.81,0)--(-1,0);
\draw[rotate around={72: (0,0)}] (-0.81,0)--(-1,0.2);
\draw[rotate around={2*72: (0,0)}] (-0.81,0)--(-1,0.2);
\draw[rotate around={3*72: (0,0)}] (-0.81,0)--(-1,0.2);
\draw[rotate around={4*72: (0,0)}] (-0.81,0)--(-1,0.2);
\draw[rotate around={72: (0,0)}] (-0.81,0)--(-1,-0.2);
\draw[rotate around={2*72: (0,0)}] (-0.81,0)--(-1,-0.2);
\draw[rotate around={3*72: (0,0)}] (-0.81,0)--(-1,-0.2);
\draw[rotate around={4*72: (0,0)}] (-0.81,0)--(-1,-0.2);
\node[draw=none,label={[label distance=0.1mm]east:\tiny X}] (C) at (1.3,-0.3) {};
\node[draw=none,label={[label distance=0.1mm]east:\tiny Y}] (C) at (1.3,0.3) {};
\node[draw=none,label={[label distance=0.1mm]east:\tiny Y}] (C) at (0,-1.4) {};
\node[draw=none,label={[label distance=0.1mm]east:\tiny X}] (C) at (0.8,-0.4) {};
\node[draw=none,label={[label distance=0.1mm]east:\tiny Y}] (C) at (0.65,-0.6) {};
\node[draw=none,label={[label distance=0.1mm]east:\tiny Y}] (C) at (-0.3,-1.1) {};
\node[draw=none,label={[label distance=0.1mm]east:\tiny Y}] (C) at (-0.75,-0.9) {};
\node[draw=none,label={[label distance=0.1mm]east:\tiny Y}] (C) at (-1.05,-1.1) {};
\node[draw=none,label={[label distance=0.1mm]east:\tiny Y}] (C) at (-1.25,0) {};
\node[draw=none,label={[label distance=0.1mm]east:\tiny X}] (C) at (-1.25,0.3) {};
\node[draw=none,label={[label distance=0.1mm]east:\tiny X}] (C) at (-1.55,0.55) {};
\node[draw=none,label={[label distance=0.1mm]east:\tiny Y}] (C) at (-1.15,1.1) {};
\end{tikzpicture}
\caption{In the special case of the central tensor, the level one stabilizer generator YXXY1 is pushed to YXXY11YY11YXXY11111111. We omit $1$ for convenience in the graph.}
\label{central}
\end{figure}

We now need a systematic way of pushing generators to the boundary. Interestingly, there exists a systematic way to push them into patterns of 
\begin{align}
S_1:=YXXY\quad\text{and}\quad S_2:=YY,
\end{align}
separated by a lot of qubit identity operators. More precisely, in $S_1$, the $X$'s should arise from different nodes, whereas in $S_2$, the $Y$'s should arise from different nodes.
Under those conditions, Figure \ref{patterns} explicitly shows that there is a systematic way to push $S_1$ to
\begin{align}
S_111S_211S_1,
\end{align}
and $S_2$ to $S_1$ with the $S_1$ and $S_2$ patterns in the right cluster configuration, therefore defining a sort of fractal of $S_1$'s and $S_2$'s. In addition, Figure \ref{pushing} below shows how to push stabilizer strings a first time to start the fractal, and Figure \ref{central} treats the specific case of the central tensor. Everything works strictly analogously for the second family of generators.

Then, one can recursively define a sequence $(S_n)$ of mappings of the string $YY$ to the boundary. The first few terms of the sequence $(S_n)$ are:
\begin{subequations} \label{eqs:YYmappings}
\begin{align}
&\ \, YY\\
&\ \, YXXY\\
&\ \, YXXY11YY11YXXY\\
&\begin{array}{l}
YXXY11YY11YXXY1111YXXY1111YXXY11YY11YXXY111111111111\\
11YXXY11YY11YXXY11111111111111YXXY11YY11YXXY1111YXXY\\
1111YXXY11YY11YXXY
\end{array}
\end{align}
\end{subequations}

It can be seen that the proportion of non-trivial Paulis rapidly goes to zero in these patterns, as the construction of the pattern at the next level essentially consists in making holes in the $YXXY$ patterns. This is reminiscent of the construction of fractals like the Cantor set, and interestingly echoes the concept of \textit{uberholography} introduced by Pastawski and Preskill in \cite{Pastawski:2016qrs}: a local operator in the bulk - in this case, the identity - is encoded in a fractal on the boundary. Note that our choice of bulk to boundary map as expressed in the figures of this section is the least disentangling, as it minimizes the size of the boundary region towards which the operators are pushed. It is then natural to see this map as a particularly nice \textit{entanglement wedge reconstruction} map: indeed, the smallest connected region of the boundary that supports its image will be dual to the bulk region on which our map acts nontrivially, and in analogy with the AdS/CFT correspondence, we shall call this region the entanglement wedge of the corresponding boundary region.

In order to illustrate the construction of stabilizer generators at each level, let us explicitly give the generators of the HaPPY code for the two first levels. 

At level one, we have 5 boundary qubits and 1 bulk qubit, therefore the number of generators is 
\begin{align}
g_1=\widetilde{N}_1-N_1=4,
\end{align}
and the HaPPY code reduces to the five-qubit code. The generators are four of the five cyclic permutations of the Pauli string $YXXY1$.

At level two, there are 14 generators, as specified in Table \ref{table:anbnhnhtn}. These include the $2b_2=10$ generators that can be obtained from sticking the two patterns of Figure \ref{fig:FamiliesBoundary} on each side of the pentagon of Figure \ref{fig:level1tensornetwork}. The four remaining generators correspond to the four pushes of the central stabilizer generators, which are explicitly illustrated on Figure \ref{central}, up to rotation. We list the 14 generators for the level two in Table \ref{tb:14generators}.\\

\begin{table}[H]
\arraycolsep=8pt\def\arraystretch{1.3}
\begin{center}
$
\begin{array}{|c|}
\hline
\text{Generators}\\
\hline
\hline
 11YXXXXXXY111111111111111\\
 1111111YXXXXXXY1111111111\\
 111111111111YXXXXXXY11111\\
 11111111111111111YXXXXXXY\\
 XXXXY11111111111111111YXX\\
 11XZZZZZZX111111111111111\\
 1111111XZZZZZZX1111111111\\
 111111111111XZZZZZZX11111\\
 11111111111111111XZZZZZZX\\
 ZZZZX11111111111111111XZZ\\
 YXXY11YY1YY11YXXY11111111\\
 11111YXXY11YY1YY11YXXY111\\
 XY11111111YXXY11YY1YY11YX\\
 Y1YY11YXXY11111111YXXY11Y\\
\hline
\end{array}
$
\caption{14 generators for the level two HaPPY code.}
\label{tb:14generators}
\end{center}
\end{table}

\section{The trapeze Hamiltonian}
\label{sec:trapezeHamiltonian}
The main goal of this section is to construct a bulk theory which mimics local quantum field theory in curved spacetime on the HaPPY code. 
In this section, we construct the dynamics of a local field theory in the bulk, which we explicitly map to the boundary. In order to do so, we introduce an Ising-like bulk Hamiltonian, and then show that it maps to an interesting fractal boundary pattern through the tensor network. With this Hamiltonian, we study the bulk physics.

\subsection{Trapeze operators and the bulk Hamiltonian}
The first thing to do is to define the bulk Hamiltonian. We will want the bulk field theory to be local, as is expected of quantum field theory in a curved background in the semiclassical limit of AdS/CFT. In spirit, we want to find an Ising-like interaction between the bulk qubits, which maps to the boundary in a simple enough way. It turns out that a particularly simple choice of such an interaction is the \textit{trapeze interaction}, which is a local interaction between 8 spins. We define a \textit{trapeze} in the bulk to be any interaction given by eight spins: two neighboring spins on the same bulk level, four neighboring descendants of those two spins, and the two 3-cluster parents in between. Figure \ref{fig:bulktrapeze} gives examples of such trapezes.\\

\begin{figure}[H]
\centering
\begin{tikzpicture}
\draw (-5,1)--(-4,0);
\draw (5,1)--(4,0);
\draw (-8,0)--(8,0);
\draw (-8,-2.5)--(8,-2.5);
\draw (-8,-4.5)--(8,-4.5);
\draw (-4,0)--(4,0);
\draw (-6,-2.5)--(-4,0);
\draw (-4,0)--(-2,-2.5);
\draw (2,-2.5)--(4,0);
\draw (4,0)--(6,-2.5);
\draw (-6,-2.5)--(-4,0);
\draw (-4,0)--(-2,-2.5);
\draw (-6,-2.5)--(-6-0.9,-4.5);
\draw (-6,-2.5)--(-6+0.5,-4.5);
\draw (-2+0.9,-4.5)--(-2,-2.5);
\draw (-2-0.5,-4.5)--(-2,-2.5);
\draw (-4,-2.5)--(-4+0.9,-4.5);
\draw (-4,-2.5)--(-4,-4.5);
\draw (-4,-2.5)--(-4-0.9,-4.5);
\draw (6,-2.5)--(6+0.9,-4.5);
\draw (6,-2.5)--(6-0.5,-4.5);
\draw (2+0.5,-4.5)--(2,-2.5);
\draw (2-0.9,-4.5)--(2,-2.5);
\draw (4,-2.5)--(4+0.9,-4.5);
\draw (4,-2.5)--(4,-4.5);
\draw (4,-2.5)--(4-0.9,-4.5);
\draw (-6-0.9-0.4,-6)--(-6-0.9,-4.5);
\draw (-6-0.9+0.2,-6)--(-6-0.9,-4.5);
\draw (-6-0.2-0.3,-6)--(-6-0.2,-4.5);
\draw (-6-0.2,-6)--(-6-0.2,-4.5);
\draw (-6-0.2+0.3,-6)--(-6-0.2,-4.5);
\draw (-6+0.5-0.2,-6)--(-6+0.5,-4.5);
\draw (-6+0.5+0.2,-6)--(-6+0.5,-4.5);
\draw (-4-0.9-0.2,-6)--(-4-0.9,-4.5);
\draw (-4-0.9+0.1,-6)--(-4-0.9,-4.5);
\draw (-4-0.65,-6)--(-4-0.45,-4.5);
\draw (-4-0.45,-6)--(-4-0.45,-4.5);
\draw (-4-0.25,-6)--(-4-0.45,-4.5);
\draw (-4-0.1,-6)--(-4,-4.5);
\draw (-4+0.1,-6)--(-4,-4.5);
\draw (-4+0.65,-6)--(-4+0.45,-4.5);
\draw (-4+0.45,-6)--(-4+0.45,-4.5);
\draw (-4+0.25,-6)--(-4+0.45,-4.5);
\draw (-4+0.9-0.1,-6)--(-4+0.9,-4.5);
\draw (-4+0.9+0.2,-6)--(-4+0.9,-4.5);
\draw (-2-0.5-0.2,-6)--(-2-0.5,-4.5);
\draw (-2-0.5+0.2,-6)--(-2-0.5,-4.5);
\draw (-2+0.2-0.3,-6)--(-2+0.2,-4.5);
\draw (-2+0.2,-6)--(-2+0.2,-4.5);
\draw (-2+0.2+0.3,-6)--(-2+0.2,-4.5);
\draw (-2+0.9-0.2,-6)--(-2+0.9,-4.5);
\draw (-2+0.9+0.4,-6)--(-2+0.9,-4.5);
\draw (8-6-0.9-0.4,-6)--(8-6-0.9,-4.5);
\draw (8-6-0.9+0.2,-6)--(8-6-0.9,-4.5);
\draw (8-6-0.2-0.3,-6)--(8-6-0.2,-4.5);
\draw (8-6-0.2,-6)--(8-6-0.2,-4.5);
\draw (8-6-0.2+0.3,-6)--(8-6-0.2,-4.5);
\draw (8-6+0.5-0.2,-6)--(8-6+0.5,-4.5);
\draw (8-6+0.5+0.2,-6)--(8-6+0.5,-4.5);
\draw (8-4-0.9-0.2,-6)--(8-4-0.9,-4.5);
\draw (8-4-0.9+0.1,-6)--(8-4-0.9,-4.5);
\draw (8-4-0.65,-6)--(8-4-0.45,-4.5);
\draw (8-4-0.45,-6)--(8-4-0.45,-4.5);
\draw (8-4-0.25,-6)--(8-4-0.45,-4.5);
\draw (8-4-0.1,-6)--(8-4,-4.5);
\draw (8-4+0.1,-6)--(8-4,-4.5);
\draw (8-4+0.65,-6)--(8-4+0.45,-4.5);
\draw (8-4+0.45,-6)--(8-4+0.45,-4.5);
\draw (8-4+0.25,-6)--(8-4+0.45,-4.5);
\draw (8-4+0.9-0.1,-6)--(8-4+0.9,-4.5);
\draw (8-4+0.9+0.2,-6)--(8-4+0.9,-4.5);
\draw (8-2-0.5-0.2,-6)--(8-2-0.5,-4.5);
\draw (8-2-0.5+0.2,-6)--(8-2-0.5,-4.5);
\draw (8-2+0.2-0.3,-6)--(8-2+0.2,-4.5);
\draw (8-2+0.2,-6)--(8-2+0.2,-4.5);
\draw (8-2+0.2+0.3,-6)--(8-2+0.2,-4.5);
\draw (8-2+0.9-0.2,-6)--(8-2+0.9,-4.5);
\draw (8-2+0.9+0.4,-6)--(8-2+0.9,-4.5);
\draw[draw,line width=1mm] (-2,-2.5)--(2,-2.5);
\draw[draw,line width=1mm] (-6,-2.5)--(-4,-2.5);
\draw[draw,line width=1mm] (4,-2.5)--(6,-2.5);
\draw[draw,line width=1mm] (-2,-2.5)--(-2+0.9,-4.5);
\draw[draw,line width=1mm] (-2,-2.5)--(-2-0.5,-4.5);
\draw[draw,line width=1mm] (2,-2.5)--(2-0.9,-4.5);
\draw[draw,line width=1mm] (2,-2.5)--(2+0.5,-4.5);
\draw[draw,line width=1mm] (-6,-2.5)--(-6-0.9,-4.5);
\draw[draw,line width=1mm] (-6,-2.5)--(-6+0.5,-4.5);
\draw[draw,line width=1mm] (-4,-2.5)--(-4-0.9,-4.5);
\draw[draw,line width=1mm] (-4,-2.5)--(-4,-4.5);
\draw[draw,line width=1mm] (6,-2.5)--(6+0.9,-4.5);
\draw[draw,line width=1mm] (6,-2.5)--(6-0.5,-4.5);
\draw[draw,line width=1mm] (4,-2.5)--(4+0.9,-4.5);
\draw[draw,line width=1mm] (4,-2.5)--(4,-4.5);
\draw[draw,line width=1mm] (-6-0.9,-4.5)--(-4,-4.5);
\draw[draw,line width=1mm] (-2-0.5,-4.5)--(2.5,-4.5);
\draw[draw,line width=1mm] (4,-4.5)--(6+0.9,-4.5);
\end{tikzpicture}
\caption{Trapeze shapes (in bold) and how they arise in the bulk.}
\label{fig:bulktrapeze}
\end{figure}
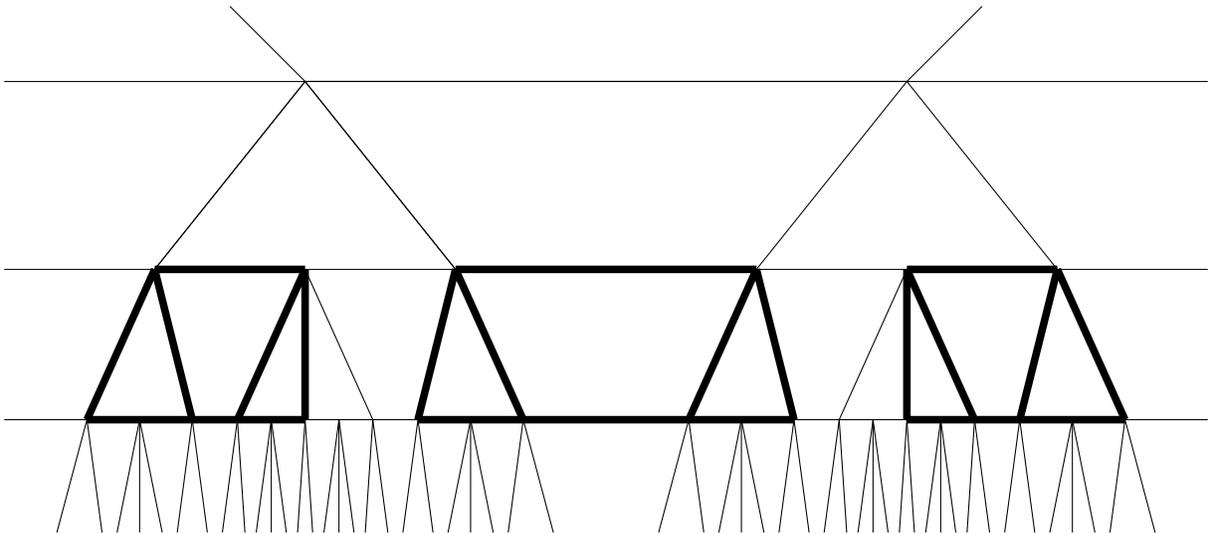

Once a trapeze is chosen, the corresponding \textit{trapeze interaction} is encoded by the tensor product of the $Z$ Pauli matrix applied to every trapeze site. It is the exact analog of an Ising interaction in the case of eight qubits, which makes it more degenerate but identical in spirit. The choice of such a trapeze interaction is motivated by the fact that it maps very simply to the closest layer of the bulk to the boundary, as shown in Figure \ref{fig:firstlevel}.\\

\begin{figure}[H]
\centering
\begin{tikzpicture}
\draw (-4,0)--(-4,1);
\draw (4,0)--(4,1);
\draw (-8,0)--(8,0);
\draw (-8,-2.5)--(-6,-2.5)--node [midway, label=\tiny Z] {}(-4,-2.5)--node [midway, label=\tiny Y] {}(-2,-2.5)--node [midway, label=\tiny Y] {}(2,-2.5)--node [midway, label=\tiny Y] {}(4,-2.5)--node [midway, label=\tiny Z] {}(6,-2.5)--(8,-2.5);
\draw (-4,0)--(4,0);
\draw (-6,-2.5)--node [midway, label=\tiny Y] {}(-4,0);
\draw (-4,0)--node [midway, label=\tiny Z] {}(-2,-2.5);
\draw (2,-2.5)--node [midway, label=\tiny Z] {}(4,0);
\draw (4,0)--node [midway, label=\tiny Y] {}(6,-2.5);
\draw (-6,-2.5)--(-4,0);
\draw (-4,0)--(-2,-2.5);
\draw (-6,-2.5)--(-6-0.9,-4.5);
\draw (-6,-2.5)--(-6+0.9,-4.5);
\draw (-2+0.9,-4.5)--(-2,-2.5);
\draw (-2-0.9,-4.5)--(-2,-2.5);
\draw (-4,-2.5)--(-4+0.9,-4.5);
\draw (-4,-2.5)--(-4,-4.5);
\draw (-4,-2.5)--(-4-0.9,-4.5);
\draw (6,-2.5)--(6+0.9,-4.5);
\draw (6,-2.5)--(6-0.9,-4.5);
\draw (2+0.9,-4.5)--(2,-2.5);
\draw (2-0.9,-4.5)--(2,-2.5);
\draw (4,-2.5)--(4+0.9,-4.5);
\draw (4,-2.5)--(4,-4.5);
\draw (4,-2.5)--(4-0.9,-4.5);
\node[draw,circle,thick,scale=1,black,label={[label distance=1mm]south:Z}] (C) at (-4,0) {};
\node[draw,circle,thick,scale=1,black,label={[label distance=1mm]south:Z}] (C) at (4,0) {};
\node[draw,circle,thick,scale=1,black,label={[label distance=1mm]north:Z}] (C) at (-6,-2.5) {};
\node[draw,circle,thick,scale=1,black,label={[label distance=1mm]north:Z}] (C) at (-2,-2.5) {};
\node[draw,circle,thick,scale=1,black,label={[label distance=1mm]north:Z}] (C) at (2,-2.5) {};
\node[draw,circle,thick,scale=1,black,label={[label distance=1mm]north:Z}] (C) at (6,-2.5) {};
\node[draw,circle,thick,scale=1,black,label={[label distance=1mm]north:Z}] (C) at (-4,-2.5) {};
\node[draw,circle,thick,scale=1,black,label={[label distance=1mm]north:Z}] (C) at (4,-2.5) {};
\node[draw=none,label={[label distance=0.5mm]south:YY}] (C) at (-5,-4.5) {};
\node[draw=none,label={[label distance=0.5mm]south:YY}] (C) at (5,-4.5) {};
\end{tikzpicture}
\caption{Mapping a trapeze interaction to the closest layer of the bulk to the boundary. The trapeze interaction is represented by the $Z$ Pauli matrices acting on the bulk nodes. It is mapped onto the first boundary level through operator pushing, and just gives two pairs of $Y$ Pauli matrices.}
\label{fig:firstlevel}
\end{figure}
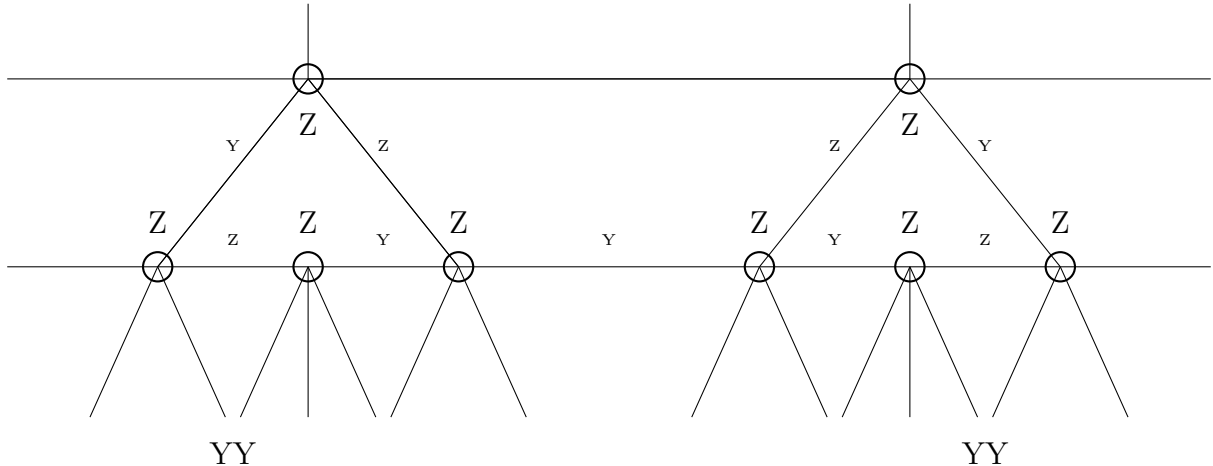

The problem of mapping a trapeze operator to the boundary therefore amounts to mapping the Pauli string $YY$, coming from two neighboring different bulk nodes, to the boundary. This will give rise to an interesting fractal pattern coming from the stabilizer structure, which will be our first step towards a qualitative understanding of the bulk-to-boundary map.

\subsection{Trapeze field theory: a mean-field approximation}
Before giving more details on the bulk-to-boundary map, we now show that the trapeze operator is the building block of a field theory in the bulk which has nontrivial behavior. In order to do so, we prove that it exhibits long range correlations, as a phase transition will happen in the bulk.

Let us start with the bulk Hamiltonian $$H_{\mathrm{bulk}}:=\sum_{\mathrm{trapezes}}\ \prod_{i\in\mathrm{trapeze}}Z_i.$$ The idea is to perform a mean field theory analysis of the bulk Hamiltonian and see whether it exhibits a phase transition. The difficulty here is that unlike in the Ising model, all bulk nodes don't have an identical role in the interactions. Here is our setup, which is a bit involved as we need need to distinguish five different kinds of bulk qubit variables:
\begin{itemize}
\item$s_1$ is the mean magnetization of 3-cluster parents in between two 3-cluster descendants,
\item$s_2$ is the mean magnetization of 3-cluster parents in between two 2-cluster descendants,
\item$s_3$ is the mean magnetization of 2-cluster descendants,
\item$s_4$ is the mean magnetization of 3-cluster extreme descendants,
\item$s_5$ is the mean magnetization of 3-cluster middle descendants.
\end{itemize}

Then the plan is simply to look at all the trapezes that each spin is in, and do regular mean field theory! There is one remaining issue: some spins will be in the upper level of the considered spin and have to be controlled. For this we have to introduce parameters $p$, $q$ and $r$ which are respectively the asymptotic proportions of 2-cluster descendants, 3-cluster extreme descendants and 3-cluster middle descendants. Then the unknown upper spins (and their consequences for the network) are taken into account by a weighted average over $p$, $q$ and $r$. Then in some additional specific cases for the 2-cluster, we will require the probability, that we denote by $\alpha$, of a 2-cluster descendant having a 3-cluster descendant next to it. Counting arguments similar to the ones in Section \ref{sec:AsymototesHn} yield:
\begin{align}
p=\frac{3-\sqrt{3}}{2},\quad q=\frac{\sqrt{3}-1}{3},\quad r=\frac{\sqrt{3}-1}{6},\quad \alpha=\frac{5+\sqrt{3}}{12}.
\end{align}
This analysis yields the following system:
\begin{subequations} \label{eqs:system}
\begin{align}
s_1&=\mathrm{tanh}\left( \,\beta\, \left(ps_2^2s_3^3s_4s_5+qs_1s_2s_3^2s_4^2s_5+rs_1s_2s_3^2s_4s_5^2+s_1s_2s_3^2s_4s_5^2+s_1s_2s_3^2s_4^2s_5\right) \right),\label{eq:s1}\\
s_2&=\mathrm{tanh}\left(\,\beta\,\left(2s_1s_2s_3^3s_4s_5+p(s_1s_2s_3^3s_4s_5+\alpha s_2s_3^5s_4+(1-\alpha)s_2s_3^6)\right.\right.\nonumber \\
&\quad\quad\quad\quad\quad\quad\quad\quad\quad\quad\quad\quad\quad\quad\quad\quad\quad\quad\left.\left.+q(s_1^2s_3^2s_4^2s_5+s_2s_3^5s_4)+2rs_1^2s_3^2s_4s_5^2\right)\right),\label{eq:s2}\\
s_3&=\mathrm{tanh}\left(\,\beta\,\left(s_1s_2^2s_3^2s_4s_5+p(s_1s_2^2s_3^2s_4s_5+s_2^2s_3^4\left(\frac{s_3+s_4}{2}\right)+\alpha s_2^2s_3^4s_4+(1-\alpha)s_2^2s_3^5)\right.\right.\nonumber \\
&\quad\quad\quad\quad\quad\quad\left.\left.+q(s_2^2s_3^4s_4+s_1^2s_2s_3s_4^2s_5+s_2^2s_3^4\left(\frac{s_3+s_4}{2}\right))+r(2s_1s_2^2s_3s_4s_5^2+s_2^2s_3^4s_4)\right)\right),\label{eq:s3}\\
s_4&=\mathrm{tanh}\left(\,\beta\,\left(s_1^2s_2s_3^2s_4s_5+s_2^2s_3^5+ps_1s_2^2s_3^3s_5+qs_1^2s_2s_3^2s_4s_5+rs_1^2s_2s_3^2s_5^2\right)\right),\label{eq:s4}\\
s_5&=\mathrm{tanh}\left(\,\beta\,\left(2s_1^2s_2s_3^2s_4s_5+2ps_1s_2^2s_3^3s_4+2qs_1^2s_2s_3^2s_4^2+2rs_1^2s_2s_3^2s_4s_5\right)\right) ,\label{eq:s5}
\end{align}
\end{subequations}
where $\beta$ is the inverse temperature.

We give a derivation of this system below. We derive the five magnetizations $s_1$, $s_2$, $s_3$, $s_4$, and $s_5$ individually in ascending order.

For $s_1$, we consider a given 3-cluster parent inside two 3-cluster descendants. It forms two trapezes with the lower boundary level, each made of half the corresponding 3-cluster and one of the two neighboring 2-clusters. These two trapezes contribute respectively 
\begin{align*}
& s_1s_2s_3^2s_4s_5^2\quad \times \quad \text{the spin of our bulk node},\\
& s_1s_2s_3^2s_4^2s_5\quad \times \quad \text{the spin of our bulk node}.
\end{align*}
 Then, the qubit is involved in a third interaction with the half 3-cluster of the upper level and the 2-cluster next to it. Depending if this 2-cluster comes from a 2-cluster (probability $p$), a 3-cluster extreme (probability $q$) or the center of a 3-cluster (probability $r$), this trapeze contributes respectively 
\begin{align*}
& ps_2^2s_3^3s_4s_5\quad \times \quad \text{the spin of our bulk node},\\
& qs_1s_2s_3^2s_4^2s_5\quad \times \quad \text{the spin of our bulk node},\\
& rs_1s_2s_3^2s_4s_5^2\quad \times \quad \text{the spin of our bulk node}.
\end{align*}
 We then sum all three contributions and multiply by minus the spin. (The spin is either $+1$ or $-1$.) Taking a thermal average at inverse temperature $\beta$ then yields $s_1$ as presented in equation \eqref{eq:s1}. 
 
 For $s_2$, the situation is a bit more complicated. Let us consider a given 3-cluster parent inside two 2-cluster descendants. It is involved in two interactions with the lower level, containing half the 3-cluster, and the corresponding neighboring 2-cluster. These two interactions lead to the contribution $$2s_1s_2s_3^3s_4s_5\quad \times \quad \text{the spin of our bulk node}.$$In order to describe the interactions involving the upper level, the upper 2-cluster parent can itself be
 \begin{enumerate}
\item	a descendant of a 2-cluster with probability $p$,
\item	the extreme descendant of a 3-cluster with probability $q$, 
\item	the middle descendant of a 3-cluster with probability $r$. 
\end{enumerate}
In the first case, our bulk qubit of interest will be involved in a trapeze interaction that involves a 2-cluster and a 3-cluster, and another one that involves two 2-clusters. The first trapeze always contributes $$s_1s_2s_3^3s_4s_5\quad \times \quad \text{the spin of our bulk node},$$ while the other one contributes in a different way depending on the nature of the parent of the second 2-cluster, which is a 2-cluster descendent with probability $\alpha$, and a 3-cluster extreme descendent with probability $1-\alpha$. The contribution of this trapeze is then $$\alpha s_2s_3^5s_4+(1-\alpha)s_2s_3^6\quad \times \quad \text{the spin of our bulk node}.$$ The overall contribution of this first case, up to multiplication by the spin of our bulk node, is then given by
\begin{align}
1:\quad p(s_1s_2s_3^3s_4s_5+\alpha s_2s_3^5s_4+(1-\alpha)s_2s_3^6).
\end{align}
In the second case, our bulk qubit is involved in two upper level interactions, formed respectively by one 2-cluster and one 3-cluster (contribution $qs_1^2s_3^2s_4^2s_5$) and two 2-clusters, the other 2-cluster parent being a 2-cluster descendent itself (contribution $qs_2s_3^5s_4$). The overall contribution of this second case, up to multiplication by the spin of our bulk node, is then: 
\begin{align}
2:\quad q(s_1^2s_3^2s_4^2s_5+s_2s_3^5s_4).
\end{align}
Finally, in the third case, our bulk qubit is involved in two identical interactions between a 2-cluster and half a 3-cluster, the contribution up to the bulk node spin is then:
\begin{align}
3:\quad 2rs_1^2s_3^2s_4s_5^2.
\end{align}
Summing all these contributions and then taking a thermal average at inverse temperature $\beta$ over the bulk qubit gives equation \eqref{eq:s2}.

For $s_3$, it is the most complicated case. Let us consider a given 2-cluster descendant. This descendent can contribute to the trapeze Hamiltonian in several ways. First, it will always interact with the trapeze it forms with its neighboring 3-cluster parent and the lower level, which contributes 
$$s_1s_2^2s_3^2s_4s_5\quad \times \quad \text{the spin of our bulk node}.$$ 
Then, for the 3 other trapeze interactions, we need to separate several cases: the parent 2-cluster node of our spin can itself be 
\begin{enumerate}
\item	a descendant of a 2-cluster with probability $p$,
\item	the extreme descendant of a 3-cluster with probability $q$, 
\item	the middle descendant of a 3-cluster with probability $r$. 
\end{enumerate}

Let us first consider the first case: a descendant of a 2-cluster with probability $p$. In this case, there are two situations we have to consider:
\begin{enumerate}[(1)]
\item	the spin of interest lies outside its 2-cluster grandparent,
\item	the spin of interest lies inside its 2-cluster grandparent. 
\end{enumerate}
If it lies outside (probability $\frac{1}{2}$), then it will be interacting in three different trapezes. 
On the upper level, they will respectively contain a 2-cluster and a 3-cluster, and two 2-clusters. The former trapeze always contributes 
$$1(1)\text{ upper level -- a 2-cluster and a 3-cluster}:\quad s_1s_2^2s_3^2s_4s_5,$$ 
while the latter's contribution depends on the nature of the parent of the second 2-cluster, which is a 2-cluster descendant with probability $\alpha$, and a 3-cluster extreme descendant with probability $1-\alpha$. The contribution for the latter is then
$$1(1)\text{ upper level -- two 2-clusters}:\quad \alpha s_2^2s_3^4s_4+(1-\alpha)s_2^2s_3^5.$$ 
Finally, on the lower level, our spin interacts with another trapeze made out of 2-clusters, which this time contributes 
$$1(1)\text{ lower level}:\quad s_2^2s_3^5.$$ 
If our spin lies inside the parent 2-cluster (probability $\frac{1}{2}$), then on the upper level, it participates in the two same trapeze interactions as before:
\begin{align*}
&1(2)\text{ upper level -- a 2-cluster and a 3-cluster}:\quad s_1s_2^2s_3^2s_4s_5 ,\\
&1(2)\text{ upper level -- a 2-cluster and a 3-cluster}:\quad \alpha s_2^2s_3^4s_4+(1-\alpha)s_2^2s_3^5 . 
\end{align*}
However, the lower trapeze interaction is still made of two 2-clusters, but its contribution is slightly different as 
$$1(2)\text{ lower level}:\quad s_2^2s_3^4s_4.$$
Hence, the overall contribution of the first case is 
\begin{align}
1:\quad p\left(s_1s_2^2s_3^2s_4s_5+s_2^2s_3^4\left(\frac{s_3+s_4}{2}\right)+\alpha s_2^2s_3^4s_5+(1-\alpha)s_2^2s_3^5\right).
\end{align}
Now consider the second case: the extreme descendant of a 3-cluster with probability $q$. Similarly from the previous case, we consider contributions from  
\begin{enumerate}[(1)]
\item	the qubit lies outside of its 3-cluster grandparent,
\item	the qubit lies inside of its 3-cluster grandparent.
\end{enumerate}
If it lies outside, then on the upper level, the trapezes respectively contribute 
$$2(1)\text{ upper level}:\quad s_2^2 s_3^4s_4\quad\text{and}\quad s_1^2s_2s_3s_4^2s_5,$$
while the lower one contributes 
$$2(1)\text{ lower level}:\quad s_2^2 s_3^5.$$
If it lies inside, then the trapezes on the upper level are unchanged, 
$$2(2)\text{ upper level}:\quad s_2^2 s_3^4s_4\quad\text{and}\quad s_1s_2s_3s_4^2s_5^2,$$
but on the lower level, the interaction becomes 
$$2(2)\text{ lower level}:\quad s_2^2s_3^4s_4.$$ 
Therefore, the overall contribution of the second case is 
\begin{align}
2:\quad q\left(s_2^2s_3^4s_4+s_1s_2s_3s_4^2s_5+s_2^2s_3^4\left(\frac{s_3+s_4}{2}\right)\right).
\end{align}
Finally, we consider the third case: the middle descendant of a 3-cluster with probability $r$. In this case, there is only one possibility: the two upper level trapezes contribute 
$$3\text{ upper level}:\quad s_1s_2^2s_3s_4s_5^2,$$
whilst the lower one contributes 
$$3\text{ lower level}:\quad s_2^2s_3^4s_4.$$
The overall contribution of the third case is
\begin{align}
3:\quad r(2s_1s_2^2s_3s_4s_5^2+s_2^2s_3^4s_4).
\end{align}
We sum all three contributions and multiply by minus the spin. Taking a thermal average at inverse temperature $\beta$ then yields $s_3$ as presented in equation \eqref{eq:s3}. 

For $s_4$, the situation is simpler. Consider a 3-cluster extreme descendent in the bulk. It will be in two trapeze interactions with the lower level, one involving two 2-clusters and one involving one 2-cluster and half a 3-cluster. They respectively contribute
\begin{align*}
& s_2^2s_3^5\quad \times \quad \text{the spin of our bulk node},\\
& s_1^2s_2s_3^2s_4s_5\quad \times \quad \text{the spin of our bulk node}.
\end{align*}
On the upper level, our spin will be involved in one trapeze interaction between a 2-cluster and half the 3-cluster parent. The form of this interaction will depend on the 2-cluster parent, which can be a 2-cluster descendant (probability $p$), a 3-cluster extreme descendant (probability $q$), or a 3-cluster middle descendant (probability $r$). Their contributions are respectively:
\begin{align*}
& ps_1s_2^2s_3^3s_5\quad \times \quad \text{the spin of our bulk node},\\
& qs_1^2s_2s_3^2s_4s_5\quad \times \quad \text{the spin of our bulk node},\\
& rs_1^2s_2s_3^2s_5^2\quad \times \quad \text{the spin of our bulk node}.
\end{align*}
Summing all these contributions and taking a thermal average at inverse temperature $\beta$ yields equation \eqref{eq:s4}.

Finally, $s_5$ is the most symmetric case. Consider a 3-cluster middle descendent in the bulk. It will participate in two identical interactions with the lower level, which each involve the underlying 2-cluster and one of its neighboring 3-clusters. They will contribute 
$$2s_1^2s_2s_3^2s_4s_5\quad \times \quad \text{the spin of our bulk node}.$$ 
Then the interactions with the upper level will also be identical, and each involve a neighboring 2-cluster. Their form will as usual depend on the nature of the neighboring 2-cluster parent, and depending, will yield contributions of the forms 
\begin{align*}
& 2ps_1s_2^2s_3^3s_4\quad \times \quad \text{the spin of our bulk node},\\
& 2qs_1^2s_2s_3^2s_4^2\quad \times \quad \text{the spin of our bulk node},\\
& 2rs_1^2s_2s_3^2s_4s_5\quad \times \quad \text{the spin of our bulk node}.
\end{align*}
We sum all these contributions and then take a thermal average at inverse temperature $\beta$. This yields the last magnetization $s_5$ of the system as presented in equation \eqref{eq:s5}.

Thus we have derived the system of five mean magnetizations $s_1$, $s_2$, $s_3$, $s_4$, and $s_5$, given by \cref{eqs:system}.

\subsection{Bulk phase transition}
A numerical analysis shows that the system of five magnetizations $s_1$, $s_2$, $s_3$, $s_4$, and $s_5$, given by \cref{eqs:system}, has a nontrivial solution only at low enough temperature, and therefore exhibits a phase transition. The graph being two-dimensional, we can't expect mean-field theory results to be exact, but we expect that its connectivity is high enough that it still exhibits the right qualitative behavior. As an example, we give a possible evolution of the average total magnetizations $s_1 s_2 s_3 s_4 s_5$ with the inverse temperature $\beta$ on Figure \ref{fig:trans}. 

\begin{figure}[H]
    \centering
    \includegraphics[scale=0.56]{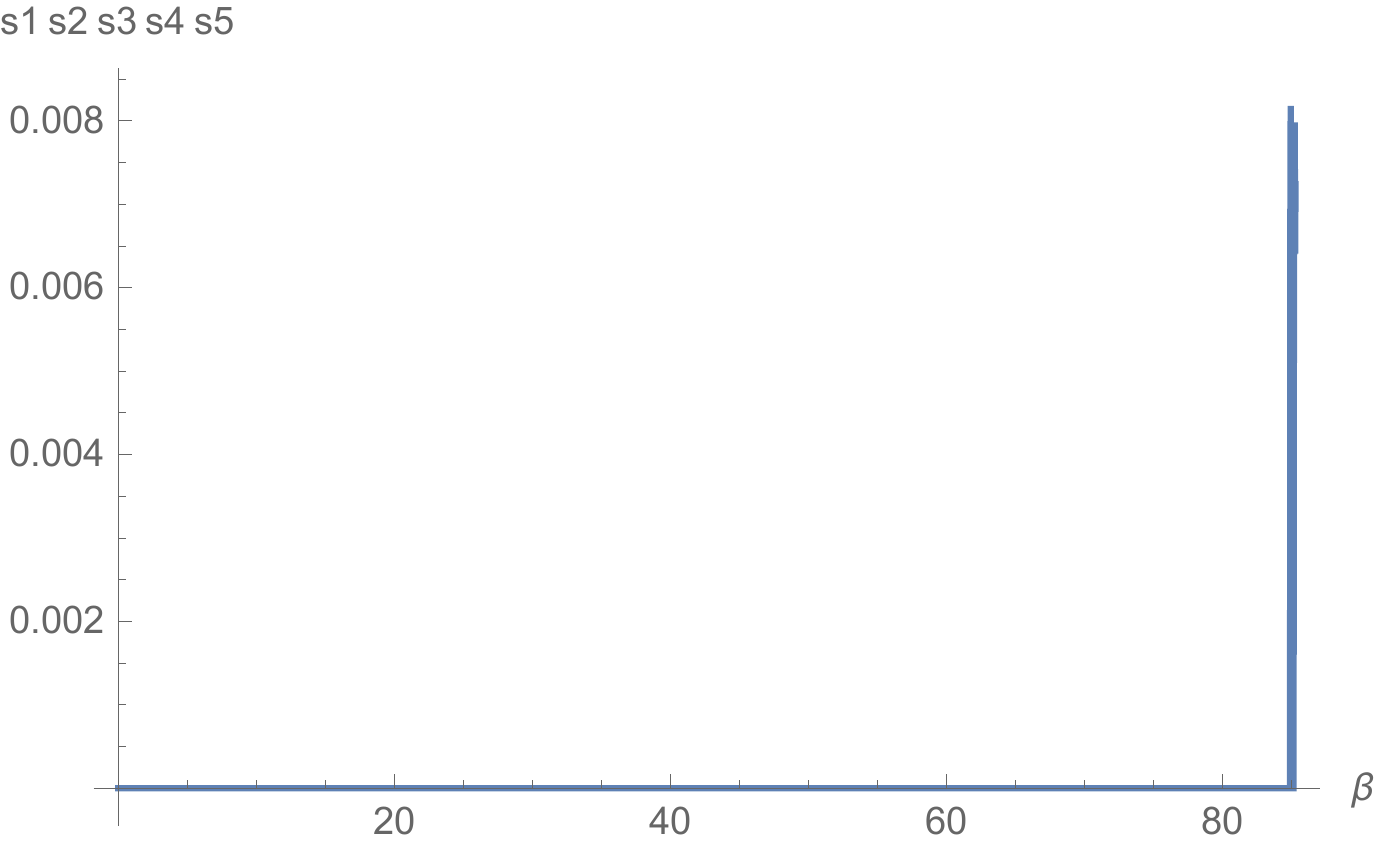}
    \includegraphics[scale=0.56]{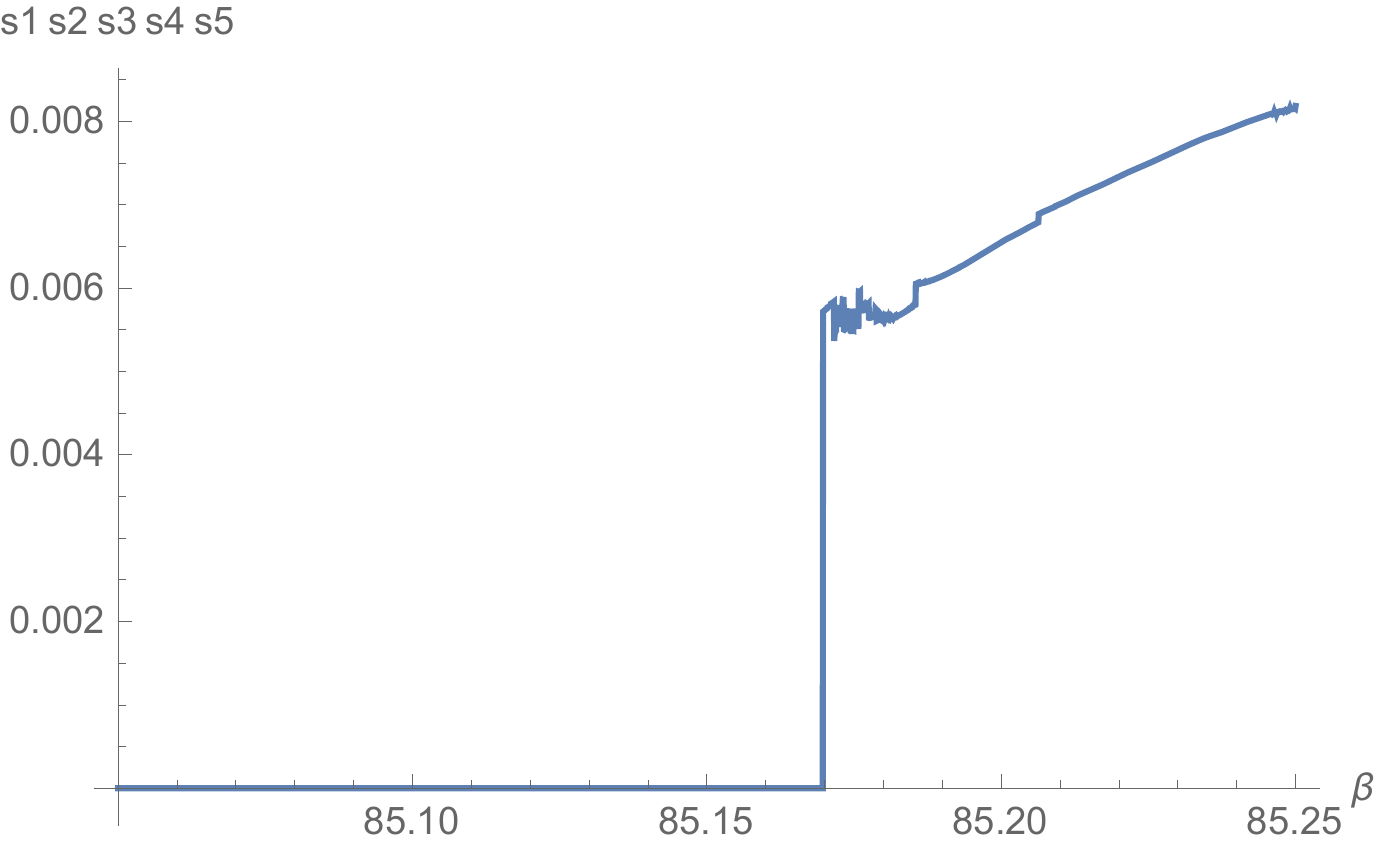}
    \caption{A possible outcome for the average total magnetization $s_1s_2s_3s_4s_5$ of trapeze mean field theory in the bulk in terms of the inverse temperature $\beta$. At low $\beta$, the curve is identically zero and then jumps to a nonzero value around $\beta=85.17$, which is characteristic of a phase transition. The plot on the right is zoomed in on the same average total magnetization in terms of the inverse temperature around the critical point.}
    \label{fig:trans}
\end{figure}

\section{The bulk-to-boundary map: the trapeze Hamiltonian}
\label{sec:bulktoboundarytrapeze}
In Section \ref{sec:trapezeHamiltonian}, we constructed the trapeze Hamiltonian that gives rise to a physical theory in the bulk and showed that it exhibits a phase transition. In this section, we study how this theory maps to the boundary. Mapping a trapeze operator to the boundary is done by mapping separately the two pairs of $Y$, and then taking the tensor product of the mapping. From Section \ref{sec:BoundaryStabilizerGen}, we know how to \textit{uberholographically} map $YY$ to the boundary in the framework of our consideration on the stabilizer, as demonstrated in \cref{eqs:YYmappings}, which makes this task straightforward.

Now, it is not enough to just map one trapeze operator to the boundary. Instead, we want to map all the local field theory, i.e. all of the trapeze operators within an entanglement wedge, and express their mapping in terms of strings of qubits. In order to do so, we shall first reason within an entanglement wedge defined by a bulk trapeze. Define a \textit{dashed line} in the bulk as a semi-infinite line from the bulk to the boundary, which always separates two neighboring 2-clusters. A dashed line defines an entanglement wedge pointing at a trapeze made of two 2-clusters, as shown in Figure \ref{fig:dottedline}.

In order to understand how all the trapezes in the entanglement wedge will map to the boundary, it is first useful to understand the structure of the network around the dashed line in terms of 2-clusters and 3-clusters. This can be done through a \textit{shifting and unfolding} procedure, which we now introduce.

\begin{figure}[H]
\centering
\begin{tikzpicture}
\draw (-5,1)--(-4,0);
\draw (5,1)--(4,0);
\draw (-8,0)--(8,0);
\draw (-8,-2.5)--(8,-2.5);
\draw (-8,-4.5)--(8,-4.5);
\draw (-4,0)--(4,0);
\draw (-6,-2.5)--(-4,0);
\draw (-4,0)--(-2,-2.5);
\draw (2,-2.5)--(4,0);
\draw (4,0)--(6,-2.5);
\draw (-6,-2.5)--(-4,0);
\draw (-4,0)--(-2,-2.5);
\draw (-6,-2.5)--(-6-0.9,-4.5);
\draw (-6,-2.5)--(-6+0.5,-4.5);
\draw (-2+0.9,-4.5)--(-2,-2.5);
\draw (-2-0.5,-4.5)--(-2,-2.5);
\draw (-4,-2.5)--(-4+0.9,-4.5);
\draw (-4,-2.5)--(-4,-4.5);
\draw (-4,-2.5)--(-4-0.9,-4.5);
\draw (6,-2.5)--(6+0.9,-4.5);
\draw (6,-2.5)--(6-0.5,-4.5);
\draw (2+0.5,-4.5)--(2,-2.5);
\draw (2-0.9,-4.5)--(2,-2.5);
\draw (4,-2.5)--(4+0.9,-4.5);
\draw (4,-2.5)--(4,-4.5);
\draw (4,-2.5)--(4-0.9,-4.5);
\draw (-6-0.9-0.4,-6)--(-6-0.9,-4.5);
\draw (-6-0.9+0.2,-6)--(-6-0.9,-4.5);
\draw (-6-0.2-0.3,-6)--(-6-0.2,-4.5);
\draw (-6-0.2,-6)--(-6-0.2,-4.5);
\draw (-6-0.2+0.3,-6)--(-6-0.2,-4.5);
\draw (-6+0.5-0.2,-6)--(-6+0.5,-4.5);
\draw (-6+0.5+0.2,-6)--(-6+0.5,-4.5);
\draw (-4-0.9-0.2,-6)--(-4-0.9,-4.5);
\draw (-4-0.9+0.1,-6)--(-4-0.9,-4.5);
\draw (-4-0.65,-6)--(-4-0.45,-4.5);
\draw (-4-0.45,-6)--(-4-0.45,-4.5);
\draw (-4-0.25,-6)--(-4-0.45,-4.5);
\draw (-4-0.1,-6)--(-4,-4.5);
\draw (-4+0.1,-6)--(-4,-4.5);
\draw (-4+0.65,-6)--(-4+0.45,-4.5);
\draw (-4+0.45,-6)--(-4+0.45,-4.5);
\draw (-4+0.25,-6)--(-4+0.45,-4.5);
\draw (-4+0.9-0.1,-6)--(-4+0.9,-4.5);
\draw (-4+0.9+0.2,-6)--(-4+0.9,-4.5);
\draw (-2-0.5-0.2,-6)--(-2-0.5,-4.5);
\draw (-2-0.5+0.2,-6)--(-2-0.5,-4.5);
\draw (-2+0.2-0.3,-6)--(-2+0.2,-4.5);
\draw (-2+0.2,-6)--(-2+0.2,-4.5);
\draw (-2+0.2+0.3,-6)--(-2+0.2,-4.5);
\draw (-2+0.9-0.2,-6)--(-2+0.9,-4.5);
\draw (-2+0.9+0.4,-6)--(-2+0.9,-4.5);
\draw (8-6-0.9-0.4,-6)--(8-6-0.9,-4.5);
\draw (8-6-0.9+0.2,-6)--(8-6-0.9,-4.5);
\draw (8-6-0.2-0.3,-6)--(8-6-0.2,-4.5);
\draw (8-6-0.2,-6)--(8-6-0.2,-4.5);
\draw (8-6-0.2+0.3,-6)--(8-6-0.2,-4.5);
\draw (8-6+0.5-0.2,-6)--(8-6+0.5,-4.5);
\draw (8-6+0.5+0.2,-6)--(8-6+0.5,-4.5);
\draw (8-4-0.9-0.2,-6)--(8-4-0.9,-4.5);
\draw (8-4-0.9+0.1,-6)--(8-4-0.9,-4.5);
\draw (8-4-0.65,-6)--(8-4-0.45,-4.5);
\draw (8-4-0.45,-6)--(8-4-0.45,-4.5);
\draw (8-4-0.25,-6)--(8-4-0.45,-4.5);
\draw (8-4-0.1,-6)--(8-4,-4.5);
\draw (8-4+0.1,-6)--(8-4,-4.5);
\draw (8-4+0.65,-6)--(8-4+0.45,-4.5);
\draw (8-4+0.45,-6)--(8-4+0.45,-4.5);
\draw (8-4+0.25,-6)--(8-4+0.45,-4.5);
\draw (8-4+0.9-0.1,-6)--(8-4+0.9,-4.5);
\draw (8-4+0.9+0.2,-6)--(8-4+0.9,-4.5);
\draw (8-2-0.5-0.2,-6)--(8-2-0.5,-4.5);
\draw (8-2-0.5+0.2,-6)--(8-2-0.5,-4.5);
\draw (8-2+0.2-0.3,-6)--(8-2+0.2,-4.5);
\draw (8-2+0.2,-6)--(8-2+0.2,-4.5);
\draw (8-2+0.2+0.3,-6)--(8-2+0.2,-4.5);
\draw (8-2+0.9-0.2,-6)--(8-2+0.9,-4.5);
\draw (8-2+0.9+0.4,-6)--(8-2+0.9,-4.5);
\draw[draw,line width=1mm] (0,1)--(0,0);
\draw[draw,line width=1mm] (0,-1)--(0,-2);
\draw[draw,line width=1mm] (0,-3)--(0,-4);
\draw[draw,line width=1mm] (0,-5)--(0,-6);
\end{tikzpicture}
\vspace{3mm}

\caption{A dashed line defined that separates two neighboring 2-clusters. We observe a symmetry between this dashed line.}
\label{fig:dottedline}
\end{figure}
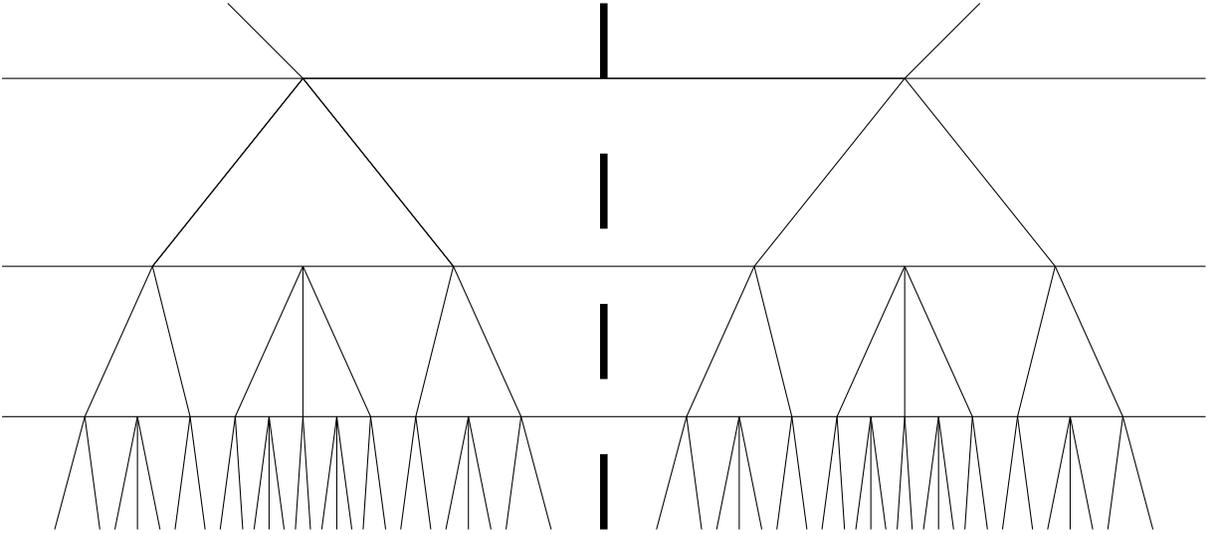

\subsection{The shifting and unfolding procedure}

Here we explain how to obtain the network around a dashed line at level $n+1$ from level $n$ through the shifting and unfolding procedure, and apply it to the pushing of the Hamiltonian.

The steps of the process are illustrated in Figure \ref{fig:ShiftUnfold} and work as follows: starting one level after the dashed line begins, take the string of 2-clusters and 3-clusters to the right of the dashed line (shifting), then symmetrize it with respect to the last 2-cluster (first unfolding). Then symmetrize everything with respect to the dashed line (second unfolding). Repeat the operation infinitely, to keep increasing the level.

This gives us an explicit way of constructing the network around dashed lines. Let us now study the new trapezes this shifting and unfolding procedure creates. First, all trapezes of level $n$ will be replicated through the shifting and unfolding procedure. After the shift, they will each give rise to a new trapeze, which will double the number of trapezes. Then each unfolding will double that new number again. Therefore, the first step to construct all the trapeze shapes at level $n+1$ is to take the trapeze shapes at level $n$, and copy them through by shifting them and unfolding them. Almost all trapeze shapes at level $n+1$ can be obtained in this way, except the ones centered on the dashed line, which have to be added.

\begin{figure}[H]
\centering
\begin{subfigure}{1\textwidth}
\centering
\begin{tikzpicture}[scale=0.45]
\draw (1,1)--(0,0);
\draw (1,1)--(2,0);
\draw (4,1)--(3,0);
\draw (4,1)--(4,0);
\draw (4,1)--(5,0);
\draw (7,1)--(6,0);
\draw (7,1)--(8,0);
\draw (9,1)--(9,0.8);
\draw (9,0.6)--(9,0.4);
\draw (9,0.2)--(9,0);
\draw (18-1,1)--(18-0,0);
\draw (18-1,1)--(18-2,0);
\draw (18-4,1)--(18-3,0);
\draw (18-4,1)--(18-4,0);
\draw (18-4,1)--(18-5,0);
\draw (18-7,1)--(18-6,0);
\draw (18-7,1)--(18-8,0);
\end{tikzpicture}
\vspace{3mm}
\subcaption{Start with a given level.}
\vspace{5mm}
\begin{tikzpicture}[scale=0.45]
\draw (10+1,1)--(10+0,0);
\draw (10+1,1)--(10+2,0);
\draw (10+4,1)--(10+3,0);
\draw (10+4,1)--(10+4,0);
\draw (10+4,1)--(10+5,0);
\draw (10+7,1)--(10+6,0);
\draw (10+7,1)--(10+8,0);
\draw (9,1)--(9,0.8);
\draw (9,0.6)--(9,0.4);
\draw (9,0.2)--(+9,0);
\draw (10+18-1,1)--(10+18-0,0);
\draw (10+18-1,1)--(10+18-2,0);
\draw (10+18-4,1)--(10+18-3,0);
\draw (10+18-4,1)--(10+18-4,0);
\draw (10+18-4,1)--(10+18-5,0);
\draw (10+18-7,1)--(10+18-6,0);
\draw (10+18-7,1)--(10+18-8,0);
\end{tikzpicture}
\vspace{3mm}
\subcaption{Shift everything to the right of the dashed line.}
\vspace{5mm}
\begin{tikzpicture}[scale=0.45]
\draw (10+1,1)--(10+0,0);
\draw (10+1,1)--(10+2,0);
\draw (10+4,1)--(10+3,0);
\draw (10+4,1)--(10+4,0);
\draw (10+4,1)--(10+5,0);
\draw (10+7,1)--(10+6,0);
\draw (10+7,1)--(10+8,0);
\draw (9,1)--(9,0.8);
\draw (9,0.6)--(9,0.4);
\draw (9,0.2)--(+9,0);
\draw (10+18-1,1)--(10+18-0,0);
\draw (10+18-1,1)--(10+18-2,0);
\draw (10+18-4,1)--(10+18-3,0);
\draw (10+18-4,1)--(10+18-4,0);
\draw (10+18-4,1)--(10+18-5,0);
\draw (10+18-7,1)--(10+18-6,0);
\draw (10+18-7,1)--(10+18-8,0);
\draw (29,0)--(30,1);
\draw (30,0)--(30,1);
\draw (31,0)--(30,1);
\draw (32,0)--(33,1);
\draw (34,0)--(33,1);
\draw (36,0)--(37,1);
\draw (38,0)--(37,1);
\draw (39,0)--(40,1);
\draw (40,0)--(40,1);
\draw (41,0)--(40,1);
\draw (42,0)--(43,1);
\draw (44,0)--(43,1);
\end{tikzpicture}
\vspace{3mm}
\subcaption{Symmetrize everything with respect to the last 2-cluster.}
\vspace{5mm}
\begin{tikzpicture}[scale=0.23]
\draw (10+1,1)--(10+0,0);
\draw (10+1,1)--(10+2,0);
\draw (10+4,1)--(10+3,0);
\draw (10+4,1)--(10+4,0);
\draw (10+4,1)--(10+5,0);
\draw (10+7,1)--(10+6,0);
\draw (10+7,1)--(10+8,0);
\draw (9,1)--(9,0.8);
\draw (9,0.6)--(9,0.4);
\draw (9,0.2)--(+9,0);
\draw (10+18-1,1)--(10+18-0,0);
\draw (10+18-1,1)--(10+18-2,0);
\draw (10+18-4,1)--(10+18-3,0);
\draw (10+18-4,1)--(10+18-4,0);
\draw (10+18-4,1)--(10+18-5,0);
\draw (10+18-7,1)--(10+18-6,0);
\draw (10+18-7,1)--(10+18-8,0);
\draw (29,0)--(30,1);
\draw (30,0)--(30,1);
\draw (31,0)--(30,1);
\draw (32,0)--(33,1);
\draw (34,0)--(33,1);
\draw (36,0)--(37,1);
\draw (38,0)--(37,1);
\draw (39,0)--(40,1);
\draw (40,0)--(40,1);
\draw (41,0)--(40,1);
\draw (42,0)--(43,1);
\draw (44,0)--(43,1);
\draw (10+1-36,1)--(10+0-36,0);
\draw (10+1-36,1)--(10+2-36,0);
\draw (10+4-36,1)--(10+3-36,0);
\draw (10+4-36,1)--(10+4-36,0);
\draw (10+4-36,1)--(10+5-36,0);
\draw (10+7-36,1)--(10+6-36,0);
\draw (10+7-36,1)--(10+8-36,0);
\draw (10+18-1-36,1)--(10+18-0-36,0);
\draw (10+18-1-36,1)--(10+18-2-36,0);
\draw (10+18-4-36,1)--(10+18-3-36,0);
\draw (10+18-4-36,1)--(10+18-4-36,0);
\draw (10+18-4-36,1)--(10+18-5-36,0);
\draw (10+18-7-36,1)--(10+18-6-36,0);
\draw (10+18-7-36,1)--(10+18-8-36,0);
\draw (29-36,0)--(30-36,1);
\draw (30-36,0)--(30-36,1);
\draw (31-36,0)--(30-36,1);
\draw (32-36,0)--(33-36,1);
\draw (34-36,0)--(33-36,1);
\draw (36-36,0)--(37-36,1);
\draw (38-36,0)--(37-36,1);
\draw (39-36,0)--(40-36,1);
\draw (40-36,0)--(40-36,1);
\draw (41-36,0)--(40-36,1);
\draw (42-36,0)--(43-36,1);
\draw (44-36,0)--(43-36,1);
\end{tikzpicture}
\vspace{3mm}
\subcaption{Symmetrize everything with respect to the dashed line.}
\vspace{3mm}
\end{subfigure}
\caption{Shifting and unfolding levels of the HaPPY code.}
\label{fig:ShiftUnfold}
\end{figure}
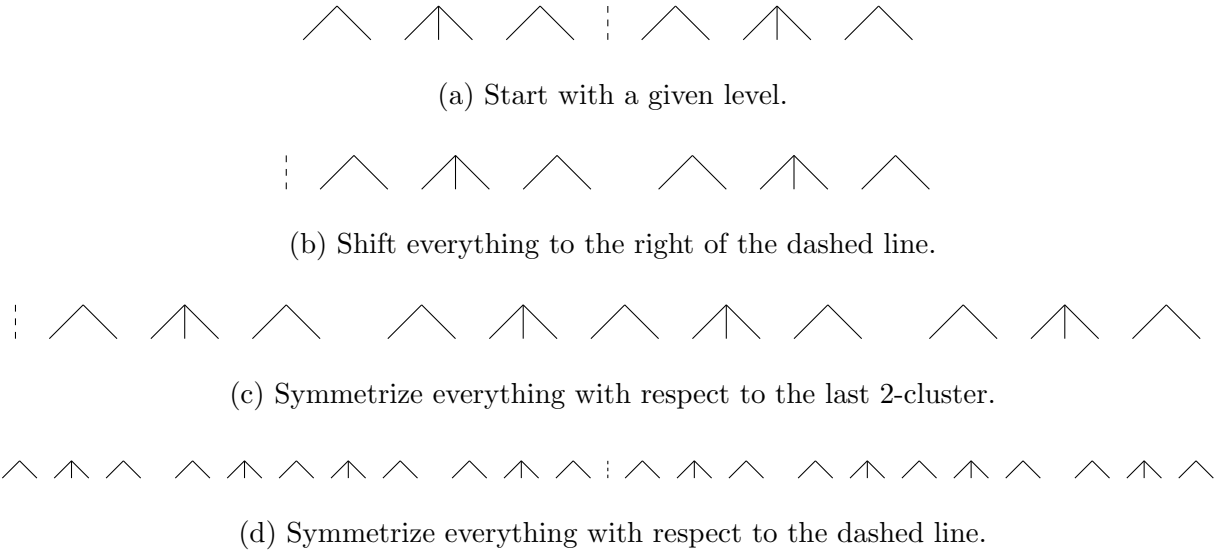 

These observations give rise to an explicit way of recursively building an infinite string of Paulis along each dashed line. The $n$-th step of the procedure will correspond to the mapping of all bulk trapeze operators contained in the entanglement wedge defined by the dashed line, and whose bulk level is less than $p+n$ if the dashed line starts at level $p$. This procedure is summarized in Figure \ref{fig:StepsShiftUnfold}. We first initialize with the pattern 
\begin{align*}
\begin{tikzpicture}
\node[draw=none,label={[label distance=0mm]east: 1YY11111111YY1,}] (C) at (0,0) {};
\draw (1.95,-0.2)--(1.95,-0.4);
\draw (1.95,0.6)--(1.95,0.4);
\draw (1.95,0.2)--(1.95,0);
\end{tikzpicture}
\end{align*}
where we have put our reference area as the dashed line in the middle. Now suppose that the $n$-th iteration of the process has been made, and we have a string of blocks of Paulis acting on the boundary qubits at level $n$. Shift every existing block of Paulis to the right of the dashed line. Symmetrize with respect to the axis between the rightmost 1 and the rightmost $Y$, then symmetrize back with respect to the dashed line. Then add an additional pattern 
$$S_k11(...)11S_k$$
centered with respect to the dashed line, with an appropriate number of $1$'s in between, for each $k\leq n+1$. The number of 1's separating the two $S_k$'s is defined in the following way: if $k\leq n$, take the number of 1's to be the same as it was defined before, and if $k=n+1$, define it so that the first $S_{n+1}$ is in the middle of the leftmost string of the form $S_n(...)S_n$, and the second $S_{n+1}$ is in the middle of the rightmost string of the form $S_n(...)S_n$.

This procedure can be seen clearly in Figure \ref{fig:StepsShiftUnfold}.

\begin{figure}[H]
\vspace{3mm}
\centering
\begin{subfigure}{1\textwidth}
\begin{center}
\begin{tikzpicture}
\node[draw=none,label={[label distance=0mm]east: \texttt{1111111111111111111\textcolor{blue}{1YY11111111YY1}1111111111111111111}}] (C) at (0,0) {};
\draw (5.95,-0.2)--(5.95,-0.4);
\draw (5.95,0.6)--(5.95,0.4);
\draw (5.95,0.2)--(5.95,0);
\end{tikzpicture}
\end{center}
\vspace{0.5mm}
\subcaption{Start with the initial string.}
\vspace{1.6mm}
\hspace{8cm}
\begin{center}
\begin{tikzpicture}
\node[draw=none,label={[label distance=0mm]east: \texttt{11111111111111111111111111\textcolor{blue}{1YY11111111YY1}111111111111}}] (C) at (-5.75,0) {};
\draw (0.2,-0.2)--(0.2,-0.4);
\draw (0.2,0.6)--(0.2,0.4);
\draw (0.2,0.2)--(0.2,0);
\end{tikzpicture}
\end{center}
\vspace{0.5mm}
\subcaption{Shift everything to the right of the dashed line.}
\vspace{1.6mm}
\hspace{8cm}
\begin{center}
\begin{tikzpicture}
\node[draw=none,label={[label distance=0mm]east: \texttt{11111111111111111111111111\textcolor{blue}{1YY11111111YY1}111111111111}}] (C) at (-5.75,0) {};
\node[draw=none,label={[label distance=0mm]east:
\texttt{11111111111111111111111111111111111111\textcolor{blue}{1YY11111111YY1}}}] (C) at (-5.75,-0.5) {};
\draw (0.2,-0.6)--(0.2,-0.8);
\draw (0.2,-0.2)--(0.2,-0.4);
\draw (0.2,0.6)--(0.2,0.4);
\draw (0.2,0.2)--(0.2,0);
\end{tikzpicture}
\end{center}
\vspace{0.5mm}
\subcaption{Symmetrize everything with respect to the last 2-cluster.}
\vspace{1.6mm}
\begin{center}
\begin{tikzpicture}
\node[draw=none,label={[label distance=0mm]east: \texttt{11111111111111111111111111\textcolor{blue}{1YY11111111YY1}111111111111}}] (C) at (-5.75,0) {};
\node[draw=none,label={[label distance=0mm]east: \texttt{\textcolor{blue}{1YY11111111YY1}11111111111111111111111111111111111111}}] (C) at (-5.75,1) {};
\node[draw=none,label={[label distance=0mm]east: \texttt{111111111111\textcolor{blue}{1YY11111111YY1}11111111111111111111111111}}] (C) at (-5.75,0.5) {};
\node[draw=none,label={[label distance=0mm]east:
\texttt{11111111111111111111111111111111111111\textcolor{blue}{1YY11111111YY1}}}] (C) at (-5.75,-0.5) {};
\draw (0.2,-0.2)--(0.2,-0.4);
\draw (0.2,-0.6)--(0.2,-0.8);
\draw (0.2,0.6)--(0.2,0.4);
\draw (0.2,0.2)--(0.2,0);
\draw (0.2,0.4)--(0.2,0.6);
\draw (0.2,0.8)--(0.2,1.0);
\draw (0.2,1.2)--(0.2,1.4);
\end{tikzpicture}
\end{center}
\vspace{0.5mm}
\subcaption{Symmetrize everything with respect to the dashed line.}
\vspace{1.6mm}
\begin{center}
\begin{tikzpicture}
\node[draw=none,label={[label distance=0mm]east: \texttt{11111111111111111111111111\textcolor{blue}{1YY11111111YY1}111111111111}}] (C) at (-5.75,0) {};
\node[draw=none,label={[label distance=0mm]east: \texttt{\textcolor{blue}{1YY11111111YY1}11111111111111111111111111111111111111}}] (C) at (-5.75,1.5) {};
\node[draw=none,label={[label distance=0mm]east: \texttt{1111111111111111111\textcolor{blue}{1YY11111111YY1}1111111111111111111}}] (C) at (-5.75,0.5) {};
\node[draw=none,label={[label distance=0mm]east: \texttt{111111111111\textcolor{blue}{1YY11111111YY1}11111111111111111111111111}}] (C) at (-5.75,1) {};
\node[draw=none,label={[label distance=0mm]east:
\texttt{11111111111111111111111111111111111111\textcolor{blue}{1YY11111111YY1}}}] (C) at (-5.75,-0.5) {};
\draw (0.2,-0.2)--(0.2,-0.4);
\draw (0.2,-0.6)--(0.2,-0.8);
\draw (0.2,0.6)--(0.2,0.4);
\draw (0.2,0.2)--(0.2,0);
\draw (0.2,0.4)--(0.2,0.6);
\draw (0.2,0.8)--(0.2,1.0);
\draw (0.2,1.2)--(0.2,1.4);
\draw (0.2,1.6)--(0.2,1.8);
\end{tikzpicture}
\end{center}
\vspace{0.5mm}
\subcaption{Add the original string in the center.}
\vspace{1.6mm}
\begin{center}
\begin{tikzpicture}
\node[draw=none,label={[label distance=0mm]east: \texttt{11111111111111111111111111\textcolor{blue}{1YY11111111YY1}111111111111}}] (C) at (-5.75,0) {};
\node[draw=none,label={[label distance=0mm]east: \texttt{\textcolor{blue}{1YY11111111YY1}11111111111111111111111111111111111111}}] (C) at (-5.75,1.5) {};
\node[draw=none,label={[label distance=0mm]east: \texttt{1111111111111111111\textcolor{blue}{1YY11111111YY1}1111111111111111111}}] (C) at (-5.75,0.5) {};
\node[draw=none,label={[label distance=0mm]east: \texttt{111111111111\textcolor{blue}{1YY11111111YY1}11111111111111111111111111}}] (C) at (-5.75,1) {};
\node[draw=none,label={[label distance=0mm]east:
\texttt{11111111111111111111111111111111111111\textcolor{blue}{1YY11111111YY1}}}] (C) at (-5.75,-0.5) {};
\node[draw=none,label={[label distance=0mm]east:
\texttt{\textcolor{red}{11111YXXY1111111111111111111111111111111111YXXY11111}}}] (C) at (-5.75,-1) {};
\draw (0.2,-0.2)--(0.2,-0.4);
\draw (0.2,-0.6)--(0.2,-0.8);
\draw (0.2,-1)--(0.2,-1.2);
\draw (0.2,0.6)--(0.2,0.4);
\draw (0.2,0.2)--(0.2,0);
\draw (0.2,0.4)--(0.2,0.6);
\draw (0.2,0.8)--(0.2,1.0);
\draw (0.2,1.2)--(0.2,1.4);
\draw (0.2,1.6)--(0.2,1.8);
\end{tikzpicture}
\end{center}
\vspace{0.5mm}
\subcaption{Add a string made of two copies of YXXY, in such a way that the first YXXY is centered on the leftmost 1YY11111111YY1, and the second YXXY is centered on the rightmost 1YY11111111YY1.}
\vspace{0.9mm}
\end{subfigure}
\caption{The first step of the shifting and unfolding procedure for the trapeze Hamiltonian. Each Pauli acts on a single boundary qubit. Each individual string represents a term in the decomposition of the Hamiltonian into a sum. Portions in blue keep track of the initial pattern through the steps of the procedure, while the new pattern is added in red in the last step.}
\label{fig:StepsShiftUnfold}
\end{figure}

We remind the reader that the trapeze Hamiltonian is a sum over trapeze operators. So when mapped to the boundary, it yields a sum, accordingly. Each individual row on Figure \ref{fig:StepsShiftUnfold} corresponds to a single term of this sum, i.e. to the image of an individual trapeze operator.

Note that from this construction, it is not completely obvious that the strings of Paulis obtained at step $n+1$ will include the ones obtained at step $n$, as the pattern is first shifted. However, due to the first unfolding, the right hand side of the dashed line is perfectly symmetric at every step, which in turn, implies that after the second unfolding, the left hand side of the dashed line is completely identical to its right hand side, which shows that all patterns that do not cross the dashed line are conserved. Moreover, all the patterns that cross the dashed line are added at the end of the procedure. This shows that the symmetry of the unfolding operation guarantees that all strings will be conserved from step $n$ to step $(n+1)$. In a way, the algorithm is \textit{blind to the shift}. Consequently, we can actually see this process as a construction of a sum of explicit, finite strings of Pauli matrices on a fixed, infinite string of qubits which is centered on the dashed line. Moreover, the successive appearances of the elements of the sequence $(S_n)$ make the resulting operator completely scale invariant: the HaPPY code seems to naturally call for a conformal description of the boundary theory.

The shifting and unfolding procedure once again unveils the fractal structure of the HaPPY code. In particular, the fact that the successive maps between levels can be seen as a simple addition of new operators on a fixed, infinite string of qubits is entirely due to scale invariance. Within our setting, this fact does not have a great physical interest, as in fact the operators on this fixed string come from a different level of the bulk at every step. However, this fact allows one to think of the HaPPY code in terms of \textit{reverse engineering} -- to construct an infinite number of bulk levels from a fixed boundary string of qubits -- and recover the same shifted and unfolded Hamiltonian. This other viewpoint, which is more convenient for entropic considerations, is explored in detail in our companion paper \cite{MonicaElliott2}.

\subsection{Failures of the bulk-to-boundary map in an entanglement wedge starting at a trapeze}
We have shown that the bulk-to-boundary map creates a scale invariant shifted and unfolded Hamiltonian. However, one would like to see whether this shifted and unfolded Hamiltonian does anything physical to the boundary theory. We shall see that uberholography works too well: in an entanglement wedge defined by a trapeze, the shifted and unfolded Hamiltonian will not create any long-range correlation on the boundary. This fact shows that the HaPPY code does not keep all its promises: the boundary representative of a physical, long-range entangled theory in the bulk, does not show crucial features that one would expect of a CFT, like algebraic decay of the correlation functions.

As we have seen above, the bulk-to-boundary map of the trapeze Hamiltonian has an uberholographic structure, as it maps a bulk theory into a fractal. More concretely, the successive bulk-to-boundary maps create strings of Pauli operators, which have a lot of $1$'s in between nontrivial Pauli matrices. A striking consequence of this is that two overlapping trapeze operators in the bulk will map to operators with disjoint support in the boundary, as depicted in Figure \ref{fig:TwoOverlapBulkTrapezes}. On the contrary, one of the nice features of the trapeze Hamiltonian is that it couples all the qubits in the bulk. Unfortunately, the disentangling property of the bulk-to-boundary map is unlikely to allow the long-range entanglement to survive on the boundary.

\begin{figure}[H]
\vspace{3mm}
\centering
\begin{tikzpicture}
\draw (-5,1)--(-4,0);
\draw (5,1)--(4,0);
\draw (-8,0)--(8,0);
\draw (-8,-2.5)--(8,-2.5);
\draw (-8,-4.5)--(8,-4.5);
\draw (-4,0)--(4,0);
\draw (-6,-2.5)--(-4,0);
\draw (-4,0)--(-2,-2.5);
\draw (2,-2.5)--(4,0);
\draw (4,0)--(6,-2.5);
\draw (-6,-2.5)--(-4,0);
\draw (-4,0)--(-2,-2.5);
\draw (-6,-2.5)--(-6-0.9,-4.5);
\draw (-6,-2.5)--(-6+0.5,-4.5);
\draw (-2+0.9,-4.5)--(-2,-2.5);
\draw (-2-0.5,-4.5)--(-2,-2.5);
\draw (-4,-2.5)--(-4+0.9,-4.5);
\draw (-4,-2.5)--(-4,-4.5);
\draw (-4,-2.5)--(-4-0.9,-4.5);
\draw (6,-2.5)--(6+0.9,-4.5);
\draw (6,-2.5)--(6-0.5,-4.5);
\draw (2+0.5,-4.5)--(2,-2.5);
\draw (2-0.9,-4.5)--(2,-2.5);
\draw (4,-2.5)--(4+0.9,-4.5);
\draw (4,-2.5)--(4,-4.5);
\draw (4,-2.5)--(4-0.9,-4.5);
\draw (-6-0.9-0.4,-6)--(-6-0.9,-4.5);
\draw (-6-0.9+0.2,-6)--(-6-0.9,-4.5);
\draw (-6-0.2-0.3,-6)--(-6-0.2,-4.5);
\draw (-6-0.2,-6)--(-6-0.2,-4.5);
\draw (-6-0.2+0.3,-6)--(-6-0.2,-4.5);
\draw (-6+0.5-0.2,-6)--(-6+0.5,-4.5);
\draw (-6+0.5+0.2,-6)--(-6+0.5,-4.5);
\draw (-4-0.9-0.2,-6)--(-4-0.9,-4.5);
\draw (-4-0.9+0.1,-6)--(-4-0.9,-4.5);
\draw (-4-0.65,-6)--(-4-0.45,-4.5);
\draw (-4-0.45,-6)--(-4-0.45,-4.5);
\draw (-4-0.25,-6)--(-4-0.45,-4.5);
\draw (-4-0.1,-6)--(-4,-4.5);
\draw (-4+0.1,-6)--(-4,-4.5);
\draw (-4+0.65,-6)--(-4+0.45,-4.5);
\draw (-4+0.45,-6)--(-4+0.45,-4.5);
\draw (-4+0.25,-6)--(-4+0.45,-4.5);
\draw (-4+0.9-0.1,-6)--(-4+0.9,-4.5);
\draw (-4+0.9+0.2,-6)--(-4+0.9,-4.5);
\draw (-2-0.5-0.2,-6)--(-2-0.5,-4.5);
\draw (-2-0.5+0.2,-6)--(-2-0.5,-4.5);
\draw (-2+0.2-0.3,-6)--(-2+0.2,-4.5);
\draw (-2+0.2,-6)--(-2+0.2,-4.5);
\draw (-2+0.2+0.3,-6)--(-2+0.2,-4.5);
\draw (-2+0.9-0.2,-6)--(-2+0.9,-4.5);
\draw (-2+0.9+0.4,-6)--(-2+0.9,-4.5);
\draw (8-6-0.9-0.4,-6)--(8-6-0.9,-4.5);
\draw (8-6-0.9+0.2,-6)--(8-6-0.9,-4.5);
\draw (8-6-0.2-0.3,-6)--(8-6-0.2,-4.5);
\draw (8-6-0.2,-6)--(8-6-0.2,-4.5);
\draw (8-6-0.2+0.3,-6)--(8-6-0.2,-4.5);
\draw (8-6+0.5-0.2,-6)--(8-6+0.5,-4.5);
\draw (8-6+0.5+0.2,-6)--(8-6+0.5,-4.5);
\draw (8-4-0.9-0.2,-6)--(8-4-0.9,-4.5);
\draw (8-4-0.9+0.1,-6)--(8-4-0.9,-4.5);
\draw (8-4-0.65,-6)--(8-4-0.45,-4.5);
\draw (8-4-0.45,-6)--(8-4-0.45,-4.5);
\draw (8-4-0.25,-6)--(8-4-0.45,-4.5);
\draw (8-4-0.1,-6)--(8-4,-4.5);
\draw (8-4+0.1,-6)--(8-4,-4.5);
\draw (8-4+0.65,-6)--(8-4+0.45,-4.5);
\draw (8-4+0.45,-6)--(8-4+0.45,-4.5);
\draw (8-4+0.25,-6)--(8-4+0.45,-4.5);
\draw (8-4+0.9-0.1,-6)--(8-4+0.9,-4.5);
\draw (8-4+0.9+0.2,-6)--(8-4+0.9,-4.5);
\draw (8-2-0.5-0.2,-6)--(8-2-0.5,-4.5);
\draw (8-2-0.5+0.2,-6)--(8-2-0.5,-4.5);
\draw (8-2+0.2-0.3,-6)--(8-2+0.2,-4.5);
\draw (8-2+0.2,-6)--(8-2+0.2,-4.5);
\draw (8-2+0.2+0.3,-6)--(8-2+0.2,-4.5);
\draw (8-2+0.9-0.2,-6)--(8-2+0.9,-4.5);
\draw (8-2+0.9+0.4,-6)--(8-2+0.9,-4.5);
\draw[draw,line width=1mm,color=blue] (-2,-2.5)--(2,-2.5);
\draw[draw,line width=1mm,color=blue] (-2,-2.5)--(-2+0.9,-4.5);
\draw[draw,line width=1mm,color=blue] (-2,-2.5)--(-2-0.5,-4.5);
\draw[draw,line width=1mm,color=blue] (2,-2.5)--(2-0.9,-4.5);
\draw[draw,line width=1mm,color=blue] (-2-0.5,-4.5)--(2.5,-4.5);
\draw[draw,line width=1mm,color=blue] (2+0.5,-4.5)--(2,-2.5);
\draw[draw,line width=1mm,color=purple] (2,-2.5)--(2-0.9,-4.5);
\draw[draw,line width=1mm,color=purple] (2+0.5,-4.5)--(2,-2.5);
\draw[draw,line width=1mm,color=purple] (2-0.9,-4.5)--(2+0.5,-4.5);
\draw[draw,line width=1mm,color=red] (2+0.5,-4.5)--(4,-4.5);
\draw[draw,line width=1mm,color=red] (4,-2.5)--(2,-2.5);
\draw[draw,line width=1mm,color=red] (4,-4.5)--(4,-2.5);
\draw[draw,line width=1mm,color=red] (4-0.9,-4.5)--(4,-2.5);
\node[draw=none,label={[label distance=0.5mm]south:\tiny \textcolor{red}{YY}}] (C) at (3.825,-6) {};
\node[draw=none,label={[label distance=0.5mm]south:\tiny \textcolor{red}{YY}}] (C) at (1.4,-6) {};
\node[draw=none,label={[label distance=0.5mm]south:\tiny \textcolor{blue}{YY}}] (C) at (2.2,-6) {};
\node[draw=none,label={[label distance=0.5mm]south:\tiny \textcolor{blue}{YY}}] (C) at (-2.2,-6) {};
\end{tikzpicture}
\vspace{3mm}

\caption{Two overlapping bulk trapezes are mapped to disjoint boundary regions.}
\label{fig:TwoOverlapBulkTrapezes}
\end{figure}
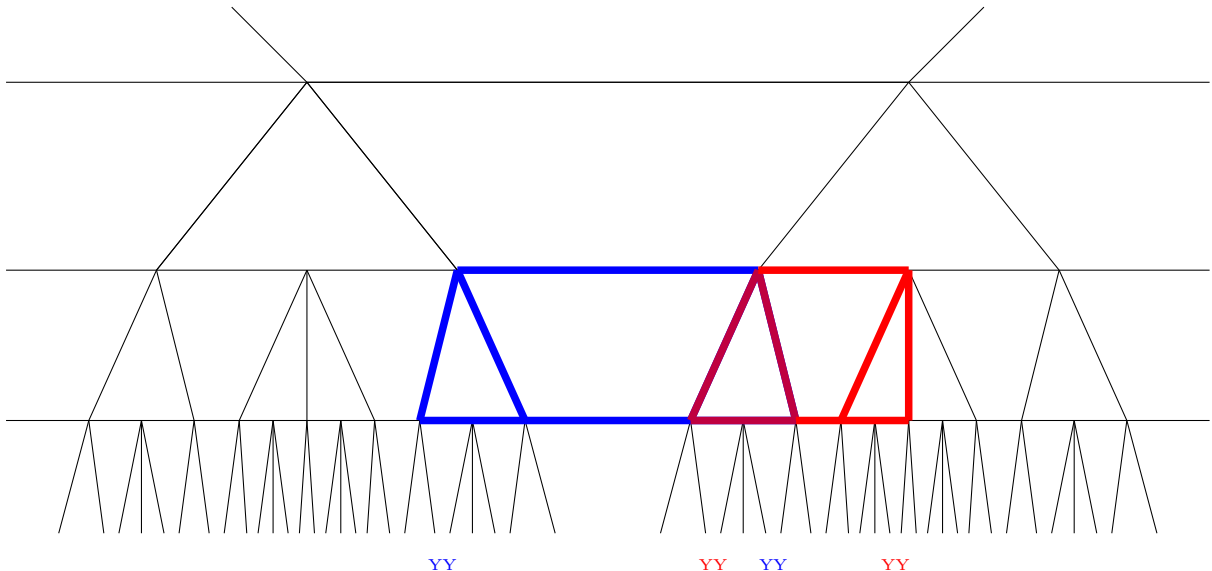

In the remainder of this section, we show that the shifted and unfolded Hamiltonian does not reproduce any feature of a CFT very well. In particular, the obtained theory does not exhibit any long range order.

The basic idea is that uberholography works too well in the case of the trapeze Hamiltonian. Because of the disentangling structure of the bulk-to-boundary map that we previously defined, only very few bulk trapezes will be mapped to overlapping regions of the boundary. We expect that these very minimal overlaps will not have sufficiently large enough entanglement to create any long-range entanglement on the boundary. Nevertheless, proving this claim rigorously seems to be complicated, or even impossible, due to the very high number of cases that can arise. However, by just removing a small proportion of the bulk trapezes (but still getting a qualitatively similar bulk theory), we show this claim in Appendix \ref{sec:trapezeEW}: for any qubit at a finite distance from the dashed line, there exists an upper bound on the number of qubits it can be correlated to by the shifted and unfolded Hamiltonian. Moreover, this upper bound will only depend on the distance of the qubit to the dashed line, and not on the number of bulk levels, hence it carries over to the infinite-dimensional setting. As an illustration, Figure \ref{fig:trapezeoverlapbondary} shows a few examples of nontrivial finite sets of qubits which are coupled together by trapeze interactions. 

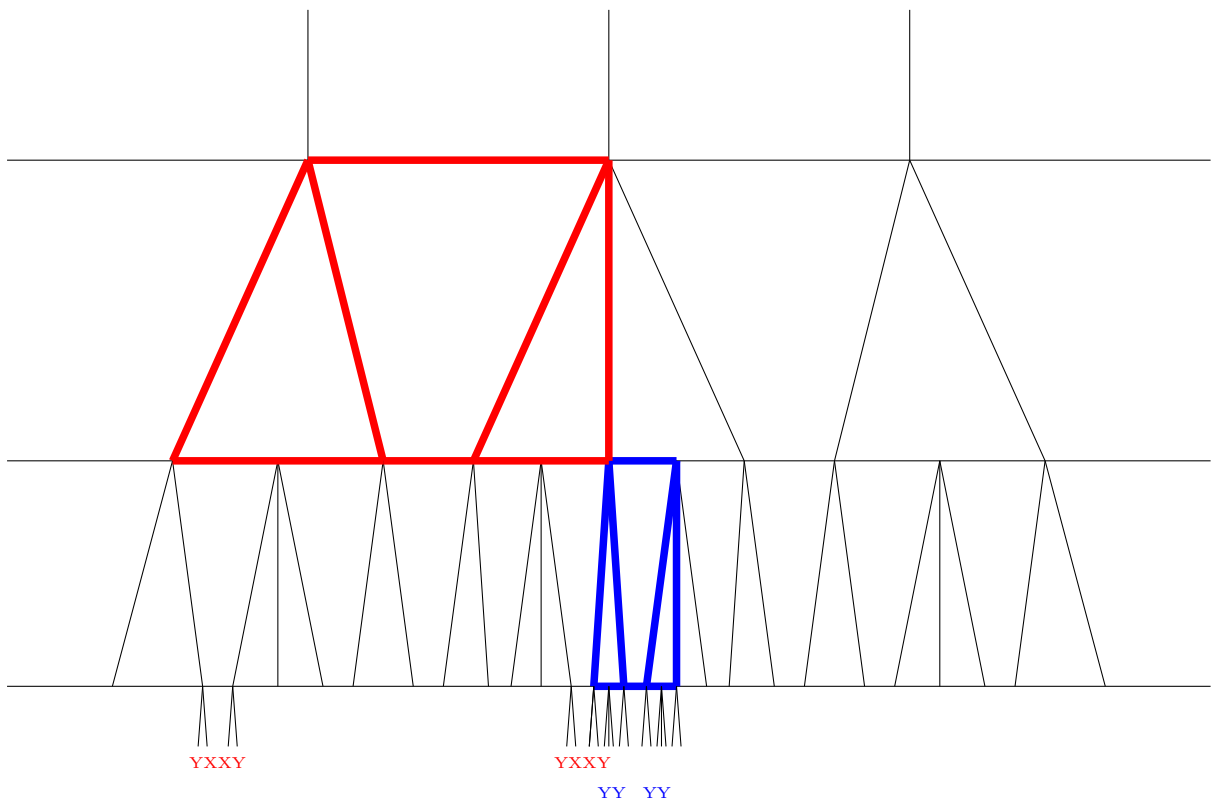
\begin{figure}[H]
\vspace{3mm}
\centering
\begin{tikzpicture}[scale=2]
\draw (2,-2.5)--(2,-1.5);
\draw (4,-2.5)--(4,-1.5);
\draw (6,-2.5)--(6,-1.5);
\draw (0,-2.5)--(8,-2.5);
\draw (0,-4.5)--(8,-4.5);
\draw (0,-6)--(8,-6);
\draw (6,-2.5)--(6+0.9,-4.5);
\draw (6,-2.5)--(6-0.5,-4.5);
\draw (2+0.5,-4.5)--(2,-2.5);
\draw (2-0.9,-4.5)--(2,-2.5);
\draw (4,-2.5)--(4+0.9,-4.5);
\draw (4,-2.5)--(4,-4.5);
\draw (4,-2.5)--(4-0.9,-4.5);
\draw (8-6-0.9-0.4,-6)--(8-6-0.9,-4.5);
\draw (8-6-0.9+0.2,-6)--(8-6-0.9,-4.5);
\draw (8-6-0.2-0.3,-6)--(8-6-0.2,-4.5);
\draw (8-6-0.2,-6)--(8-6-0.2,-4.5);
\draw (8-6-0.2+0.3,-6)--(8-6-0.2,-4.5);
\draw (8-6+0.5-0.2,-6)--(8-6+0.5,-4.5);
\draw (8-6+0.5+0.2,-6)--(8-6+0.5,-4.5);
\draw (8-4-0.9-0.2,-6)--(8-4-0.9,-4.5);
\draw (8-4-0.9+0.1,-6)--(8-4-0.9,-4.5);
\draw (8-4-0.65,-6)--(8-4-0.45,-4.5);
\draw (8-4-0.45,-6)--(8-4-0.45,-4.5);
\draw (8-4-0.25,-6)--(8-4-0.45,-4.5);
\draw (8-4-0.1,-6)--(8-4,-4.5);
\draw (8-4+0.1,-6)--(8-4,-4.5);
\draw (8-4+0.65,-6)--(8-4+0.45,-4.5);
\draw (8-4+0.45,-6)--(8-4+0.45,-4.5);
\draw (8-4+0.25,-6)--(8-4+0.45,-4.5);
\draw (8-4+0.9-0.1,-6)--(8-4+0.9,-4.5);
\draw (8-4+0.9+0.2,-6)--(8-4+0.9,-4.5);
\draw (8-2-0.5-0.2,-6)--(8-2-0.5,-4.5);
\draw (8-2-0.5+0.2,-6)--(8-2-0.5,-4.5);
\draw (8-2+0.2-0.3,-6)--(8-2+0.2,-4.5);
\draw (8-2+0.2,-6)--(8-2+0.2,-4.5);
\draw (8-2+0.2+0.3,-6)--(8-2+0.2,-4.5);
\draw (8-2+0.9-0.2,-6)--(8-2+0.9,-4.5);
\draw (8-2+0.9+0.4,-6)--(8-2+0.9,-4.5);
\draw[draw,line width=1mm,color=blue] (4,-4.5)--(4.1,-6);
\draw[draw,line width=1mm,color=blue] (4,-4.5)--(3.9,-6);
\draw[draw,line width=1mm,color=blue] (4+0.45,-4.5)--(4+0.45,-6);
\draw[draw,line width=1mm,color=blue] (4+0.45,-4.5)--(4+0.25,-6);
\draw[draw,line width=1mm,color=blue] (4,-4.5)--(4.45,-4.5);
\draw[draw,line width=1mm,color=blue] (3.9,-6)--(4.45,-6);
\draw[draw,line width=1mm,color=red] (2,-2.5)--(2-0.9,-4.5);
\draw[draw,line width=1mm,color=red] (2+0.5,-4.5)--(2,-2.5);
\draw[draw,line width=1mm,color=red] (2-0.9,-4.5)--(2+0.5,-4.5);
\draw[draw,line width=1mm,color=red] (2+0.5,-4.5)--(4,-4.5);
\draw[draw,line width=1mm,color=red] (4,-2.5)--(2,-2.5);
\draw[draw,line width=1mm,color=red] (4,-4.5)--(4,-2.5);
\draw[draw,line width=1mm,color=red] (4-0.9,-4.5)--(4,-2.5);
\draw (3.9,-6)--(3.87,-6.4);
\draw (3.9,-6)--(3.93,-6.4);
\draw (4,-6)--(3.97,-6.4);
\draw (4,-6)--(4,-6.4);
\draw (4,-6)--(4.03,-6.4);
\draw (4.1,-6)--(4.07,-6.4);
\draw (4.1,-6)--(4.13,-6.4);
\draw (3.9+0.35,-6)--(3.87+0.35,-6.4);
\draw (3.9+0.35,-6)--(3.93+0.35,-6.4);
\draw (4+0.35,-6)--(3.97+0.35,-6.4);
\draw (4+0.35,-6)--(4+0.35,-6.4);
\draw (4+0.35,-6)--(4.03+0.35,-6.4);
\draw (4.1+0.35,-6)--(4.07+0.35,-6.4);
\draw (4.1+0.35,-6)--(4.13+0.35,-6.4);
\draw (2-0.7,-6)--(2-0.73,-6.4);
\draw (2-0.7,-6)--(2-0.67,-6.4);
\draw (2-0.5,-6)--(2-0.53,-6.4);
\draw (2-0.5,-6)--(2-0.47,-6.4);
\draw (3.9,-6)--(3.87,-6.4);
\draw (3.9-0.15,-6)--(3.93-0.15,-6.4);
\draw (3.9-0.15,-6)--(3.87-0.15,-6.4);
\node[draw=none,label={[label distance=0.5mm]south:\tiny \textcolor{red}{YXXY}}] (C) at (3.825,-6.3) {};
\node[draw=none,label={[label distance=0.5mm]south:\tiny \textcolor{red}{YXXY}}] (C) at (1.4,-6.3) {};
\node[draw=none,label={[label distance=0.5mm]south:\tiny \textcolor{blue}{YY}}] (C) at (4.02,-6.5) {};
\node[draw=none,label={[label distance=0.5mm]south:\tiny \textcolor{blue}{YY}}] (C) at (4.32,-6.5) {};
\end{tikzpicture}
\vspace{3mm}
\caption{Two trapezes in the bulk that overlap on the boundary. The supports of the boundary images of the red and blue trapezes share one common qubit.}
\label{fig:trapezeoverlapbondary}
\end{figure}

In AdS/CFT, we expect the bulk-to-boundary map to correlate qubits enough that a fully entangled theory in the bulk will have a boundary dual with algebraic decay. Here, we have shown that it is not always the case for the HaPPY code, as a theory with nontrivial behavior in the bulk seems to be mapped to something which exhibits no long-range order in the boundary. Indeed, the shifted and unfolded Hamiltonian on the boundary is scale invariant, but doesn't seem to exhibit any long-range entanglement.

\subsection{A bulk reconstruction with nontrivial boundary correlation functions}
In an entanglement wedge defined by a trapeze, there is no easily recognizable infinite subset of boundary qubits which is entangled. This result is very important, but still depends on the fact that at each level, the expansion is done around the straight dashed line of Figure \ref{fig:dottedline}. For another choice of dashed line, which is not straight, we show in this subsection that the HaPPY code can achieve a bit better: it can entangle together an infinite but fractal subset of the boundary qubits together. Within this infinite subset, we estimate the corresponding thermal correlation functions.

\begin{figure}[H]
\vspace{5mm}
\centering
\begin{tikzpicture}
\draw (0,0)--(14,0);
\draw (0,-2.5)--(14,-2.5);
\draw (0,-4.5)--(14,-4.5);
\draw (4,0)--(2,-2.5);
\draw[draw,line width=1mm,color=red] (4,0)--(4,-2.5);
\draw[draw,line width=1mm,color=red] (4,0)--(6,-2.5);
\draw[draw,line width=1mm,color=red] (10,0)--(8,-2.5);
\draw[draw,line width=1mm,color=red] (10,0)--(12,-2.5);
\draw (2,-2.5)--(2-0.5,-4.5);
\draw (2,-2.5)--(2+0.3,-4.5);
\draw[draw,line width=1mm,color=red] (4,-2.5)--(4-0.3,-4.5);
\draw[draw,line width=1mm,color=red] (4,-2.5)--(4+0.3,-4.5);
\draw[draw,line width=1mm,color=red] (3,-2.5)--(3.5,-4.5);
\draw (3,-2.5)--(2.5,-4.5);
\draw[draw,line width=1mm,color=red] (3,-2.5)--(3,-4.5);
\draw (5,-2.5)--(5.5,-4.5);
\draw (5,-2.5)--(4.5,-4.5);
\draw (5,-2.5)--(5,-4.5);
\draw (6,-2.5)--(6-0.3,-4.5);
\draw (6,-2.5)--(6+0.3,-4.5);
\draw (14-2,-2.5)--(14-2+0.3,-4.5);
\draw (14-2,-2.5)--(14-2-0.3,-4.5);
\draw (14-4,-2.5)--(14-4+0.5,-4.5);
\draw (14-4,-2.5)--(14-4-0.5,-4.5);
\draw (14-4,-2.5)--(14-4,-4.5);
\draw (14-6,-2.5)--(14-6+0.3,-4.5);
\draw (14-6,-2.5)--(14-6-0.3,-4.5);
\draw[draw,line width=1mm,color=red] (3,-4.5)--(2.9,-6);
\draw[draw,line width=1mm,color=red] (3,-4.5)--(3.1,-6);
\draw[draw,line width=1mm,color=red] (2.7,-4.5)--(2.7,-6);
\draw[draw,line width=1mm,color=red] (2.7,-4.5)--(2.8,-6);
\draw (2.7,-4.5)--(2.6,-6);
\draw[draw,line width=1mm,color=red] (2.7,-4.5)--(4.3,-4.5);
\draw[draw,line width=1mm,color=red] (2.7,-6)--(3.1,-6);
\draw[draw,line width=1mm,color=red] (2.7,-4.5)--(3,-4.5);
\draw[draw,line width=1mm,color=red] (3,-2.5)--(12,-2.5);
\draw[draw,line width=1mm,color=red] (4,0)--(14,0);
\draw[draw,line width=1mm,color=red] (10,0)--(10,1);
\draw[draw,line width=1mm] (8.5,1)--(5.5,-1);
\draw[draw,line width=1mm] (5,-2)--(3.3,-4);
\draw[draw,line width=1mm] (3,-4.5)--(2.7,-6.5);
\end{tikzpicture}
\vspace{5mm}

\begin{tikzpicture}[color=red,scale=0.4]
\draw (-4,0)--(-4,1);
\draw (4,0)--(4,1);
\draw (-8,0)--(8,0);
\draw (-8,-2.5)--(-6,-2.5)--(-4,-2.5)--(-2,-2.5)--(2,-2.5)--(4,-2.5)--(6,-2.5)--(8,-2.5);
\draw (-4,0)--(4,0);
\draw (-6,-2.5)(-4,0);
\draw (-4,0)(-2,-2.5);
\draw (2,-2.5)--(4,0);
\draw (4,0)--(6,-2.5);
\draw (-6,-2.5)--(-4,0);
\draw (-4,0)--(-2,-2.5);
\draw (-6,-2.5)--(-6-0.9,-4.5);
\draw (-6,-2.5)--(-6+0.9,-4.5);
\draw (-2+0.9,-4.5)--(-2,-2.5);
\draw (-2-0.9,-4.5)--(-2,-2.5);
\draw (-4,-2.5)--(-4+0.9,-4.5);
\draw (-4,-2.5)--(-4,-4.5);
\draw (-4,-2.5)--(-4-0.9,-4.5);
\draw (6,-2.5)--(6+0.9,-4.5);
\draw (6,-2.5)--(6-0.9,-4.5);
\draw (2+0.9,-4.5)--(2,-2.5);
\draw (2-0.9,-4.5)--(2,-2.5);
\draw (4,-2.5)--(4+0.9,-4.5);
\draw (4,-2.5)--(4,-4.5);
\draw (4,-2.5)--(4-0.9,-4.5);
\node[draw=none,label={[label distance=0.5mm]south:\tiny YY...}] (C) at (-4-1,-4) {};
\node[draw=none,label={[label distance=0.5mm]south:\tiny ...YY}] (C) at (5,-4) {};
\node[draw=none,label={[label distance=0.5mm]south:\tiny YXXY...}] (C) at (6.7,-5) {};
\node[draw=none,label={[label distance=0.5mm]south:\tiny ...YXXY}] (C) at (20,-5) {};
\node[draw=none,label={[label distance=0.5mm]south:\tiny YXXY11YY11YXXY...}] (C) at (23.7,-6) {};
\end{tikzpicture}
\vspace{3mm}
\caption{The boundary mappings of the sequence of red trapezes form an infinite string of entangled boundary qubits. The dashed line, which shows where the bulk is centered at each step, is shifted so that the trapezes in red pile up. This choice entangles an unbounded number of boundary qubits as the tensor network goes infinite, as shown by the overlaps of Paulis on the bottom of the figure, which correspond to the boundary images of the red trapezes at a fixed level.}
\label{fig:InfiniteBdryQubits}
\end{figure}
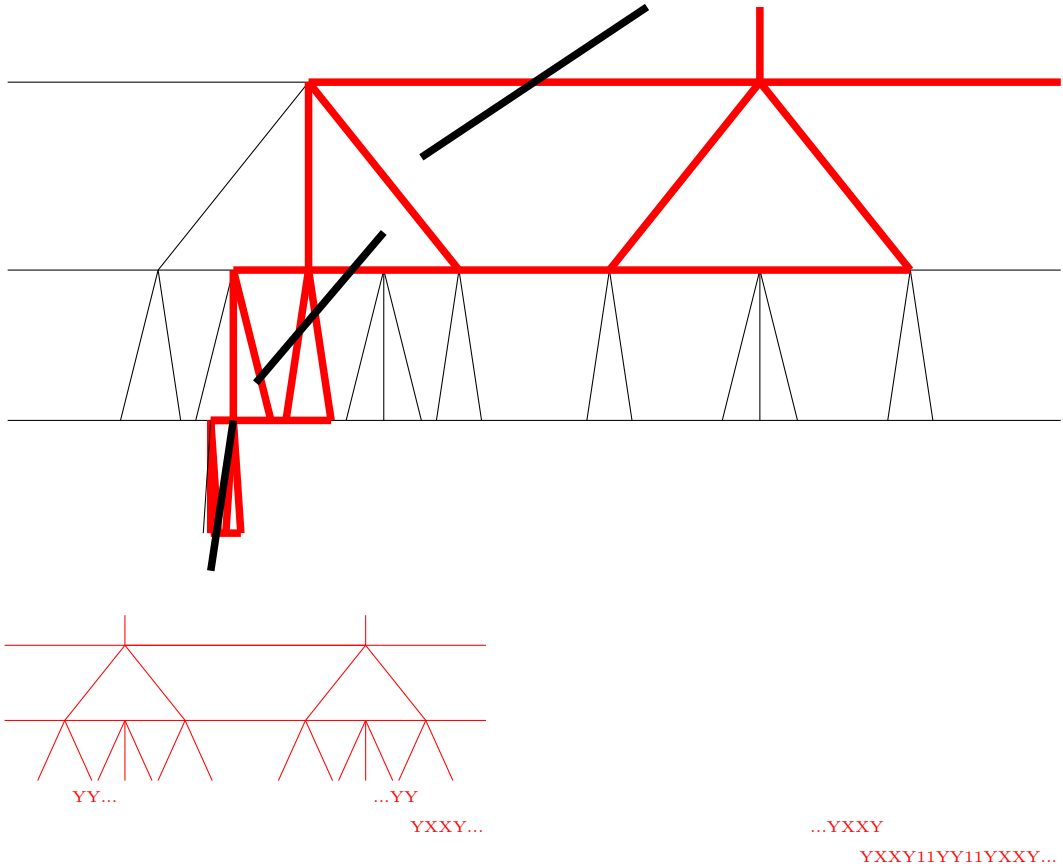

If the bulk structure had been different than the one of the infinite entanglement wedge of a trapeze, it would be possible to have an infinite subregion of the boundary entangled together, thanks to successive trapezes made of one 2-cluster and one side of a 3-cluster. This sequence of bulk trapezes is illustrated in Figure \ref{fig:InfiniteBdryQubits}. Therefore, if one expands at each level around the shifted dashed line of Figure \ref{fig:InfiniteBdryQubits}, all these trapezes will appear at each level, and the lowest one will be at a fixed finite distance of the shifted dashed line. An always growing string of entangled boundary qubits will then be created, and in the infinite-dimensional limit, its size will go to infinity.  Moreover, this string will have a fractal structure, as each new interaction will come from an upper level in the bulk. The separation between the qubits in the interaction will therefore be exponentially growing.

If we restrict ourselves to this fractal subset of the boundary, we expect our correlations to qualitatively behave like a 1d Ising model. At inverse temperature $\beta$, we therefore expect a correlation function between $x$ and $y$ to be of the form:$$C(x,y)\sim e^{-\beta d_{\mathrm{fractal}}(x,y)},$$where $d_{\mathrm{fractal}}$ denotes the number of qubits within the fractal that separate $x$ and $y$. As the gaps are exponentially growing, we then expect that$$d_{\mathrm{fractal}}(x,y)\sim \left| x-y\right|,$$ which in turn yields:$$C(x,y)\sim \left| x-y\right|^{-\beta}.$$ This algebraic decay is reminiscent of a Conformal Field Theory correlation function (although it should be smoothed out by a hyperbolic function), which allows us to say that in a much weaker sense, in the very rare cases when the HaPPY code does couple infinite strings of boundary qubits, it does it approximately like a CFT would, at least at low enough temperature.

This confirms our belief that a finite-dimensional modeling of the HaPPY tensor network, with limitations, does still comprise some holographic aspects, in a particularly low temperature setting. If we consider a low temperature state, we expect to be able to capture some of the long-range physics. In other words, even though the physics of the original HaPPY code does not contain as much entanglement as fully holographic theories, it gives enough information to still see long-range effects. It is quite remarkable to notice that uberholography still ``knows" about the underlying CFT structure, and that even when it decouples almost everything like in the HaPPY code, a relic of approximate CFT behavior can still be found in fractals.

\subsection{A bulk-to-boundary map with infinite energy}

Similarly, we can take a shifted dashed line which alternates between trapezes made of one 2-cluster and one 3-cluster and trapezes made of two 2-clusters, as shown on Figure \ref{fig:TrpezesToQubit}. In this picture, another situation arises: an infinite number of bulk operators will be mapped onto the same boundary qubit, at every level! This shows that an infinite-dimensional Hamiltonian on a HaPPY-like code will in some cases also need to be regulated at a certain scale, just like in regular AdS/CFT.

\begin{figure}[H]
\vspace{5mm}
\centering
\begin{tikzpicture}
\draw (-5,1)--(-4,0);
\draw (5,1)--(4,0);
\draw (-8,0)--(8,0);
\draw (-8,-2.5)--(8,-2.5);
\draw (-8,-4.5)--(8,-4.5);
\draw (-4,0)--(4,0);
\draw (-6,-2.5)--(-4,0);
\draw (-4,0)--(-2,-2.5);
\draw (2,-2.5)--(4,0);
\draw (4,0)--(6,-2.5);
\draw (-6,-2.5)--(-4,0);
\draw (-4,0)--(-2,-2.5);
\draw (-6,-2.5)--(-6-0.9,-4.5);
\draw (-6,-2.5)--(-6+0.5,-4.5);
\draw (-2+0.9,-4.5)--(-2,-2.5);
\draw (-2-0.5,-4.5)--(-2,-2.5);
\draw (-4,-2.5)--(-4+0.9,-4.5);
\draw (-4,-2.5)--(-4,-4.5);
\draw (-4,-2.5)--(-4-0.9,-4.5);
\draw (6,-2.5)--(6+0.9,-4.5);
\draw (6,-2.5)--(6-0.5,-4.5);
\draw (2+0.5,-4.5)--(2,-2.5);
\draw (2-0.9,-4.5)--(2,-2.5);
\draw (4,-2.5)--(4+0.9,-4.5);
\draw (4,-2.5)--(4,-4.5);
\draw (4,-2.5)--(4-0.9,-4.5);
\draw (-6-0.9-0.4,-6)--(-6-0.9,-4.5);
\draw (-6-0.9+0.2,-6)--(-6-0.9,-4.5);
\draw (-6-0.2-0.3,-6)--(-6-0.2,-4.5);
\draw (-6-0.2,-6)--(-6-0.2,-4.5);
\draw (-6-0.2+0.3,-6)--(-6-0.2,-4.5);
\draw (-6+0.5-0.2,-6)--(-6+0.5,-4.5);
\draw (-6+0.5+0.2,-6)--(-6+0.5,-4.5);
\draw (-4-0.9-0.2,-6)--(-4-0.9,-4.5);
\draw (-4-0.9+0.1,-6)--(-4-0.9,-4.5);
\draw (-4-0.65,-6)--(-4-0.45,-4.5);
\draw (-4-0.45,-6)--(-4-0.45,-4.5);
\draw (-4-0.25,-6)--(-4-0.45,-4.5);
\draw (-4-0.1,-6)--(-4,-4.5);
\draw (-4+0.1,-6)--(-4,-4.5);
\draw (-4+0.65,-6)--(-4+0.45,-4.5);
\draw (-4+0.45,-6)--(-4+0.45,-4.5);
\draw (-4+0.25,-6)--(-4+0.45,-4.5);
\draw (-4+0.9-0.1,-6)--(-4+0.9,-4.5);
\draw (-4+0.9+0.2,-6)--(-4+0.9,-4.5);
\draw (-2-0.5-0.2,-6)--(-2-0.5,-4.5);
\draw (-2-0.5+0.2,-6)--(-2-0.5,-4.5);
\draw (-2+0.2-0.3,-6)--(-2+0.2,-4.5);
\draw (-2+0.2,-6)--(-2+0.2,-4.5);
\draw (-2+0.2+0.3,-6)--(-2+0.2,-4.5);
\draw (-2+0.9-0.2,-6)--(-2+0.9,-4.5);
\draw (-2+0.9+0.4,-6)--(-2+0.9,-4.5);
\draw (8-6-0.9-0.4,-6)--(8-6-0.9,-4.5);
\draw (8-6-0.9+0.2,-6)--(8-6-0.9,-4.5);
\draw (8-6-0.2-0.3,-6)--(8-6-0.2,-4.5);
\draw (8-6-0.2,-6)--(8-6-0.2,-4.5);
\draw (8-6-0.2+0.3,-6)--(8-6-0.2,-4.5);
\draw (8-6+0.5-0.2,-6)--(8-6+0.5,-4.5);
\draw (8-6+0.5+0.2,-6)--(8-6+0.5,-4.5);
\draw (8-4-0.9-0.2,-6)--(8-4-0.9,-4.5);
\draw (8-4-0.9+0.1,-6)--(8-4-0.9,-4.5);
\draw (8-4-0.65,-6)--(8-4-0.45,-4.5);
\draw (8-4-0.45,-6)--(8-4-0.45,-4.5);
\draw (8-4-0.25,-6)--(8-4-0.45,-4.5);
\draw (8-4-0.1,-6)--(8-4,-4.5);
\draw (8-4+0.1,-6)--(8-4,-4.5);
\draw (8-4+0.65,-6)--(8-4+0.45,-4.5);
\draw (8-4+0.45,-6)--(8-4+0.45,-4.5);
\draw (8-4+0.25,-6)--(8-4+0.45,-4.5);
\draw (8-4+0.9-0.1,-6)--(8-4+0.9,-4.5);
\draw (8-4+0.9+0.2,-6)--(8-4+0.9,-4.5);
\draw (8-2-0.5-0.2,-6)--(8-2-0.5,-4.5);
\draw (8-2-0.5+0.2,-6)--(8-2-0.5,-4.5);
\draw (8-2+0.2-0.3,-6)--(8-2+0.2,-4.5);
\draw (8-2+0.2,-6)--(8-2+0.2,-4.5);
\draw (8-2+0.2+0.3,-6)--(8-2+0.2,-4.5);
\draw (8-2+0.9-0.2,-6)--(8-2+0.9,-4.5);
\draw (8-2+0.9+0.4,-6)--(8-2+0.9,-4.5);
\draw (-8,-6)--(8,-6);
\draw[draw,line width=1mm,color=red] (-4,0)--(4,0);
\draw[draw,line width=1mm,color=red] (-4,0)--(-6,-2.5);
\draw[draw,line width=1mm,color=red] (-4,0)--(-2,-2.5);
\draw[draw,line width=1mm,color=red] (4,0)--(2,-2.5);
\draw[draw,line width=1mm,color=red] (4,0)--(6,-2.5);
\draw[draw,line width=1mm,color=red] (-6,-2.5)--(6,-2.5);
\draw[draw,line width=1mm,color=red] (-6+0.5,-4.5)--(-5.3,-6);
\draw[draw,line width=1mm,color=red] (-6+0.5,-4.5)--(-5.7,-6);
\draw[draw,line width=1mm,color=red] (-6+0.5+0.6,-4.5)--(-5.4+0.6,-6);
\draw[draw,line width=1mm,color=red] (-6+0.5+0.6,-4.5)--(-5.7+0.6,-6);
\draw[draw,line width=1mm,color=red] (-6+0.5-0.2,-6)--(-5.4+0.6,-6);
\draw[draw,line width=1mm,color=red] (-6+0.5,-4.5)--(-4-0.9,-4.5);
\draw[draw,line width=1mm] (1.5,1)--(-1.5,-1);
\draw[draw,line width=1mm] (-2.5,-1.5)--(-4,-3);
\draw[draw,line width=1mm] (-4.5,-3.5)--(-5.2,-5);
\draw[draw,line width=1mm] (-5.2,-5.5)--(-5.4,-6.5);
\draw[draw,line width=1mm] (-4.5,-3.5)--(-5.2,-5);
\end{tikzpicture}
\vspace{10mm}

\begin{tikzpicture}[color=red,scale=0.4]
\draw (-4,0)--(-4,1);
\draw (4,0)--(4,1);
\draw (-8,0)--(8,0);
\draw (-8,-2.5)--(-6,-2.5)--(-4,-2.5)--(-2,-2.5)--(2,-2.5)--(4,-2.5)--(6,-2.5)--(8,-2.5);
\draw (-4,0)--(4,0);
\draw (-6,-2.5)(-4,0);
\draw (-4,0)(-2,-2.5);
\draw (2,-2.5)--(4,0);
\draw (4,0)--(6,-2.5);
\draw (-6,-2.5)--(-4,0);
\draw (-4,0)--(-2,-2.5);
\draw (-6,-2.5)--(-6-0.9,-4.5);
\draw (-6,-2.5)--(-6+0.9,-4.5);
\draw (-2+0.9,-4.5)--(-2,-2.5);
\draw (-2-0.9,-4.5)--(-2,-2.5);
\draw (-4,-2.5)--(-4+0.9,-4.5);
\draw (-4,-2.5)--(-4,-4.5);
\draw (-4,-2.5)--(-4-0.9,-4.5);
\draw (6,-2.5)--(6+0.9,-4.5);
\draw (6,-2.5)--(6-0.9,-4.5);
\draw (2+0.9,-4.5)--(2,-2.5);
\draw (2-0.9,-4.5)--(2,-2.5);
\draw (4,-2.5)--(4+0.9,-4.5);
\draw (4,-2.5)--(4,-4.5);
\draw (4,-2.5)--(4-0.9,-4.5);
\node[draw=none,label={[label distance=0.5mm]south:\tiny YY...}] (C) at (-4-1,-4) {};
\node[draw=none,label={[label distance=0.5mm]south:\tiny YXXY...}] (C) at (-4-1,-5) {};
\node[draw=none,label={[label distance=0.5mm]south:\tiny YXXY...}] (C) at (-4-1,-6) {};
\node[draw=none,label={[label distance=0.5mm]south:\tiny ...}] (C) at (-4-1,-7) {};
\end{tikzpicture}
\vspace{5mm}
\caption{All the red trapezes are mapped onto a similar qubit, at every level. The dashed line, which shows where the bulk is centered at each step, is shifted so that the trapezes in red pile up. As shown here, this results in an infinite number of nontrivial Pauli actions accumulating on the same qubits, which makes the energy infinite when the tensor network grows. Each string of boundary Paulis on the bottom of the figure represents the action of one red trapeze at a fixed level, and all strings act nontrivially on the same boundary qubit.}
\label{fig:TrpezesToQubit}
\end{figure}
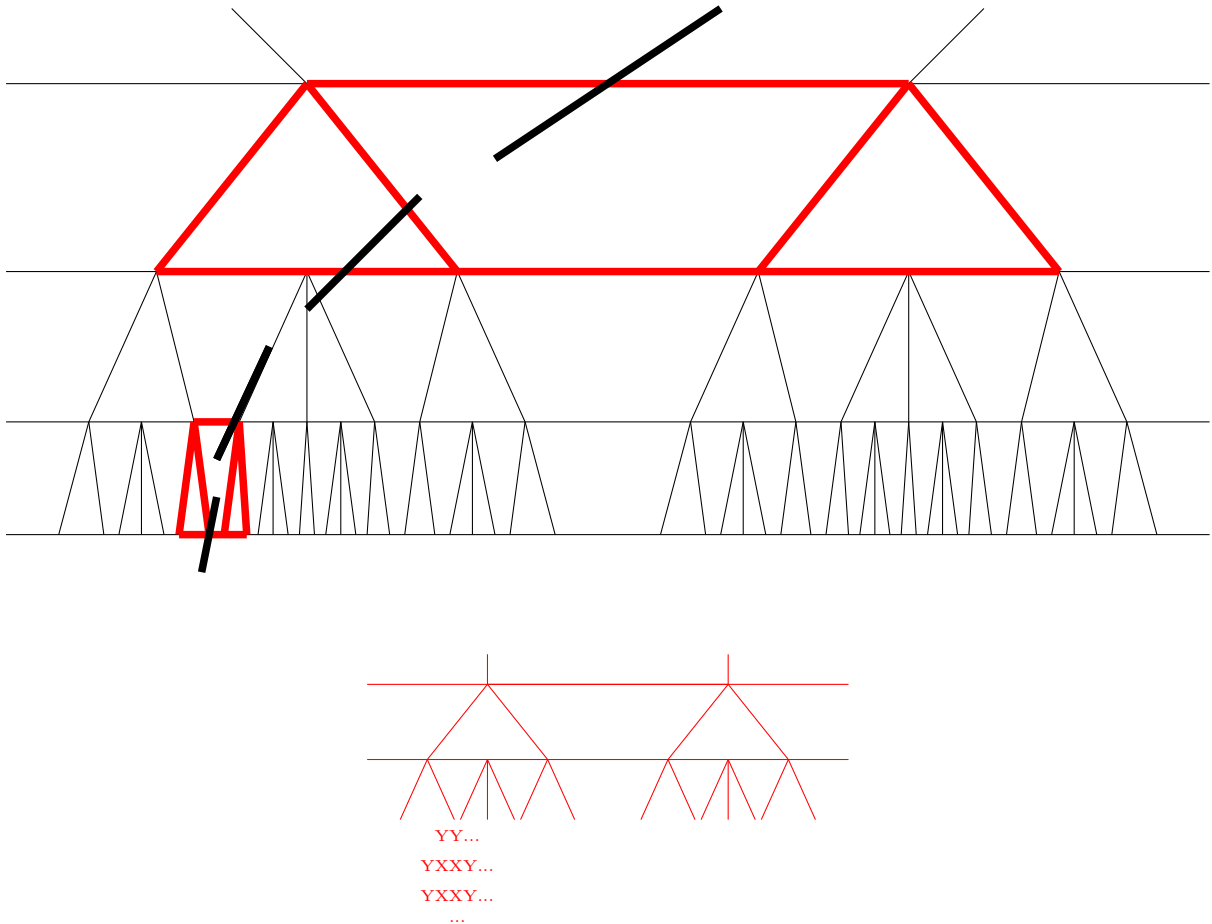

As a summary, in the infinite-dimensional setting, the HaPPY code seems to fail to send the trapeze Hamiltonian, which describes a nontrivial physical theory, to a CFT-like boundary Hamiltonian. Whilst the shifting and unfolding procedure shows that a scale-invariant pattern can be found, long range entanglement is going to be lost on the boundary. It is a consequence of the uberholographic structure of the bulk-to-boundary mapping, which creates too much disentanglement in the case of the trapeze Hamiltonian to have enough boundary correlations. With different bulk structures than the trapeze entanglement wedge, one can get a bit more correlations, on infinite, yet fractal and rare, subsets of the boundary. One can also, in some cases, send an infinite number of trapezes to the same boundary qubits, which shows that the HaPPY code, in spite of its disentangling structure, is not immune to regulation either.

On the brighter side, one could see this result as a success of the quantum error correction perspective on quantum gravity. Indeed, as we show in \cite{MonicaElliott2}, the infinite-dimensional codes we describe still have a Ryu--Takayanagi-like behavior, at least in the reverse engineering context, as they can preserve relative entropies between the bulk and the boundary, according to the Jafferis-Lewkowycz-Maldacena-Suh prescription. It is therefore interesting to notice that within the framework of quantum error correction, the boundary theory only barely needs to be entangled to give rise to a physically nontrivial theory in the bulk. This result could open a path to more models which could describe gravitational properties without the entanglement structure of a CFT.

\section{Discussion}
\label{sec:conclusion}

In this paper, we showed that the infinite-dimensional HaPPY code fails to send a physical theory with long-range entanglement in the bulk to a CFT-resembling theory on the boundary. In the case of the trapeze Hamiltonian, we showed that if we just remove a very small number of trapeze operators in the bulk, which shouldn't change the qualitative bulk behavior of the theory, the mapped operator on the infinite boundary is scale invariant, but doesn't exhibit any uniform algebraic decay behavior, as it only couples finite numbers of boundary qubits together. This shows that the HaPPY code has its limits as a toy model of AdS/CFT. It would be nice to generalize our result to a more important class of interacting theories in the bulk, to see how much the HaPPY code fails to map dynamics.

In fact, it is interesting from a physical viewpoint that it is possible to encode a bulk theory with interaction and long-range decay into independent finite-dimensional interaction terms. Since, as we show in \cite{MonicaElliott2}, the infinite-dimensional HaPPY code still inherits nice entropy properties from its finite-dimensional counterpart, it is also interesting to see that it is a different toy model of the semiclassical regime of quantum gravity than AdS/CFT. Another interesting research direction would therefore be to look for other infinite-dimensional quantum error correcting codes which simulate bulk physics well in terms of entropy, maybe even outside of an explicit bulk to boundary setting.

Note that in this paper, we have not touched upon the entropic properties of the infinite-dimensional HaPPY code. This important question is extensively treated in our companion paper \cite{MonicaElliott2}, in a more general context which works for any model based on operator pushing. Interestingly, dynamics still play a key role, as the existence of thermal states on the boundary, which is directly related to the Hamiltonian, is the cornerstone of our construction of bulk and boundary Hilbert spaces.

The HaPPY code has a direct implication to AdS$_2$/CFT$_1$. A CFT$_1$ is a one-dimensional theory with a global $SL(2,\mathbb{R})$ symmetry, which is the isometry group of the Poincare disk. We must find a boundary state (defined by a state of the bulk qubits and the HaPPY tensor network) for which boundary correlators of primary operators in the continuum limit satisfy the Ward identities of $SL(2,\mathbb{R})$. The correlators are then interpreted as correlators in an $SL(2,\mathbb{R})$-invariant vacuum. The AdS$_2$/CFT$_1$ interpretation of the HaPPY code is that using the HaPPY code to compute boundary correlators is equivalent to performing a discretized 
 path integral on the Poincare disk. The CFT$_1$ correlators must be thought of as Euclidean time correlators because Euclidean time is periodic. The bulk indices allow one to insert bulk operators in the Euclidean path integral.

However, one cannot really interpret the HaPPY code in terms of entanglement wedge reconstruction in such a setup, as an entanglement wedge lies in space but not in time. Henceforth, this AdS$_2$/CFT$_1$ model should be viewed as a time slice of the higher-dimensional AdS$_3$/CFT$_2$, where the former Euclidean time becomes an actual space direction. It is then interesting to note that a Hamiltonian on the boundary theory defines a preferred time direction for the dynamics of the system, and can subsequently evolve the bulk hypersurface via subregion duality. It is not clear whether this time direction is the same as the time direction of, for instance, Lorentz boosts or geometric transformations of the boundary theory. However, if the system is in a thermal state, the time evolution corresponds to modular time, and results from Bisognano, Wichmann, Hislop and Longo summarized in \cite{Haag} allow one to link modular time and geometric transformations like Lorentz boosts, for specific subsets of the boundary such as Rindler wedges and diamonds. In this sense, Hamiltonian evolution can correspond to actual relativistic transformations of the tensor network underlying the HaPPY code. We hope to return to this point in future work and draw a link with the result of Jafferis-Lewkowycz-Maldacena-Suh \cite{Jafferis:2015del}.

\section*{Acknowledgments}
The authors are grateful to Vincent Chen, Adrian Franco-Rubio, David Kolchmeyer, and Matilde Marcolli for discussions and Craig Lawrie for helpful comments on this paper. M.J.K. is supported by a Sherman Fairchild Postdoctoral Fellowship. This material is based upon work supported by the U.S. Department of Energy, Office of Science, Office of High Energy Physics, under Award Number DE-SC0011632. 
E.G. is funded by ENS Paris and would like to thank Matilde Marcolli for her guidance and constant support.
Research at Perimeter Institute is supported in part by the Government of Canada through the Department of Innovation, Science and Economic Development Canada and by the Province of Ontario through the Ministry of Colleges and Universities.

\appendix

\section{Short-range entanglement in an entanglement wedge defined by a trapeze}
\label{sec:trapezeEW}
In this appendix, we show that the shifted and unfolded Hamiltonian only creates short-range entanglement near the dashed line if we carefully remove a small number of bulk trapezes.

The idea is to remove a small number of bulk trapezes that may create long-range entanglement and observe its differences in physics. As we have shown in Section \ref{sec:bulktoboundarytrapeze}, the shifted and unfolded Hamiltonian comes from piled up 2-clusters on each side of the dashed line. These 2-clusters are neighbored by 3-clusters, shown in blue on Figure \ref{fig:TrapezeRemoved}. The only trapezes which are below the 3-clusters and couple the right and the left of them are represented in red on Figure \ref{fig:TrapezeRemoved}.

Our proposal is to remove these red trapeze operators from the bulk Hamiltonian. Then we find that it does not change the physics since they are in a small proportion. Indeed, under each blue 3-cluster, there are only two such trapezes every other line. Note that these trapezes appear in a perfectly regular way when the bulk level increases. At sufficiently long distance in the renormalization group flow, as it is defined in Section \ref{sec:MERA}, it is not possible to distinguish that these trapezes have been removed. This is expected from the perspective of AdS/CFT from tensor networks as it requires an understanding via RG flow. Thus, there will still be long-range entanglement in the bulk and one expects the bulk mean field theory analysis we performed to still be meaningful in this new context.

Now, take a qubit on an arbitrary truncated boundary level. We will find an upper bound on the number of qubits it can be correlated to by the shifted and unfolded Hamiltonian, which is independent of the level, and depends only on its distance to the dashed line. In order to do so, we will use Figure \ref{fig:CouplingFail} as a graphical support for our reasoning.

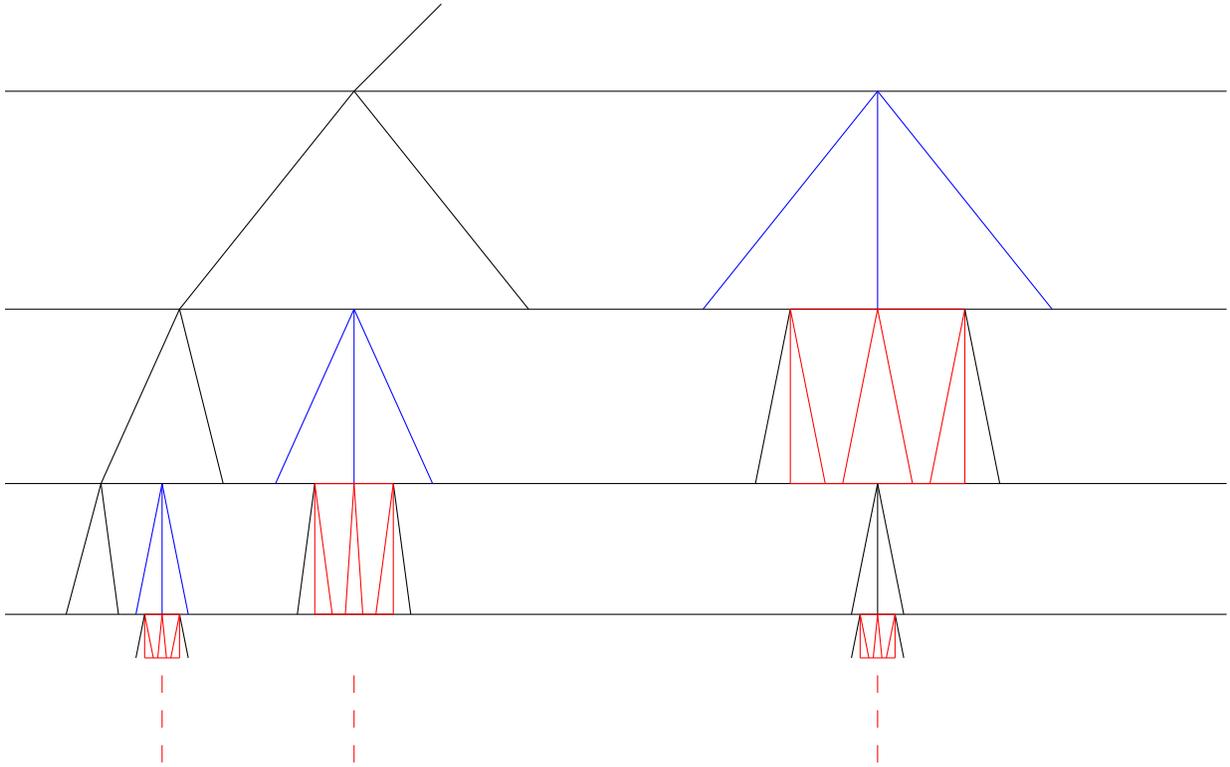
\begin{figure}[H]
\centering
\begin{tikzpicture}[scale=1.16]
\draw (0,0)--(14,0);
\draw (0,-2.5)--(14,-2.5);
\draw (0,-4.5)--(14,-4.5);
\draw (0,-6)--(14,-6);
\draw (2,-2.5)--(4,0);
\draw (4,0)--(6,-2.5);
\draw (2+0.5,-4.5)--(2,-2.5);
\draw (2-0.9,-4.5)--(2,-2.5);
\draw[color=blue] (4,-2.5)--(4+0.9,-4.5);
\draw[color=blue] (4,-2.5)--(4,-4.5);
\draw[color=blue] (4,-2.5)--(4-0.9,-4.5);
\draw (8-6-0.9-0.4,-6)--(8-6-0.9,-4.5);
\draw (8-6-0.9+0.2,-6)--(8-6-0.9,-4.5);
\draw[color=blue] (8-6-0.2,-6)--(8-6-0.2,-4.5);
\draw[color=blue] (8-6-0.2+0.3,-6)--(8-6-0.2,-4.5);
\draw[color=blue] (8-6-0.2-0.3,-6)--(8-6-0.2,-4.5);
\draw[color=red] (8-4+0.1,-6)--(8-4,-4.5);
\draw[color=red] (8-4-0.1,-6)--(8-4,-4.5);
\draw (8-4-0.65,-6)--(8-4-0.45,-4.5);
\draw[color=red] (8-4-0.45,-6)--(8-4-0.45,-4.5);
\draw[color=red] (8-4-0.25,-6)--(8-4-0.45,-4.5);

\draw (8-4+0.65,-6)--(8-4+0.45,-4.5);
\draw[color=red] (8-4+0.45,-6)--(8-4+0.45,-4.5);
\draw[color=red] (8-4+0.25,-6)--(8-4+0.45,-4.5);
\draw[color=blue] (10,0)--(8,-2.5);
\draw[color=blue] (10,0)--(10,-2.5);
\draw[color=blue] (10,0)--(12,-2.5);
\draw[color=red] (10,-2.5)--(10-0.4,-4.5);
\draw[color=red] (10,-2.5)--(10+0.4,-4.5);
\draw[color=red] (11,-2.5)--(11-0.4,-4.5);
\draw[color=red] (11,-2.5)--(11,-4.5);
\draw (11,-2.5)--(11.4,-4.5);
\draw (9,-2.5)--(9-0.4,-4.5);
\draw[color=red] (9,-2.5)--(9,-4.5);
\draw[color=red] (9,-2.5)--(9+0.4,-4.5);
\draw (10,-4.5)--(10,-6);
\draw (10,-4.5)--(10-0.3,-6);
\draw (10,-4.5)--(10+0.3,-6);
\draw (10.2,-6)--(10.3,-6.5);
\draw[color=red] (10.2,-6)--(10.1,-6.5);
\draw[color=red] (10.2,-6)--(10.2,-6.5);
\draw (9.8,-6)--(9.7,-6.5);
\draw[color=red] (9.8,-6)--(9.8,-6.5);
\draw[color=red] (9.8,-6)--(9.9,-6.5);
\draw[color=red] (10,-6)--(9.95,-6.5);
\draw[color=red] (10,-6)--(10.05,-6.5);
\draw[color=red] (1.9,-6.5)--(2,-6);
\draw (2.1,-6.5)--(2,-6);
\draw[color=red] (2,-6.5)--(2,-6);
\draw[color=red] (1.8,-6)--(1.85,-6.5);
\draw[color=red] (1.8,-6)--(1.75,-6.5);
\draw[color=red] (1.6,-6)--(1.7,-6.5);
\draw[color=red] (1.6,-6)--(1.6,-6.5);
\draw (1.6,-6)--(1.5,-6.5);
\draw[color=red] (9,-4.5)--(11,-4.5);
\draw[color=red] (9,-2.5)--(11,-2.5);
\draw[color=red] (3.55,-4.5)--(4.45,-4.5);
\draw[color=red] (3.55,-6)--(4.45,-6);
\draw[color=red] (1.6,-6)--(2,-6);
\draw[color=red] (1.6,-6.5)--(2,-6.5);
\draw[color=red] (9.8,-6)--(10.2,-6);
\draw[color=red] (9.8,-6.5)--(10.2,-6.5);
\draw[color=red] (10,-6.7)--(10,-6.9);
\draw[color=red] (10,-7.1)--(10,-7.3);
\draw[color=red] (10,-7.7)--(10,-7.5);
\draw[color=red] (1.8,-6.7)--(1.8,-6.9);
\draw[color=red] (1.8,-7.1)--(1.8,-7.3);
\draw[color=red] (1.8,-7.7)--(1.8,-7.5);
\draw[color=red] (4,-6.7)--(4,-6.9);
\draw[color=red] (4,-7.1)--(4,-7.3);
\draw[color=red] (4,-7.7)--(4,-7.5);
\draw (4,0)--(5,1);
\end{tikzpicture}
\vspace{3mm}
\caption{Pairs of red trapezes are removed (for clarity, the irrelevant parts of the bulk are not drawn). The red trapezes correspond to the trapezes that couple the left and the right of the 3-clusters drawn in blue on the boundary. These 3-clusters are the ones immediately next to the 2-clusters neighboring the dashed line. Under each 3-cluster, there is only two such trapezes every other line, which corresponds to a small proportion of the bulk.}
\label{fig:TrapezeRemoved}
\end{figure}

Take a boundary qubit. Because of the structure of the entanglement wedge, there exists one of the blue bulk 3-clusters of Figure \ref{fig:TrapezeRemoved} whose entanglement wedge is on its right. Now, as we have carefully taken away the red trapeze operators of Figure \ref{fig:TrapezeRemoved} from the bulk Hamiltonian, it means that no lower trapeze than the trapeze in blue on Figure \ref{fig:CouplingFail} can be used to connect our boundary qubit to the right of the red line. Actually, it is the only one that can: higher level trapezes such as the one in green have a support which is too much on the right. Now, say our blue trapeze does couple our boundary qubit to the right of the first red line. Then, the qubits it is coupled to will lie in the entanglement wedge drawn as a blue curve. By a similar argument, the only trapeze which can couple this entanglement wedge to the right of the second red line is the green trapeze. However, the entanglement wedge drawn as the inside of the green curve is clearly disjoint from the inside of the blue curve. Therefore, the second red line gives an upper bound on the boundary region which can be coupled to our initial boundary qubit.

\begin{figure}[H]
\centering
\vspace{3mm}

\begin{tikzpicture}[scale=2]
\draw (0,0)--(4,0);
\draw[color=green,line width=1mm] (4,0)--(7,0);
\draw (7,0)--(8,0);
\draw (0,-2.5)--(8,-2.5);
\draw[color=blue,line width=1mm] (2,-2.53)--(4,-2.53);
\draw[color=blue,line width=1mm] (3.97,-2.5)--(3.97,-4.5);
\draw[color=green,line width=1mm] (2,-2.47)--(7,-2.47);
\draw (0,-4.5)--(8,-4.5);
\draw[color=green,line width=1mm] (2,-2.5)--(4,0);
\draw[color=green,line width=1mm] (4,0)--(5,-2.5);
\draw[color=blue,line width=1mm] (2+0.5,-4.5)--(2,-2.5);
\draw[color=blue,line width=1mm] (2-0.9,-4.5)--(2,-2.5);
\draw (4,-2.5)--(4+0.9,-4.5);
\draw (4,-2.5)--(4,-4.5);
\draw[color=blue,line width=1mm] (4,-2.5)--(4-0.9,-4.5);
\draw[color=blue,line width=1mm] (2-0.9,-4.5)--(4,-4.5);
\draw (8-6-0.9-0.4,-6)--(8-6-0.9,-4.5);
\draw (8-6-0.9+0.2,-6)--(8-6-0.9,-4.5);
\draw (8-6-0.2,-6)--(8-6-0.2,-4.5);
\draw (8-6-0.2+0.3,-6)--(8-6-0.2,-4.5);
\draw (8-6-0.2-0.3,-6)--(8-6-0.2,-4.5);
\draw (8-4+0.1+0.03,-6)--(8-4+0.03,-4.5);
\draw (8-4-0.1+0.03,-6)--(8-4,-4.5);
\draw (8-4-0.65,-6)--(8-4-0.45,-4.5);
\draw (8-4-0.45,-6)--(8-4-0.45,-4.5);
\draw (8-4-0.25,-6)--(8-4-0.45,-4.5);
\draw (4+0.9,-4.5)--(4+0.9+0.2,-6);
\draw (4+0.9,-4.5)--(4+0.9-0.1,-6);
\draw (8-4+0.65,-6)--(8-4+0.45,-4.5);
\draw (8-4+0.45,-6)--(8-4+0.45,-4.5);
\draw (8-4+0.25,-6)--(8-4+0.45,-4.5);
\draw (7,0)--(8,-2.5);
\draw[color=green,line width=1mm] (7,0)--(7,-2.5);
\draw[color=green,line width=1mm] (7,0)--(6,-2.5);
\draw (4,0)--(5,1);
\draw[color=red,line width=1mm] (1.8,-4.5)--(1.8,-7);
\draw[color=green,line width=1mm] (2+0.5,-4.5) to[out=20,in=160] (4-0.9,-4.5);
\draw[color=green,line width=1mm] (2+0.3,-6) to[out=-90,in=-120] (2+0.5,-4.5);
\draw[color=green,line width=1mm] (4-0.9,-4.5) to[out=-60,in=-90] (4-0.7,-6);
\draw[color=blue,line width=1mm] (3.75,-6) to[out=20,in=160] (3.9,-6);
\draw[color=blue,line width=1mm] (3.7,-7) to[out=-90,in=-120] (3.75,-6);
\draw[color=blue,line width=1mm] (3.9,-6) to[out=-60,in=-90] (3.95,-7);
\draw[color=red,line width=1mm] (4.03,-2.5)--(4.03,-7);
\end{tikzpicture}
\vspace{2mm}

\caption{Why a boundary qubit cannot be coupled to an infinite number of other ones. A boundary qubit on the left of the first red line can only be coupled its right by the blue trapeze. The qubits it is coupled to can only be coupled to the right of the second red line by the trapeze in green. Now, the entanglement wedges in green and in blue, that we defined for a given trapeze operator as the portion of the bulk for which the bulk-to-boundary map is nontrivial, are disjoint: in fact, this coupling is not possible. Therefore the boundary qubit cannot be coupled to anything past the second red line.}
\vspace{1mm}

\label{fig:CouplingFail}
\end{figure}
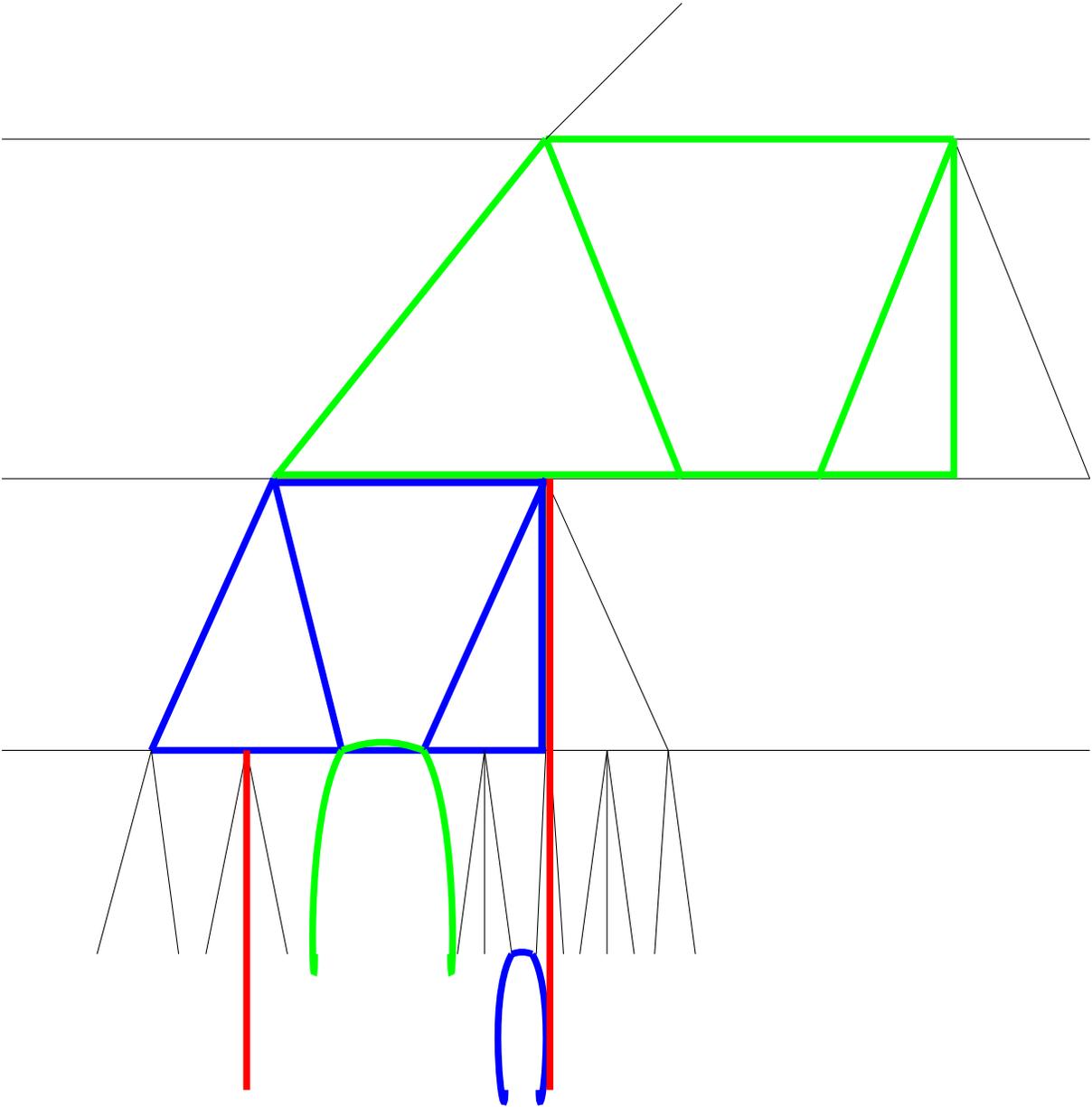

In short, we have proven that a boundary qubit on the left of a red line of removed trapezes cannot be entangled to anything on the right of the next red line. This gives an upper bound on the number of qubits it can be entangled to. Moreover, this upper bound is level independent, in the sense that the position of the red line just depends on the distance of the boundary qubit to the dashed line (which would be on the extreme left in this picture). This achieves to prove that all long-range entanglement is lost during the shifting and unfolding procedure.

The proof of this appendix will turn out to be useful to show the existence of boundary KMS states in the setting of our companion paper \cite{MonicaElliott2}, hence we shall end with a brief remark regarding this slightly different setting. The interested reader is referred to \cite{MonicaElliott2} for more details. Suppose that instead of pushing the bulk to the boundary from successive levels, one reconstructs a trapeze entanglement wedge from a fixed string of qubits on the boundary, through a sort of reverse engineering procedure. As we saw, the shifted and unfolded Hamiltonian can be thought of as a superposition of operators on this fixed infinite string of qubits. In such a context, what this proof tells us is that the shifted and unfolded Hamiltonian only couples finite subsets of that infinite string together. This gives another interpretation of what short-range correlations means.

\end{document}